\def\theequation{\thesection.\@arabic\c@equation}
\renewcommand{\theequation}{\thesection.\arabic{equation}}
\newtheorem{lemma}{Lemma}[section]
\newtheorem{proposition}{Proposition}[section]
\newtheorem{theorem}{Theorem}[section]
\newcommand{\R}{{\mathbb R}}
\newcommand{\ba}{\begin{array}}
\newcommand{\ea}{\end{array}}
\newcommand{\bed}{\begin{aligned}}
\newcommand{\eed}{\end{aligned}}
\newcommand{\la}{\lambda}
\newcommand{\s}{(-\Delta)^s}
\newcommand{\sh}{(-\Delta)^{\frac12}}
\newcommand{\e}{\varepsilon}
\newcommand{\hu}{\hat{U}}
\newcommand{\hv}{\hat{V}}
\numberwithin{equation}{section}
\title[fractional Gierer-Meinhardt system - supercritical case]{Spike Solutions to the Supercritical Fractional Gierer-Meinhardt System}
\author[D. Gomez]{Daniel Gomez}
\address{\noindent Daniel Gomez, ~Department of Mathematics, University of Pennsylvania, Philadelphia, PA USA 19104-6395}
\email{d1gomez@sas.upenn.edu}
\author[M. Medeiros]{Markus De Medeiros}
\address{\noindent  Markus Medeiros, ~Department of Mathematics, The University of British Columbia, Vancouver, BC Canada V6T 1Z2}
\email{markusdemedeiros@outlook.com	}
\author[J. Wei]{Jun-cheng Wei}
\address{\noindent Jun-cheng Wei, ~Department of Mathematics, The University of British Columbia, Vancouver, BC Canada V6T 1Z2}
\email{jcwei@math.ubc.ca}
\author[W. Yang]{ Wen Yang}
\address{\noindent  Wen Yang, Wuhan Institute of Physics and Mathematics, Innovation Academy for Precision Measurement Science and Technology, Chinese Academy of Sciences, P.O. Box 71010, Wuhan 430071, P. R. China }
\email{math.yangwen@gmail.com}
\begin{document}

\begin{abstract}
Localized solutions are known to arise in a variety of singularly perturbed reaction-diffusion systems. The Gierer-Meinhardt (GM) system is one such example and has been the focus of numerous rigorous and formal studies. A more recent focus has been the study of localized solutions in systems exhibiting anomalous diffusion, particularly with L\'evy flights. In this paper we investigate localized solutions to a one-dimensional fractional GM system for which the inhibitor's fractional order is supercritical. Using the method of matched asymptotic expansions we reduce the construction of multi-spike solutions to solving a nonlinear algebraic system. The linear stability of the resulting multi-spike solutions is then addressed by studying a globally coupled eigenvalue problem. In addition to these formal results we also rigorously establish the existence and stability of ground-state solutions when the inhibitor's fractional order is nearly critical. The fractional Green's function, for which we present a rapidly converging series expansion, is prominently featured throughout both the formal and rigorous analysis in this paper. Moreover, we emphasize that the striking similarities between the one-dimensional supercritical GM system and the classical three-dimensional GM system can be attributed to the leading order singular behaviour of the fractional Green's function. 
\medskip

\noindent {\bf Keywords}:{ Gierer-Meinhardt system, fractional Laplacian, L\'evy flights, localized solutions, singular perturbation.}

\end{abstract}

\maketitle

\section{Introduction}\label{sec:intro}

Reaction diffusion systems have consistently been at the forefront of pattern formation research since Alan Turing's seminal paper in 1952 \cite{turing_1952} in which he demonstrated that sufficiently large differences in the diffusivities of reacting agents can lead to the formation of spatial patterns. By specifying reaction-diffusion systems either phenomenologically or from first principles, studies have used linear stability analysis to explore pattern formation in complex systems with applications to a variety of biological phenomena \cite{murray_2003}. While these studies have traditionally assumed that individual agents undergo Brownian motion in which the mean squared displacement (MSD) is a linear function of the elapsed time, a recent growing body of literature has considered pattern formation in the context of anomalous diffusion in which there is alternative nonlinear relationships between the MSD and the elapsed time \cite{henry_2005,golovin_2008,zhang_2014,khudhair_2021}. Such anomalous diffusion may be better suited for describing the spatial distribution of agents in complex biological environments such as those found within individual cells \cite{bressloff_2014,oliveira_2019}.

Of particular importance, and relevance to the present paper, is the case of anomalous \textit{superdiffusion}  with L\'evy flights for which a heavy-tailed step-length distribution leads to an unbounded MSD. In this case the resulting \textit{fractional} reaction-diffusion system features the fractional Laplacian which for one-dimensional problems is given by
\begin{equation}\label{eq:fractional-laplacian}
	(-\Delta)^s \varphi(x) \equiv C_{s} \int_{-\infty}^{\infty} \frac{\varphi(x)-\varphi(\bar{x})}{|x-\bar{x}|^{1+2s}}d\bar{x},\qquad C_s \equiv \frac{2^{2s}s\Gamma(s+2^{-1})}{\sqrt{\pi}\Gamma(1-s)}.
\end{equation}
where $0<s<1$ and $\Gamma(z)$ is the Gamma function. A growing number of studies have considered such fractional-reaction diffusion systems with different reaction kinetics and using linear stability analysis have demonstrated that the introduction of anomalous diffusion can have a pronounced effect on pattern formation\cite{golovin_2008,khudhair_2021}. Furthermore by considering parameters near the Turing stability threshold the authors in \cite{khudhair_2021} used a weakly nonlinear analysis to investigate the resulting \textit{near equilibrium} patterns that emerge after the onset of linear instabilities. An increasing number of studies have also analysed \textit{far from equilibrium} solutions arising in the singularly perturbed limit where one of the diffusivities is asymptotically small \cite{nec_2012_levi,wei_2019_multi_bump,gomez_2022,medeiros_2022} and it is this latter thread of inquiry which we continue in this paper.

First proposed by Gierer and Meinhardt in 1972 \cite{gierer_1972}, the Gierer-Meinhardt (GM) system is a canonical reaction diffusion system that addresses the consequences of \textit{short range activation} and \textit{long range inhibition} on pattern formation. In this paper we will be interested in the fractional GM system which is given by
\begin{subequations}\label{eq:frac-gm-full-system}
	\begin{align}[left=\empheqlbrace]
		& u_t + \varepsilon^{2s_1} (-\Delta)^{s_1}u + u - v^{-1}u^2 = 0,&  -1<x<1, \label{eq:frac-gm-full-system-u}\\
		& \tau v_t + D(-\Delta)^{s_2}v + v - u^2 = 0,& -1<x<1, \label{eq:frac-gm-full-system-v}
	\end{align}
\end{subequations}
where $u(x,t)$ and $v(x,t)$ correspond to the activator and inhibitor concentrations respectively, 	$\tau>0$, $D>0$, and $0<\varepsilon\ll 1$. In this paper we impose periodic boundary conditions,
\begin{equation}
	u(x+2,t)=u(x,t),\qquad v(x+2,t)=v(x,t),
\end{equation}
which allows us to avoid the technical difficulties in assigning Dirichlet or Neumann boundary conditions \cite{lischke_2020} and with which \eqref{eq:fractional-laplacian} simplifies to
\begin{equation}\label{eq:fractional-laplacian-periodic}
	\begin{split}
		& (-\Delta)^s \varphi(x) \equiv C_s\int_{-1}^{1} K_s(x-\bar{x})\bigl(\varphi(x)-\varphi(\bar{x})\bigr)d\bar{x},\\
		& K_s(z)\equiv \frac{1}{|z|^{1+2s}} + \sum_{j=1}^\infty\biggl(\frac{1}{|z+2j|^{1+2s}} + \frac{1}{|z-2j|^{1+2s}}\biggr).
	\end{split}
\end{equation}
When $s_1=s_2=1$ in \eqref{eq:frac-gm-full-system} we recover the classical one-dimensional GM system which has been the focus of numerous studies. Specifically, in the classical case both rigorous \cite{wei_2014_book} and formal asymptotic methods \cite{iron_2001,ward_2002_asymmetric} have been used to study the existence and stability of \textit{spike} solutions. The analogous system in two- and three-dimensions have also been studied in \cite{wei_2001_gm_2d_weak,wei_2002_gm_2d_strong,gomez_2021}, while the system posed on a two-dimensional Riemannian manifold was considered in \cite{tse_2010}. In both rigorous and formal approaches the activator's asymptotically small diffusivity leads to a separation of spatial scales which yields a tractable reduction of the full system. More recently spike  solutions have been analysed for $s_1\in[1/2,1)$ and $s_2=1$ \cite{nec_2012_levi}, $s_1=s_2\in[1/2,1)$ \cite{wei_2019_multi_bump}, $s_1\in(1/4,1)$ and $s_2\in(1/2,1)$ \cite{gomez_2022}, and for $s_1=s_2=1/2$ \cite{medeiros_2022}. The case of anomalous \textit{subdiffusion}, in which the time derivative is of fractional order, was also considered in \cite{nec_2012_sub}. The individual treatment of each range of $s_1$ and $s_2$ values stems from differences in the singular behaviour of the outer inhibitor solution in a leading order theory. This is closely related to the limiting behaviour of the Green's function $G(x)$ satisfying
\begin{equation*}
	(-\Delta)^{s_2}G + D^{-1} G_D = \delta(x),\qquad 0<x<1,\qquad G_D(x+2)=G_D(x),
\end{equation*}
as $x\rightarrow 0$. In particular as discussed further in \S \ref{subsec:greens-func-properties} and Appendix \ref{app:greens-function} below, the Green's function remains bounded for \textit{subcritical} values of $s_2\in(1/2,1]$ but has a logarithmic singularity at the \textit{critical} value of $s_2=1/2$. There is a suggestive analogy here with the singular behaviour of the classical free space Green's function in one- and two-dimensions which we elaborate in \S \ref{subsec:greens-func-properties} below.

\begin{figure}[t!]
	\centering 
	\begin{subfigure}{0.475\textwidth}
		\includegraphics[width=\linewidth]{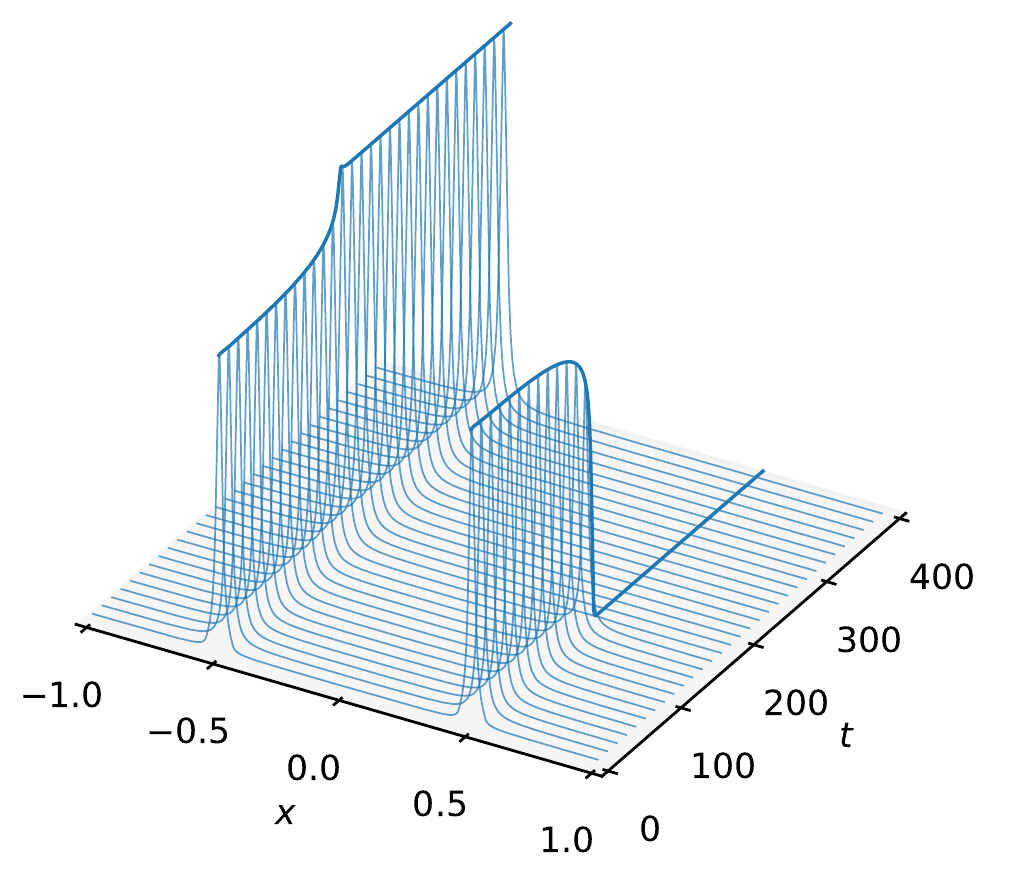}
		\caption{}
		\label{fig:competition-waterfall}
	\end{subfigure}\hfil 
	\begin{subfigure}{0.475\textwidth}
		\includegraphics[width=\linewidth]{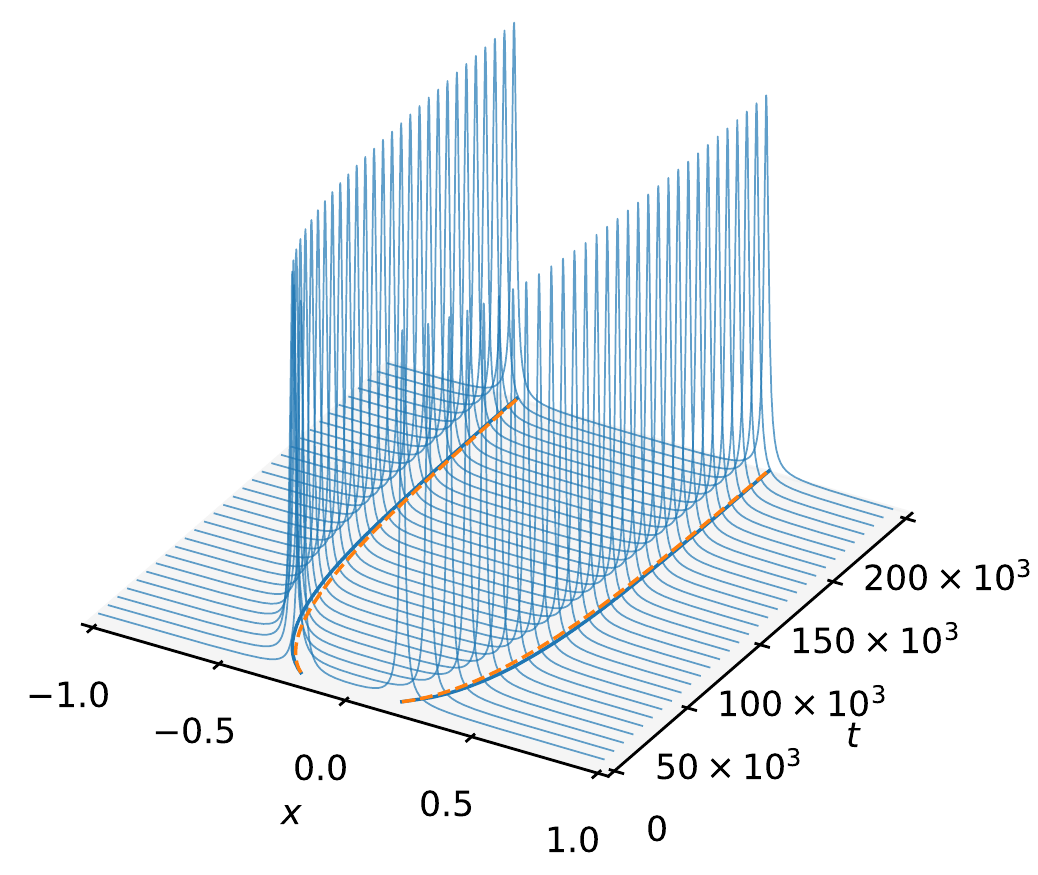}
		\caption{}
		\label{fig:slow-dynamics-waterfall}
	\end{subfigure}
	\caption{(A) Numerically calculated profile of the activator in a symmetric two-spike solution undergoing (A) a competition instability using parameters $s_1=0.5$, $s_2=0.39$, $\varepsilon=0.01$, $\tau=0.1$, and $D=1.095\varepsilon^{2s_2-1}$, and (B) slow dynamics over an $O(\varepsilon^{2s_2-3})$ timescale using parameters $s_1=0.4$, $s_2=0.35$, $\varepsilon=0.01$, $\tau=0.1$, and $D=0.768\varepsilon^{2s_2-1}$. The solid blue and dashed orange curves in the $(x,t)$ plane in (B) indicate the numerically and asymptotically calculated spike locations.}\label{fig:simulation-waterfalls}
\end{figure}

In this paper we use the method of matched asymptotic expansions to study the existence and stability of spike solutions to \eqref{eq:frac-gm-full-system} in the parameter regime
\begin{equation}
	1/4<s_1<1,\qquad 0<s_2<1/2.
\end{equation}
As discussed in \S \ref{subsec:greens-func-properties} and Appendix \ref{app:greens-function}, the Green's function in this regime has an algebraic (and in some cases an additional logarithmic) singularity as $x\rightarrow 0$ and for this reason we refer to the resulting fractional GM system as being \textit{supercritical}. Consequently the asymptotic analysis of spike solutions in this regime is analogous to that found in the classical three dimensional Schnakenberg \cite{tzou_2017_schnakenberg} and Gierer-Meinhard systems \cite{gomez_2021}. Using the  method of matched asymptotic expansions we thus construct multi-spike solutions by deriving an appropriate nonlinear algebraic system (NAS). From the NAS we identify two distinguished parameter regimes: the $D=O(1)$ regime and the $D=O(\varepsilon^{2s_2-1})$ regime. Whereas in the former the NAS admits only \textit{symmetric} solutions, we find that in the latter it admits both symmetric and \textit{asymmetric} solutions. The stability of the resulting multi-spike solutions can then be determined by studying a globally coupled eigenvalue problem (GCEP). From the GCEP we deduce that asymmetric solutions are always linearly unstable while symmetric solutions are susceptible to two types of bifurcations: competition instabilities (see Figure \ref{fig:competition-waterfall} for an example), and Hopf bifurcations. In addition to these bifurcation which occur over an $O(1)$ timescale, otherwise stable multi-spike solutions can also undergo drift motion over an $O(\varepsilon^{2s_2-3})$ timescale (see Figure \ref{fig:slow-dynamics-waterfall} for an example). This paper thus fully characterizes the equilibrium solutions to \eqref{eq:frac-gm-full-system} and their linear stability, while also identifying key parameter regimes for the diffusivity $D$.

The bulk of this paper uses formal asymptotic methods to characterize localized solutions as discussed in the preceding paragraph. While a rigorous justification of these results remains open for the full range of values $0<s_2<1/2$ there are some results that we can rigorously prove when $s_2<1/2$ is close to $s_2=1/2$. Specifically, for such values of $s_2$ we can rigorously prove the existence and stability of \textit{ground state} solutions to the core problem considered in \S\ref{sec:equilibrium}.  Specifically we have the following theorem.
\begin{theorem}
	\label{th1.exist}
	There exists an $\varepsilon_0>0$ such that for each $s\in \left(\frac12(1-\varepsilon_0),\frac12\right)$ the core problem
	\begin{subequations}
	\begin{align}[left=\empheqlbrace]
		& (-\Delta)^\frac12 U+U-V^{-1}U^2=0,\quad (-\Delta)^sV-U^2=0,  & -\infty<x<\infty,  \\
		& U,V>0, & -\infty<x<\infty, \\
		& U,V\to0, & \mbox{as}\quad |x|\to+\infty,
	\end{align}
	\end{subequations}
	admits a solution $(U,V)$ such that
	\begin{equation}
		\lim_{\varepsilon_0\to0}\left|\tau_s^{-1}U(x)-w(x)\right|=0,\qquad  \lim_{\varepsilon_0\rightarrow 0}\left|\tau_s^{-1}V(x)-1\right|=0,
	\end{equation}
	uniformly in compact sets in $x$. Here $U$ is the ground state solution of
	\begin{equation*}
		(-\Delta)^\frac12 w+w-w^2=0,
	\end{equation*}
	and
	\begin{equation*}
		\tau_s=\left(\frac{\Gamma(1-2s)\sin(s\pi)}{\pi}\int_{\mathbb{R}}w^2(x)dx\right)^{-1}.
	\end{equation*}
\end{theorem}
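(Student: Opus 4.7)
The plan is to rescale the unknowns so that the small parameter $1-2s$ enters multiplicatively and the system degenerates smoothly to the scalar half-Laplacian problem for $w$. Setting $U=\tau_s u$ and $V=\tau_s v$, the core system becomes
\begin{equation*}
(-\Delta)^{1/2}u+u-v^{-1}u^2=0,\qquad (-\Delta)^s v=\tau_s u^2\quad\text{on }\mathbb{R}.
\end{equation*}
The second equation is inverted by convolution with the fundamental solution $G_s(x)=\gamma(s)|x|^{2s-1}$ of $(-\Delta)^s$, where $\gamma(s)=\Gamma(\tfrac12-s)/(2^{2s}\sqrt{\pi}\,\Gamma(s))$. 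Combining the duplication and reflection identities for $\Gamma$ yields the clean product formula $\tau_s\gamma(s)=\bigl(\int_{\mathbb{R}}w^2\bigr)^{-1}$, so
\begin{equation*}
v(x)=\frac{1}{\int_{\mathbb{R}}w^2}\int_{\mathbb{R}}\frac{u^2(y)}{|x-y|^{1-2s}}\,dy.
\end{equation*}
The problem thereby reduces to finding a positive decaying $u$ satisfying $F_s(u)=0$ where
\begin{equation*}
F_s(u)(x):=(-\Delta)^{1/2}u+u-\frac{u^2(x)\int_{\mathbb{R}}w^2}{\int_{\mathbb{R}}u^2(y)\,|x-y|^{-(1-2s)}\,dy}.
\end{equation*}

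At $s=1/2$ the kernel collapses to $1$ and $F_{1/2}(u)=(-\Delta)^{1/2}u+u-u^2\int w^2/\int u^2$, so in particular $F_{1/2}(w)=0$. The next step is to compute the Fr\'echet derivative
\begin{equation*}
L\phi:=DF_{1/2}[w]\phi=(-\Delta)^{1/2}\phi+\phi-2w\phi+\frac{2w^2}{\int w^2}\int w\phi,
\end{equation*}
which is a rank-one perturbation of the scalar linearisation $L^{\mathrm{sc}}:=(-\Delta)^{1/2}+1-2w$. Using the known non-degeneracy $\ker L^{\mathrm{sc}}=\mathrm{span}\{w'\}$ together with the identity $L^{\mathrm{sc}}w=-w^2$, I show $\ker L=\mathrm{span}\{w'\}$ by a short algebraic argument: if $L\phi=0$ then $L^{\mathrm{sc}}\phi=-\lambda w^2$ with $\lambda:=2\int w\phi/\int w^2$, so $\phi=\lambda w+\beta w'$; substituting back into the definition of $\lambda$ forces $\lambda=2\lambda$ and hence $\lambda=0$, leaving $\phi\in\mathrm{span}\{w'\}$.

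With non-degeneracy in hand I carry out a standard Lyapunov--Schmidt reduction modulo translations. Working in a weighted space $X$ that captures the algebraic $|x|^{-2}$ decay of $w$ (for instance $X=\{u:(1+x^2)u\in L^\infty(\mathbb{R})\}$), and writing $u=w+\phi$ with $\phi$ orthogonal to $w'$, I invert $L$ on $\{w'\}^{\perp}$, treat $F_s(w+\phi)-L\phi$ as a small nonlinear remainder, and close a contraction-mapping argument to produce a unique small $\phi_s\in X$ with $F_s(w+\phi_s)=0$ for every $s$ sufficiently close to $1/2$ (equivalently, for $\varepsilon_0$ small enough). Setting $U=\tau_s(w+\phi_s)$ and recovering $V$ from the explicit formula above yields the desired pair, positive for $\varepsilon_0$ small. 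Uniform-on-compact-sets convergence $\tau_s^{-1}U\to w$ and $\tau_s^{-1}V\to 1$ then follows from $\|\phi_s\|_X\to 0$, the pointwise limit $|x-y|^{-(1-2s)}\to 1$, and dominated convergence.

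The hard part will be making the convergence $F_s\to F_{1/2}$, and that of $DF_s[w+\phi]$, quantitative and uniform in a neighbourhood of $w$ in $X$. Because $w$ only decays algebraically, the kernel $|x-y|^{-(1-2s)}$ does not approach $1$ uniformly in $x$: at large $|x|$ the integral $\int u^2(y)|x-y|^{-(1-2s)}dy$ develops an $|x|^{2s-1}$ tail that must be balanced against the $|x|^{-4}$ decay of $u^2$. The key quantitative estimate I expect to establish is of the form
\begin{equation*}
\Bigl\|\int_{\mathbb{R}}u^2(y)\bigl(|x-y|^{-(1-2s)}-1\bigr)\,dy\Bigr\|_Y\le C(1-2s)\|u\|_X^2
\end{equation*}
for an appropriate companion space $Y$, obtained by splitting into near- and far-field contributions and exploiting the expansion $|x-y|^{-(1-2s)}=1-(1-2s)\log|x-y|+O((1-2s)^2)$. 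Promoting this into uniform smallness of $F_s(w)$ and uniform invertibility of $DF_s[w+\phi_s]$ on $\{w'\}^{\perp}$ is the technical heart of the argument.
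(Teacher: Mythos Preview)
Your proposal is correct and follows essentially the same route as the paper: the same rescaling by $\tau_s$, the same reduction via the Green's function to a single scalar equation, the same linearised operator $L$ with the same kernel computation, the same weighted $L^\infty$ framework (the paper uses $\rho(x)=(1+|x|)^{-\mu}$ with $\mu=2$, matching your $X$), and the same contraction argument modulo $w'$. The paper carries out your anticipated near/far-field split explicitly by cutting at the $s$-dependent scale $|x|\sim (1-2s)^{-1}$; one minor correction is that the resulting source estimate is $\|S(w)\|_*\le C(1-2s)^{1-\delta}$ rather than $C(1-2s)$, because on the inner region the logarithmic expansion picks up a factor $|\log(1-2s)|$, but this loss is harmless for the fixed-point argument.
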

Interestingly, we can also study the stability and instability of the ground state solution constructed in Theorem \ref{th1.exist}. Writing the associated eigenvalue problem for the system as
\begin{subequations}\label{1.eig}
\begin{align}[left=\empheqlbrace]
	& (-\Delta)^\frac12\phi +\phi-2V^{-1}U\phi+V^{-2}U^2\psi+\lambda_s\phi=0, & -\infty<x<\infty,\\
	& (-\Delta)^s\psi-2U\phi+\tau\lambda_s\psi=0, & -\infty<x<\infty,
\end{align}
\end{subequations}
where $\lambda_s\in \mathbb{C}.$ Here we say $(U,V)$ is linearly stable if the real part of each eigenvalue is negative, while $(U,V)$ is called linear unstable if there exists a $\lambda_s$ such that its real part $\Re(\lambda_s)>0.$
\begin{theorem}
	\label{th1.stable}
	Let $(U,V)$ be the solution constructed in Theorem \ref{th1.exist}. There exists $\tau_1$ such that the solution is linearly stable for any $\tau<\tau_1.$
\end{theorem}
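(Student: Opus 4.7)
The plan is to follow the standard nonlocal eigenvalue problem (NLEP) reduction, identify the limit problem as $s\uparrow 1/2$ and $\tau\to 0$, and then perturb.

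\textbf{Step 1: Reduction to an NLEP.} For $\lambda_s$ with $\Re(\tau\lambda_s)$ bounded below by a small negative constant, the operator $(-\Delta)^s+\tau\lambda_s$ is invertible on the natural weighted space, so the second equation of \eqref{1.eig} gives
\[
\psi = 2\bigl((-\Delta)^s+\tau\lambda_s\bigr)^{-1}(U\phi).
\]
Substituting into the first equation yields the NLEP
\[
(-\Delta)^{1/2}\phi + \phi + \lambda_s\phi - 2V^{-1}U\phi + 2V^{-2}U^2\,\bigl((-\Delta)^s+\tau\lambda_s\bigr)^{-1}(U\phi) = 0.
\]

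\textbf{Step 2: The limiting NLEP.} By Theorem \ref{th1.exist}, $U/\tau_s\to w$ and $V/\tau_s\to 1$ uniformly on compacts, so $V^{-1}U\to w$ and $V^{-2}U^2\to w^2$. The normalization $\tau_s$ is designed precisely so that $\tau_s\bigl((-\Delta)^s\bigr)^{-1}$ converges, in a suitable operator-theoretic sense, to the rank-one projection $f\mapsto \int_\mathbb{R} f\,dx\,/\int_\mathbb{R} w^2\,dx$ as $s\to 1/2^-$; the factor $\Gamma(1-2s)\sin(s\pi)/\pi$ captures the logarithmic blow-up of the free-space Green's function at the critical value. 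Formally passing to the limit with $\tau\to 0^+$ reduces the NLEP to the half-Laplacian GM problem
\[
L_0\phi + \lambda\phi - 2w^2\,\frac{\int_\mathbb{R} w\phi\,dx}{\int_\mathbb{R} w^2\,dx} = 0,\qquad L_0\phi := (-\Delta)^{1/2}\phi+\phi-2w\phi,
\]
whose spectrum is known to lie in $\{\Re\lambda<0\}\cup\{0\}$, with zero simple and spanned by $w'$ (translation invariance); see \cite{wei_2014_book} and \cite{medeiros_2022}.

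\textbf{Step 3: Perturbation and large eigenvalues.} For $\lambda_s$ in any compact subset of $\{\Re\lambda\ge 0\}$, I argue by contradiction: a sequence $(s_n,\tau_n,\lambda_{s,n},\phi_n)$ with $s_n\uparrow 1/2$, $\tau_n\to 0^+$, $\Re\lambda_{s,n}\ge 0$ and $\phi_n$ normalized in $H^{1/2}(\mathbb{R})$ must, after extracting a subsequence using uniform decay estimates, yield an eigenpair of the limiting NLEP with nonnegative real part. This forces $\lambda=0$ and $\phi$ proportional to $w'$; a Lyapunov--Schmidt reduction then computes the leading correction to the zero eigenvalue as $c_1\tau+o(1)$ as $s\to 1/2^-$, and one verifies $\Re c_1<0$ (mirroring the classical GM Hopf-threshold computation), which produces an explicit $\tau_1$. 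For large $|\lambda_s|$ the nonlocal term is $O(|\tau\lambda_s|^{-1})$, so the problem is a small perturbation of $L_0\phi+\lambda_s\phi=0$, and resolvent estimates for $L_0$ rule out eigenvalues with $\Re\lambda\ge 0$ outside a fixed compact set.

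\textbf{Main obstacle.} The central difficulty is the uniform control of $\tau_s\bigl((-\Delta)^s+\tau\lambda_s\bigr)^{-1}$ as $s\uparrow 1/2$, uniform in $\lambda_s$ on compacts. Because the free-space Green's function of $(-\Delta)^s$ develops a logarithmic singularity at the critical value, the passage to the limit is nonuniform; the analysis must be carried out in a function space (weighted $L^2$ or $H^s$) adapted to the spike profile $w$, using the explicit series representation of the Green's function emphasized in the abstract to control the error terms in the rank-one approximation.
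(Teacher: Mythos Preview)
Your overall strategy---reduce to an NLEP, pass to the $s\to 1/2^-$ limit, and invoke the known stability of the limiting problem---matches the paper. But Step~3 contains two genuine gaps.

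\textbf{The zero eigenvalue.} You claim a Lyapunov--Schmidt reduction gives the leading correction to the zero eigenvalue as $c_1\tau+o(1)$ with $\Re c_1<0$. This is incorrect: the core problem \eqref{6.1-core} is posed on all of $\mathbb{R}$ and is exactly translation invariant for every $s\in(0,1/2)$ and every $\tau\ge 0$, so $(\partial_x U,\partial_x V)$ is an exact eigenfunction of \eqref{1.eig} with eigenvalue zero. The zero eigenvalue does not move; there is no $\tau$-correction to compute, and no ``Hopf-threshold'' mechanism is in play here. (You may be thinking of the bounded-domain setting, where translation invariance is broken and the small eigenvalues do drift.) What must actually be shown is that there are no \emph{other} small eigenvalues with $\Re\lambda_s>0$. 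The paper does this by writing $\phi=a\,\partial_x\hat U+\phi^\perp$ with $\phi^\perp\perp\partial_x\hat U$, testing the $\phi^\perp$-equation against $\phi^\perp$, and invoking a coercivity inequality for the quadratic form of the limiting nonlocal operator to obtain $\lambda_s\int_{\mathbb{R}}|\phi^\perp|^2\,dx\le 0$ directly---no perturbation expansion is needed.

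\textbf{Large $|\lambda_s|$.} Your bound ``the nonlocal term is $O(|\tau\lambda_s|^{-1})$'' is not uniform as $\tau\to 0^+$: for small $\tau$ it only controls the region $|\lambda_s|\gg 1/\tau$, leaving a growing gap between your compact set and your large-eigenvalue region as $\tau\downarrow 0$. The paper avoids this by first taking $\tau=0$, so that the $\psi$-equation becomes $(-\Delta)^s\psi=2\tau_s\hat U\phi$ with no $\lambda_s$-dependence; the Green's representation together with the normalization $\tau_s$ then gives $\psi\to 2\bigl(\int_{\mathbb{R}} w\phi\bigr)/\bigl(\int_{\mathbb{R}} w^2\bigr)$ uniformly in $\lambda_s$, and any subsequential limit $\lambda_0$ of eigenvalues bounded away from zero satisfies the limiting NLEP, which forces $\lambda_0<0$. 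The case of small positive $\tau$ is then handled by a single perturbation off $\tau=0$, rather than by the $\tau$-dependent splitting you propose.
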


The remainder of this paper is organized as follows. In \S \ref{sec:equilibrium} we construct multi-spike quasi-equilibrium solutions by first considering the relevant core problem in \S \ref{subsec:core-problem} and then deriving the NAS in \S \ref{subsec:nas}. This is followed by \S\ref{subsec:sym_asym_regime} where we specifically consider symmetric and asymmetric solutions in the $D=O(\varepsilon^{2s_2-1})$ regime and by \S\ref{subsec:greens-func-properties} where we discuss in more detail the singular behaviour of the Green's function and its connection with higher dimensional problems. In \S\ref{sec:stability} we study the linear stability of multi-spike solutions by deriving the GCEP and focusing in particular on the $D\ll O(\varepsilon^{2s_2-1})$ and $D=O(\varepsilon^{2s_2-1})$ regimes in \S\ref{subsec:stability-reg-1} and \S\ref{subsec:stability-reg-2} respectively.  This is followed by \S\ref{sec:slow-dynamics} where we derive an ordinary differential equation (ODE) system governing the slow dynamics of multi-spike solutions. In \S\ref{sec:simulations} we then perform full numerical simulations of \eqref{eq:frac-gm-full-system} to validate our asymptotic theory. In \S\ref{sec:rigorous} we prove Theorems \ref{th1.exist} and \ref{th1.stable}. Finally in \S\ref{sec:discussion} we summarize our results and make some concluding remarks.

\section{Asymptotic Approximation of $N$-Spike Quasi-Equilibrium Solutions}\label{sec:equilibrium}

In this section we will use the method of matched asymptotic expansions to calculate asymptotic approximations of $N$-spike solutions to
\begin{subequations}\label{eq:frac-gm-equilibrium-system}
	\begin{align}[left=\empheqlbrace]
		&\varepsilon^{2s_1} (-\Delta)^{s_1}u + u - v^{-1}u^2 = 0,& -1<x<1, \label{eq:frac-gm-equilibrium-system-u}\\
		&D(-\Delta)^{s_2}v + v - u^2 = 0,& -1<x<1, \label{eq:frac-gm-equilibrium-system-v}
	\end{align}
	with periodic boundary conditions
	\begin{equation}
		u(x+2)=u(x),\qquad v(x+2) = v(x).
	\end{equation}
\end{subequations}
The successful use of the method of matched asymptotic expansions relies on the asymptotically small activator diffusivity $\varepsilon^{2s_1}\ll 1$ which leads to the emergence of two distinct length scales. Specifically the activator concentrates at $N$ points $-1<x_1<...<x_N<1$ that are well separated in the sense that $|x_i-x_j|\gg \varepsilon$ for all $i\neq j$ as well as $x_1+1\gg \varepsilon$ and $1-x_N\gg\varepsilon$. Over an $O(\varepsilon)$ length scale centred at each $x_1,...,x_N$ the system \eqref{eq:frac-gm-equilibrium-system} is approximated by a core problem in $\mathbb{R}$ whose solutions yields the local profile of the activator and inhibitor. This core problem depends on an undetermined spike strength parameter whose value determines the far-field behaviour of the core solution. On the other hand over an $O(1)$ length scale away from each spike location $x_1,...,x_N$ the nonlinear term appearing in \eqref{eq:frac-gm-equilibrium-system-v} can be approximated, in the sense of , by a sum of appropriately weighted Dirac delta functions centred at each $x_1,..,x_N$. As a consequence the inhibitor can be approximated as a weighted sum of Green's functions over an $O(1)$ length scale. By matching the behaviour of this sum of Green's functions as $x$ approaches each spike location with the far-field behaviour of each core solution we can then derive a NAS of $N$ equations in the $N$ undetermined spike strength parameters. The method of matched asymptotic expansions therefore reduces the original PDE system \eqref{eq:frac-gm-equilibrium-system} to a finite number of nonlinear algebraic equations whose solutions yields an asymptotic approximation of an $N$-spike solution.

Guided by the preceding discussion, in the remainder of this section we will first discuss the core problem and highlight some of its key properties. We will then use the method of matched asymptotic expansions as outlined above to derive the NAS. The remainder of the section will then be dedicated to a discussion on the existence of symmetric and asymmetric $N$-spike solutions as well as to some of the peculiarities of the fractional Gierer-Meinhardt system which distinguish it from the classical one- and three-dimensional Gierer-Meinhardt systems.

\subsection{The Core Problem}\label{subsec:core-problem}

The core problem is one of the key ingredients in deriving an asymptotic approximation of an $N$-spike quasi-equilibrium solutions as outlined above. It is given by
\begin{subequations}\label{eq:core_problem}
	\begin{align}[left=\empheqlbrace]
		& (-\Delta)^{s_1} U_c + U_c - V_c^{-1}U_c^2 = 0,\quad (-\Delta)^{s_2} V_c - U_c^2 = 0, & -\infty <y< \infty,\label{eq:core_problem-pde} \\
		& U_c\sim \nu(S)|y|^{-(1+2s_1)},\quad V_c\sim \mu(S) + S|y|^{2s_2-1}, & \text{as}\quad |y|\rightarrow\infty. \label{eq:core_problem-far-field}
	\end{align}
\end{subequations}
where $S>0$ is a parameter which we refer to as the \textit{spike strength} while $\nu(S)$ and $\mu(S)$ are two $S$-dependent constants. Solutions to \eqref{eq:core_problem} will be denoted by $U_c(y;S)$ and $V_c(y;S)$ to make explicit the dependence on the parameter $S$. The core problem \eqref{eq:core_problem} is a leading order approximation of \eqref{eq:frac-gm-equilibrium-system} after the rescaling $y = \varepsilon^{-1}(x-x_i)$ and its solutions yield the local profile of each spike in an $N$-spike quasi-equilibrium solution of \eqref{eq:frac-gm-equilibrium-system}. The far-field behaviour of $U_c(y;S)$ and $V_c(y;S)$ is a consequence of the following lemma

\begin{lemma}\label{lemma:decay}
	Let $0<s<1/2$ and suppose that $f(y) = O(|y|^{-\sigma})$ as $|y|\rightarrow\infty$ for $\sigma>0$.
	\begin{enumerate}
		\item If $\sigma > 1+2s$ then the solution to $$(-\Delta)^s\phi + \phi = f,\quad\text{for }-\infty<y<\infty;\qquad \phi\rightarrow 0,\quad \text{as }|y|\rightarrow\infty,$$ decays like $\phi \sim C|y|^{-1-2s}$ as $|y|\rightarrow\infty$.
		\item If $\sigma > 1$ then the solution to $$(-\Delta)^s\phi = f,\quad\text{for }-\infty<y<\infty;\qquad \phi\rightarrow 0,\quad \text{as }|y|\rightarrow\infty,$$ decays like $\phi\sim C|y|^{2s-1}$ as $|y|\rightarrow\infty$.
	\end{enumerate}
\end{lemma}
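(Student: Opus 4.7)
The plan for both parts is to represent $\phi$ as a convolution with the appropriate fundamental solution on $\R$ and then extract the leading-order decay by splitting the convolution integral at $|z|=|y|/2$, using Taylor expansion of the kernel on $\{|z|\leq|y|/2\}$ and the decay of $f$ on $\{|z|>|y|/2\}$.

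For part (2), which is the simpler of the two, the hypothesis $0<s<1/2$ puts us in the regime where $\s$ is inverted by the one-dimensional Riesz potential, so that $\phi(y)=c_s\int_{\R}|y-z|^{2s-1}f(z)\,dz$ for an explicit constant $c_s>0$; the hypothesis $\sigma>1$ guarantees $f\in L^1(\R)$. On $\{|z|\leq|y|/2\}$ one has $|y-z|^{2s-1}=|y|^{2s-1}(1+O(|y|^{-1}))$, producing the leading term $c_s|y|^{2s-1}\int_\R f(z)\,dz$. On $\{|z|>|y|/2\}$ I would split further according to whether $|y-z|\leq 1$ or $|y-z|>1$; the first subregion is controlled via the local integrability of $|y-z|^{2s-1}$ at $z=y$ (valid since $s>0$) together with the bound $|f(z)|\leq C(|y|/2)^{-\sigma}$ there, while the second is estimated directly. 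Both pieces are $o(|y|^{2s-1})$, yielding $\phi(y)\sim C|y|^{2s-1}$ with $C=c_s\int_\R f$.

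For part (1), I would introduce the Green's function $G$ of $\s+I$ on $\R$, with $\widehat G(\xi)=(|\xi|^{2s}+1)^{-1}$, and write $\phi=G\ast f$. The crucial analytical input is the pointwise asymptotic $G(y)\sim c\,|y|^{-1-2s}$ as $|y|\to\infty$, which can be derived through the subordination identity $(I+\s)^{-1}=\int_0^\infty e^{-t}e^{-t\s}\,dt$, giving $G(y)=\int_0^\infty e^{-t}p_t(y)\,dt$ where $p_t$ is the one-dimensional $2s$-stable transition density, combined with the standard heavy-tail estimate $p_t(y)\asymp t\,|y|^{-1-2s}$ for $|y|\gg t^{1/(2s)}$ (the contribution from $t\gtrsim |y|^{2s}$ being exponentially suppressed by $e^{-t}$). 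With this asymptotic in hand, the same near/far splitting as in part (2) produces $\phi(y)\sim\left(\int_\R f\right)G(y)\sim C|y|^{-1-2s}$, where the hypothesis $\sigma>1+2s$ is precisely what is needed to ensure the tail contributions are $o(|y|^{-1-2s})$. The main obstacle is this asymptotic for $G$: establishing it rigorously, either via the subordination route above or via an Abelian--Tauberian analysis of the non-smooth behavior $\widehat G(\xi)=1-|\xi|^{2s}+O(|\xi|^{4s})$ near $\xi=0$, is the one nontrivial ingredient. Once $G$ is understood, the remaining work reduces to dominated convergence on the annuli produced by the splitting.
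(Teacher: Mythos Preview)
Your proposal is correct and follows essentially the same approach as the paper: both represent $\phi$ as a convolution with the Green's function of the relevant operator and extract the decay from the known asymptotics of that Green's function. The paper's proof is considerably terser, simply citing \cite{wei_2019_multi_bump} and \cite{pozrikidis2018fractional} for the asymptotics $G(y)\sim C|y|^{-1-2s}$ and $G_0(y)=c_s|y|^{2s-1}$ and invoking ``classical potential analysis'' for the rest, whereas you spell out the near/far splitting and sketch how to obtain the large-$|y|$ behaviour of $G$ via subordination.
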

\begin{proof}
	The conclusion follows easily by using classical potential analysis and the decay properties of the Green's functions associated with the operators $(-\Delta)^s+I$ and $(-\Delta)^s$. Specifically the Green's function $G(x,y)$ of $(-\Delta)^s+I$ has the asymptotic behaviour
	\begin{equation}
		\label{2.asy-1}
		\lim_{|x|\to\infty}G(x)|x|^{1+2s}=C,
	\end{equation}
	for some constant $C>0$, while the Green's function $G_0(x,y)$ of $(-\Delta)^s$ has the form
	\begin{equation}
		\label{2.asy-2}
		G_0(x,y)=\frac{1}{\pi}\Gamma(1-2s)\sin(s\pi)|x|^{2s-1}.
	\end{equation}
	We refer the readers to \cite[Section 2]{wei_2019_multi_bump} and \cite[Section 1.12]{pozrikidis2018fractional} for the proof of \eqref{2.asy-1} and \eqref{2.asy-2} respectively.
\end{proof}

We can in fact be more explicit about the solution $V_c$ of \eqref{eq:core_problem} by taking the Fourier transform of the second equation in \eqref{eq:core_problem-pde} to get
\begin{equation}\label{eq:vc-fourier-rep}
	V_c(y;S) = C + \mathfrak{a}_{s_2}\int_{-\infty}^\infty |y-\bar{y}|^{2s_2-1} U_c(\bar{y};S)^2 d\bar{y},\qquad \mathfrak{a}_{s_2} \equiv -2\pi^{-1}s\Gamma(-2s_2)\sin(\pi s_2).
\end{equation}
Taking the limit as $|y|\rightarrow\infty$ then yields as a special case of Lemma \ref{lemma:decay} the limiting behaviour $V_c(y;S) \sim C + \mathfrak{a}_{s_2} |y|^{2s_2-1} \int_{-\infty}^\infty U_c(\bar{y};S)^2d\bar{y}$. Comparing this with the far-field behaviour of the core solution given in \eqref{eq:core_problem-far-field} we deduce the useful identity
\begin{equation}\label{eq:S-def}
	S = \mathfrak{a}_{s_2} \int_{-\infty}^\infty U_c(\bar{y};S)^2d\bar{y}.
\end{equation}
which in particular reinforces our assumption that $S>0$ since $\mathfrak{a}_{s_2}>0$ for $s_2<1/2$.

\begin{figure}[t!]
	\centering 
	\begin{subfigure}{0.25\textwidth}
		\includegraphics[scale=0.675]{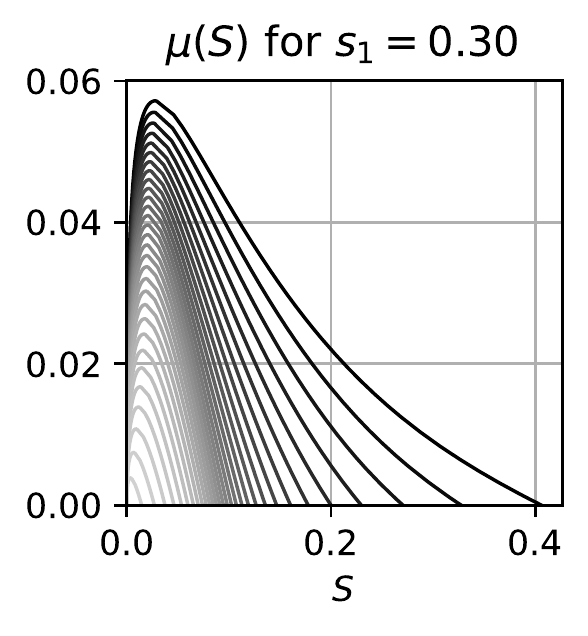}
		\caption{}
		\label{fig:core-far-field-1}
	\end{subfigure}\hfil 
	\begin{subfigure}{0.25\textwidth}
		\includegraphics[scale=0.675]{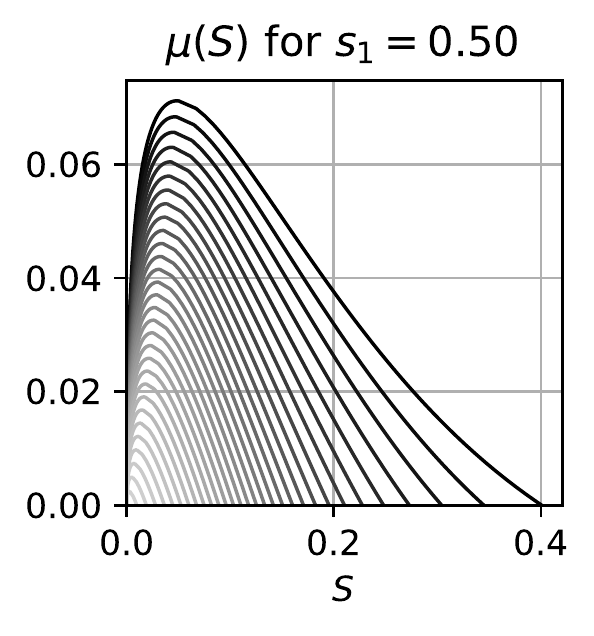}
		\caption{}
		\label{fig:core-far-field-2}
	\end{subfigure}\hfil 
	\begin{subfigure}{0.25\textwidth}
		\includegraphics[scale=0.675]{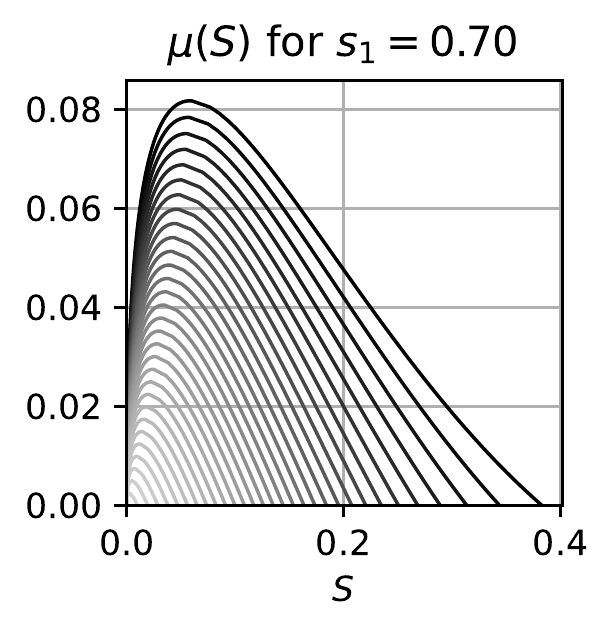}
		\caption{}
		\label{fig:core-far-field-3}
	\end{subfigure}
	\medskip
	\begin{subfigure}{0.25\textwidth}
		\includegraphics[scale=0.675]{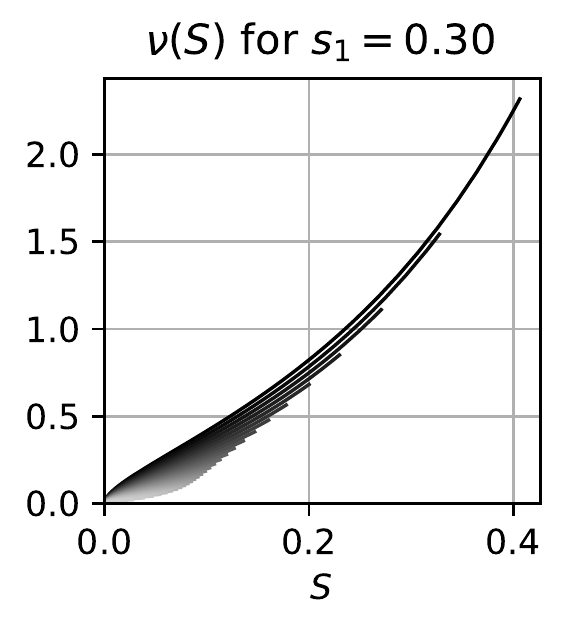}
		\caption{}
		\label{fig:core-far-field-4}
	\end{subfigure}\hfil 
	\begin{subfigure}{0.25\textwidth}
		\includegraphics[scale=0.675]{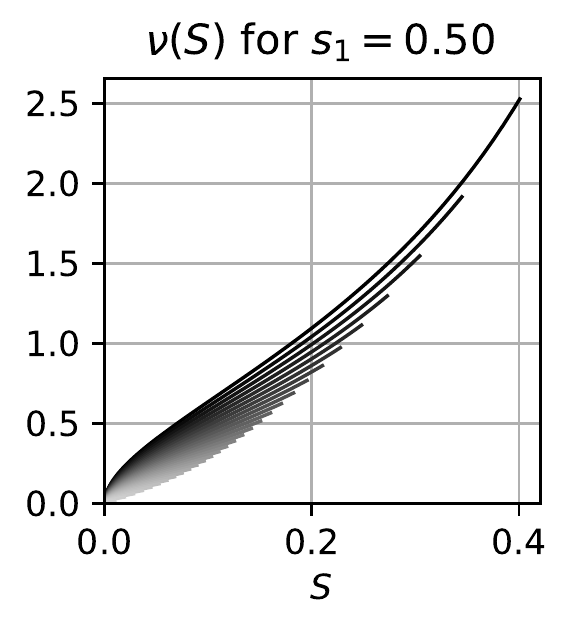}
		\caption{}
		\label{fig:core-far-field-5}
	\end{subfigure}\hfil 
	\begin{subfigure}{0.25\textwidth}
		\includegraphics[scale=0.675]{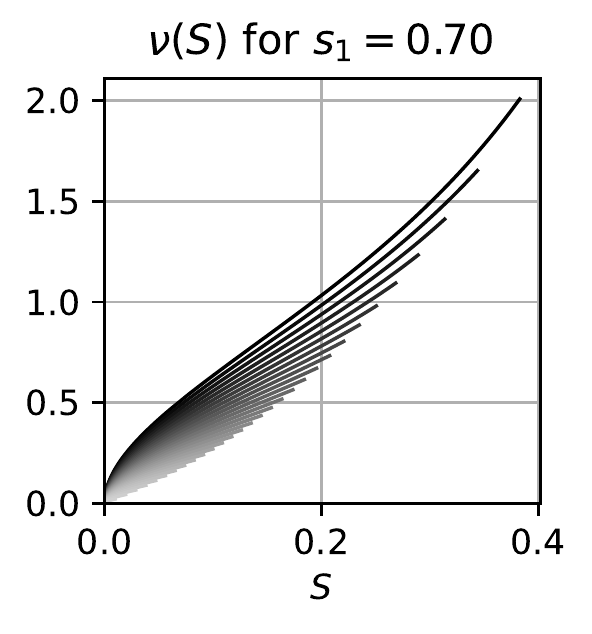}
		\caption{}
		\label{fig:core-far-field-6}
	\end{subfigure}
	\caption{Plots of core problem far-field constants $\mu(S)$ and $\nu(S)$ for distinct values of $1/4<s_1<1$. In each plot the darkest and lightest curves corresponds to $s_2=0.2$ and $s_2=0.49$ respectively, with the intermediate curves corresponding to $0.01$ increments in $s_2$.}
	\label{fig:core-far-field}
\end{figure}

In light of the above discussion the specification of the parameter $S$ is equivalent to fixing the $L^2(\mathbb{R})$ norm of $U_c$. By solving \eqref{eq:core_problem} for a fixed value of $S$ we can then extract the values of the far-field constants $\nu(S)$ and $\mu(S)$ by taking the limits
\begin{equation}\label{eq:nu_mu_def}
	\nu(S) = \lim_{y\rightarrow\infty} |y|^{1+2s_1}U_c(y,S),\qquad \mu(S) = \lim_{y\rightarrow\infty} \bigl(V_c(y;S) - S|y|^{2s_2-1}\bigr).
\end{equation}
The nonlinearity in the first equation of \eqref{eq:core_problem-pde} implies that we must have $V_c(y;S)>0$ for all $y\in\mathbb{R}$ and this leads us to the constraint $\mu(S)\geq 0$.  We next have to determine whether there any values of $S>0$ for which this constraint holds. To address this we first consider the small $S$-asymptotics. Specifically if $S\ll 1$ then \eqref{eq:S-def} implies that $U_c(y;S)=O(\sqrt{S})$ and by balancing terms in \eqref{eq:core_problem} we also deduce that $V_c(y;S)=O(\sqrt{S})$ and $\mu(S)=O(\sqrt{S})$. It is then straightforward to see that to leading order in $S\ll 1$ we have the asymptotic expansions
\begin{subequations}
	\begin{equation}\label{eq:small-S-asymptotics}
		U_c(y)\sim\sqrt{\tfrac{S}{\mathfrak{b}_{s_1}\mathfrak{a}_{s_2}}}w_{s_1}(y),\quad V_c(y)\sim \sqrt{\tfrac{S}{\mathfrak{b}_{s_1}\mathfrak{a}_{s_2}}},\qquad \mu (S)\sim\sqrt{\tfrac{S}{\mathfrak{b}_{s_1}\mathfrak{a}_{s_2}}},
	\end{equation}
	where
	\begin{equation}
		\mathfrak{b}_{s_1} \equiv \int_{-\infty}^\infty w_{s_1}(y)^2 dy,
	\end{equation}
\end{subequations}
and $w_{s_1}(y)$ is the fractional homoclinic solution satisfying
\begin{subequations}\label{eq:fractional-homoclinic}
	\begin{align}[left=\empheqlbrace]
		& (-\Delta)^{s_1}w_{s_1}  + w_{s_1} - w_{s_1}^2 = 0,& -\infty<y<\infty,\\
		& w_{s_1}(y) = O(|y|^{-(1+2s_1)}), &\text{as}\quad |y|\rightarrow\infty.
	\end{align}
\end{subequations}
We refer the reader to Section 4 in \cite{wei_2019_multi_bump} for further properties of the nonlinear problem \eqref{eq:fractional-homoclinic}. The small-$S$ asymptotics \eqref{eq:small-S-asymptotics} imply that $\mu(S)>0$ for $0<S\ll 1$. A numerical continuation in $S$ then further extends the range of $S$ values for which $\mu(S)>0$ holds (see Appendix \ref{subapp:core-problem} for details).

\begin{figure}[t!]
	\centering 
	\begin{subfigure}{0.25\textwidth}
		\includegraphics[scale=0.675]{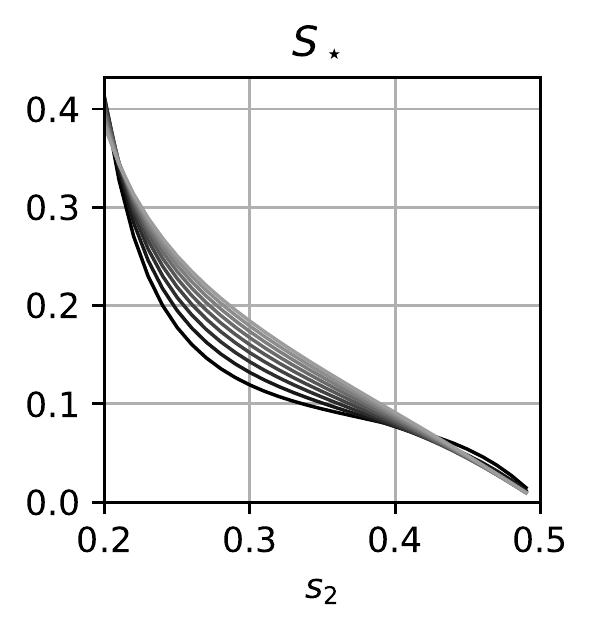}
		\caption{}
		\label{fig:core-critical-S-star}
	\end{subfigure}\hfil 
	\begin{subfigure}{0.25\textwidth}
		\includegraphics[scale=0.675]{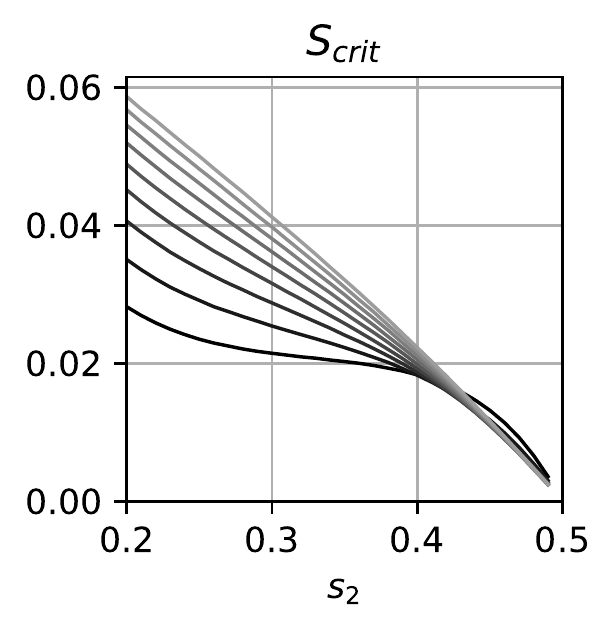}
		\caption{}
		\label{fig:core-critical-S-crit}
	\end{subfigure}\hfil
	\begin{subfigure}{0.25\textwidth}
		\includegraphics[scale=0.675]{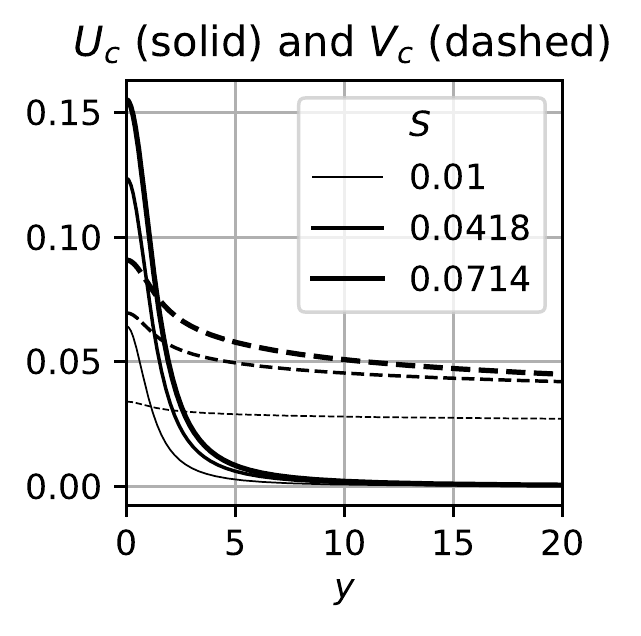}
		\caption{}
		\label{fig:core-sol}
	\end{subfigure}
	\caption{Plots of the critical values (A) $S=S_\star$ and (B) $S=S_\text{crit}$ at which the far-field constant $\mu(S)$ vanishes and attains its global maximum respectively. In both plots the darkest and lightest curves corresponds to values of $s_1=0.3$ and $s_1=0.7$ respectively with the intermediate curves corresponding intermediate values in increments of $0.05$. (C) The core solution for $s_1=0.5$ and $s_2=0.4$ at the indicated values of $S$.}\label{fig:core-critical-S-vals}
\end{figure}

Plots of the numerically calculated far-field constants $\mu(S)$ and $\nu(S)$ are shown in Figure \ref{fig:core-far-field}. These plots indicate that there exists a value of $S=S_\star>0$ beyond which $\mu(S)> 0$ no longer holds. In Figure \ref{fig:core-critical-S-star} we plot $S_\star$ as a function of $s_2$ for select values of $s_1$. In addition the plots in Figure \ref{fig:core-far-field} indicate that $\mu(S)$ attains a unique global maximum in $0<S<S_\star$ at some value $S=S_\text{crit}$ which we plot for distinct values of $s_1$ in Figure \ref{fig:core-critical-S-crit}. This critical value of $S=S_\text{crit}$ plays a crucial role in the leading order stability theory of multiple spike solutions as will be further discussed in \S \ref{sec:stability} below. Finally, in Figure \ref{fig:core-sol} we plot the profiles of the core solutions for select values of $S>0$ when $s_1=0.5$ and $s_2=0.4$.

We conclude by remarking that our preceding discussion has thus far been limited to numerical calculations of solutions to the core problem \eqref{eq:core_problem}. In \S\ref{sec:rigorous} we rigorously prove the existence and stability of \textit{ground state} solutions, i.e.\@ those for which $\mu(S)=0$, when $s_2\approx 1/2$. For more general values of $s_2<1/2$ the rigorous justification of such solutions remains an open problem. We remark also that the existence of a ground state is not guaranteed as can be seen, for example, in the case of the core problems associated with the three-dimensional Gray-Scott, Schnakenberg, and Brusselator systems \cite{gomez_2021}.

\subsection{Asymptotic Matching and the Nonlinear Algebraic System}\label{subsec:nas}

We now consider the asymptotic construction of an $N$ spike solution to \eqref{eq:frac-gm-equilibrium-system}. Assuming that the $N$-spikes concentrate at $N$ well separated (in the sense made precise above) points $-1<x_1<...<x_N<1$ we begin by making the ansatz that
\begin{equation}\label{eq:inner-expansion-0}
	u(x_i+\varepsilon y) \sim D\varepsilon^{-2s_2}\bigl( U_i(y) + o(1)\bigr),\quad v(x_i+\varepsilon y) \sim D\varepsilon^{-2s_2}\bigl( V_i(y;S_i) + o(1)\bigr).
\end{equation}
A simple change of variables then yields that $U_i$ and $V_i$ must satisfy
\begin{equation}\label{eq:inner-eps-dep}
	(-\Delta)^{s_1} U_i + U_i - V_i^{-1}U_i^2 = 0,\quad (-\Delta)^{s_2} V_i + D^{-1}\varepsilon^{2s_2}V_i - U_i^2 = 0 \quad -(1+x_i) <\varepsilon y< 1-x_i.
\end{equation}
Approximating the domain $-\varepsilon^{-1}(1+x_i)<y<\varepsilon^{-1}(1-x_i)$ with $-\infty<y<\infty$ and dropping the $D^{-1}\varepsilon^{2s_2}$ term in the $V_i$ equation we deduce that
\begin{equation}\label{eq:leading-order-inner}
	U_i(y)\sim U_c(y;S_i) + o(1),\qquad V_i(y)\sim V_c(y;S_i) + o(1),
\end{equation}
where $U_c(y;S)$ and $V_c(y;S)$ are the solutions of the core problem \eqref{eq:core_problem} discussed and $S_i>0$ is an as-of-yet undetermined constant. Implicit in the asymptotic approximation \eqref{eq:leading-order-inner} is the assumption that the inner profiles interact only through the far-field behaviour constants $S_i$, the nature of which is revealed by formulating the outer problem and deriving an appropriate matching condition.

Next we derive an outer problem valid for values of $-1<x<1$ such that $|x-x_i|\gg\varepsilon$ for all $i=1,...,N$. We first make note of the limit
\begin{equation*}
	u^2 \rightarrow \varepsilon^{1-4s_2} D^2\sum_{i=1}^N \int_{-\infty}^\infty U_c(y;S_i)^2dy\,\delta(x-x_i) =  \varepsilon^{1-4s_2} D^2\mathfrak{a}_{s_2}^{-1} \sum_{i=1}^{N} S_i\delta(x-x_i),
\end{equation*}
as $\varepsilon\rightarrow 0^+$ which is to be understood in the sense of distributions and for which we have used \eqref{eq:S-def} in the equality. The outer inhibitor solution must then be a $2$-periodic function satisfying
\begin{subequations}
	\begin{equation}\label{eq:v-outer-pde}
		(-\Delta)^{s_2}v + D^{-1}v = \varepsilon^{1-4s_2} \mathfrak{a}_{s_2}^{-1} D\sum_{i=1}^{N} S_i\delta(x-x_i), \qquad x\in(-1,1)\setminus\{x_1,...,x_N\},
	\end{equation}
	and having the limiting behaviour
	\begin{equation}\label{eq:v-outer-limit}
		v\sim D\varepsilon^{-2 s_2}\bigl(\varepsilon^{1-2s_2}S_i|x-x_i|^{2s_2-1} + \mu(S_i)\bigr), \qquad \text{as}\quad x\rightarrow x_i,
	\end{equation}
	for each $i=1,...,N$ obtained from the far-field behaviour of the inner solution \eqref{eq:core_problem-far-field}.
\end{subequations}
We let $G_D(x)$ be the the $2$-periodic fractional Green's function satisfying
\begin{equation}\label{eq:greens-equation}
	(-\Delta)^{s_2}G_D + D^{-1}G_D = \delta(x), \quad -1<x<1,\qquad G_D(x+2)=G_D(x),
\end{equation}
which can be written as (see Appendix \ref{app:greens-function})
\begin{equation}\label{eq:greens-series-rapid}
	G_D(x) = D\sum_{k=1}^{k_{\max}}\frac{(-1)^{k+1}\mathfrak{a}_{ks_2}}{D^{k}}|x|^{2ks_2-1} + R_D(x),\qquad k_{\max} \equiv \lceil \tfrac{1}{2s_2}-1\rceil,
\end{equation}
where $\mathfrak{a}_{ks_2} \equiv -2ks_2\pi^{-1}\Gamma(-2ks_2)\sin(\pi ks_2)$ and $R_D(x)$ is given explicitly by \eqref{eq:greens-series-rapid-R_D}. In terms of this Green's function the solution to \eqref{eq:v-outer-pde} can be explicitly written as
\begin{equation}\label{eq:v-outer}
	v(x) = \varepsilon^{1-4s_2}\mathfrak{a}_{s_2}^{-1}D \sum_{i=1}^N S_i G_D(x-x_i).
\end{equation}
Comparing the limiting behaviour of \eqref{eq:v-outer} as $x\rightarrow x_i$ with the limiting behaviour \eqref{eq:v-outer-limit} from the inner solution yields the algebraic equation
\begin{equation}\label{eq:matching-condition}
	\begin{split}
		\mu(S_i)+\varepsilon^{1-2s_2}S_i|x-x_i|^{2s_2-1} \sim \frac{\varepsilon^{1-2s_2}}{\mathfrak{a}_{s_2}}\biggl( D S_i\sum_{k=1}^{k_{\max}}\frac{(-1)^{k+1}\mathfrak{a}_{ks_2}}{D^k}|x-x_i|^{2ks_2-1} + S_i R_D(0)& \\
		+ \sum_{j\neq i}S_j G_D(|x_i-x_j|) + O(|x-x_i|)   & \biggr).
	\end{split}
\end{equation}
The $S_i|x-x_i|^{2s_2-1}$ term on the left-hand-side cancels the $k=1$ term in the sum on the right-hand-side while the remaining singular terms corresponding to $k=2,...,k_{\max}$ are cancelled out by higher order corrections to the inner solution. On the other hand the constant term $\mu(S)$ on the left-hand-side must be balanced with the constant terms appearing on the right-hand-side. Since this must hold for each value of $i=1,...,N$ we are thus led to the NAS
\begin{equation}\label{eq:NAS}
	\mu(S_i) = \frac{\varepsilon^{1-2s_2}}{\mathfrak{a}_{s_2}}\biggl( S_i R_D(0) + \sum_{j\neq i}S_j G_D(|x_i-x_j|)\biggr),\qquad i=1,...,N.
\end{equation}
Note that this NAS must in general be solved numerically since $\mu(S)$ can only be computed numerically (see \S \ref{subsec:core-problem}). It nevertheless provides a substantial reduction in the construction of multi-spike solutions to the equilibrium equation \eqref{eq:frac-gm-equilibrium-system}.

We remark that the NAS \eqref{eq:NAS} is $\varepsilon$-dependent and yields distinct leading order approximations depending on whether $D=O(1)$ or $D=D_0\varepsilon^{2s_2-1}$ where $D_0=O(1)$. Indeed in the former case \eqref{eq:NAS} implies that $\mu(S_i)=0$ to leading order and hence $S_i\sim S_\star + O(D\varepsilon^{1-2s_2})$. On the other hand if $D=D_0\varepsilon^{2s_2-1}\gg 1$ then the asymptotics
\begin{equation}\label{eq:greens-large-D}
	R_D(0)\sim \frac{1}{2}D + O(1),\qquad G_D(|x_i-x_j|)\sim \frac{1}{2}D + O(1)\quad\text{for }i\neq j \qquad (D\gg 1),
\end{equation}
imply that $S_1,..,S_N>0$ must solve the leading order system
\begin{equation}\label{eq:NAS-leading-order-distinguished-regime}
	\mu(S_i) = \frac{\kappa}{N}\sum_{j=1}^{N} S_j,\qquad \kappa\equiv \frac{ND_0}{2 \mathfrak{a}_{s_2}},
\end{equation}
for each $i=1,...,N$ with the next order correction being $O(D_0^{-1}\varepsilon^{1-2s_2})$. The shape of $\mu(S)$ illustrated in Figure \ref{fig:core-far-field} suggests the possibility that $S_1,...,S_N\in\{S_l,S_r\}$ for some $0<S_l<S_r<S_\star$. Thus whereas the $D=O(1)$ regime supports solutions in which the profiles of each spike are identical, i.e. the $N$-spike solution is \textit{symmetric}, the $D=D_0\varepsilon^{2s_2-1}$ regime may admit both symmetric and \textit{asymmetric} $N$-spike solutions which we discuss further in \S \ref{subsec:sym_asym_regime} below.

While the leading order approximations discussed above are suggestive of the solutions we may encounter it is important to highlight that their associated errors are $O(D\varepsilon^{1-2s_2})$ when $D=O(1)$, and $O(D_0^{-1}\varepsilon^{1-2s_2})$ when $D=D_0\varepsilon^{2s_2-1}$. Although these errors are small in the limit $\varepsilon\rightarrow0$ they may in practice be unacceptably large. For example if $\varepsilon=0.01$ and $s_2 = 0.4$ then $\varepsilon^{1-2s_2}\approx 0.4$. In contrast if we solve the $\varepsilon$-dependent NAS \eqref{eq:NAS} directly then the next order correction to the inner problem can be deduced from the matching condition \eqref{eq:matching-condition} and is either $O(D^{-1}\varepsilon^{2s_2})$ if $s_2<1/4$ or $O(D\varepsilon^{2-2s_2})$ if $1/4<s_2<1/2$. In particular this yields an $O(\varepsilon)$ error when $D=D_0\varepsilon^{2s_2-1}$ and For this reason we will be using the $D=D_0\varepsilon^{2s_2-1}$ regime when we perform numerical simulations of \eqref{eq:frac-gm-full-system} in \S \ref{sec:simulations} below.

\subsection{Symmetric and Asymmetric Solutions in the $D=D_0\varepsilon^{2s_2-1}$ Regime}\label{subsec:sym_asym_regime}

\begin{figure}[t!]
	\centering 
	\begin{subfigure}{0.25\textwidth}
		\includegraphics[scale=0.675]{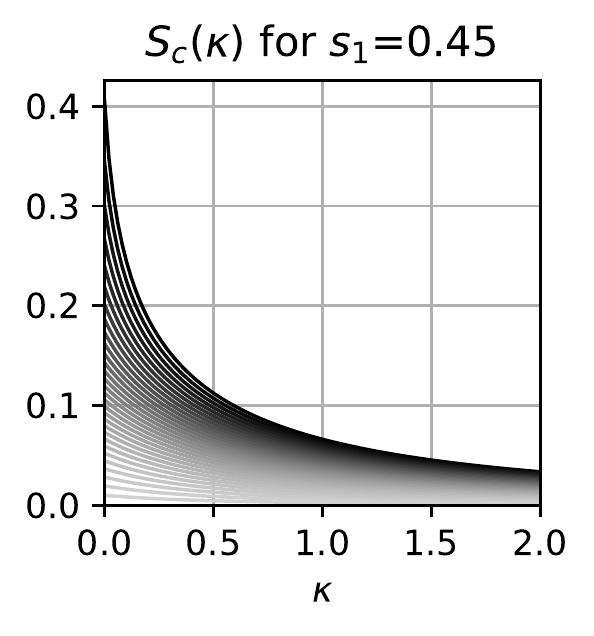}
		\caption{}
		\label{fig:Sc_plots}
	\end{subfigure}\hfil 
	\begin{subfigure}{0.25\textwidth}
		\includegraphics[scale=0.675]{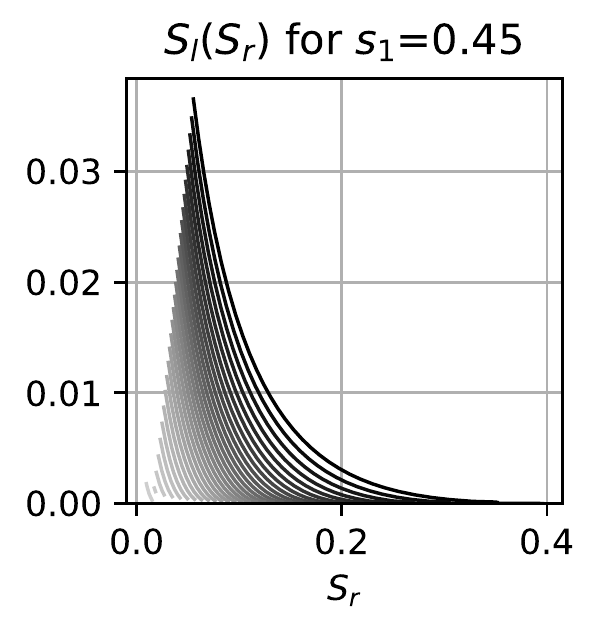}
		\caption{}
		\label{fig:Sl_plots}
	\end{subfigure}\hfil 
	\begin{subfigure}{0.25\textwidth}
		\includegraphics[scale=0.675]{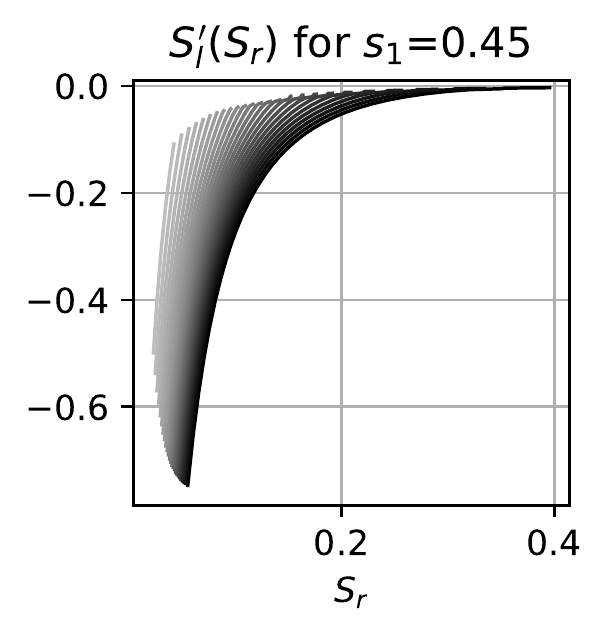}
		\caption{}
		\label{fig:Sl_prime_plots}
	\end{subfigure}
	\caption{Plots of (a) the common spike strength in a symmetric solution, (b) the small spike strength value corresponding to the large spike strength value in an asymmetric solution, and (c) its derivative. In each plot the darkest and lightest curves corresponds to $s_2=0.2$ and $s_2=0.49$ respectively, with the intermediate curves corresponding to $0.01$ increments in $s_2$}
	\label{fig:leading-order-S}
\end{figure}

As discussed above, the leading order equation of the NAS \eqref{eq:NAS} in the $D=D_0\varepsilon^{2s_2-1}$ regime given by \eqref{eq:NAS-leading-order-distinguished-regime} admits both symmetric and asymmetric $N$-spike solutions. In the following section we will explore these two types of solutions in more detail while also drawing parallels to the analogous solutions encountered in the case of the three-dimensional Gierer-Meinhardt system \cite{gomez_2021}.

Symmetric $N$-spike solutions are perhaps the easiest to analyze since in this case the spike strengths are all equal, $S_1=...=S_N=S_c$ and the leading order NAS \eqref{eq:NAS-leading-order-distinguished-regime} reduces to the scalar equation
\begin{equation}\label{eq:NAS-leading-order-distinguished-regime-symmetric}
	\mu(S_c) = \kappa S_c,\qquad 0<S_c<S_\star.
\end{equation}
From the plots of $\mu(S)$ in Figure \ref{fig:core-far-field} it is clear that $S_c\rightarrow S_\star$ as $D_0\rightarrow 0$ thereby providing a connection between the $D=O(1)$ and $D=O(\varepsilon^{2s_2-1})$ regimes. On the other hand as $D_0\rightarrow \infty$ we obtain $S\rightarrow 0$ and in particular using the small-$S$ asymptotics \eqref{eq:small-S-asymptotics} we find that $S\sim (\mathfrak{b}_{s_1}\mathfrak{a}_{s_2}\kappa^2)^{-1}$. In Figure \ref{fig:Sc_plots} we plot $S_c$ versus $\kappa$ for a selection of $s_1$ and $s_2$ values.

In addition to symmetric $N$-spike solutions the leading order NAS \eqref{eq:NAS-leading-order-distinguished-regime} and plots of $\mu(S)$ in Figure \ref{fig:core-far-field} further suggest the possibility of asymmetric $N$-spike solution. Specifically, recalling that $0<S_\text{crit}<S_\star$ is the value where $\mu(S)$ attains its unique maximum we deduce that for any $S_r\in [S_\text{crit},S_\star)$ there is a unique $S_l(S_r)\in(0,S_\text{crit}]$ which we plot for` $s_1=0.45$ and a selection of $s_2$ values in Figure \ref{fig:Sl_plots}. Notice from its definition that $S_l(S_\text{crit})=S_\text{crit}$ whereas $S_l(S_\star)=0$. Moreover, by differentiating $\mu(S_l(S(r)) = \mu(S_r)$ we obtain $S_l'(S_r) = [\mu'(S_l(S_r))]^{-1}\mu'(S_r)$ so that in particular $S_l'(S_r)\rightarrow 0$ as $S_r\rightarrow S_\star$ due to the small $S$ asymptotics \eqref{eq:small-S-asymptotics}. Plots of $S_l'(S_r)$ in Figure \ref{fig:Sl_prime_plots} further indicate that $-1\leq S_l'(S_r)\leq 0$.

We next consider the construction of asymmetric $N$-spike solutions consisting of $1\leq n\leq N-1$ \textit{large} and $N-n$ \textit{small} spikes by letting
$$
S_{\sigma(1)}=...=S_{\sigma(n)}=S_r,\qquad S_{\sigma(n+1)}=...=S_{\sigma(N)}=S_l(S_r),\qquad S_\text{crit}<S_r<S_\star,
$$
where $\sigma$ is a permutation of $\{1,...,N\}$. With this assumption the leading order system \eqref{eq:NAS-leading-order-distinguished-regime} reduces to the scalar equation
\begin{equation}\label{eq:NAS-leading-order-asy}
	\mu(S_r) = \kappa f(S_r,\tfrac{n}{N}), \qquad f(S,\theta)\equiv \theta S + (1-\theta) S_l(S).
\end{equation}
This scalar equation was previously encountered in the classical 3D Gierer-Meinhardt model \cite{gomez_2021}. For that model two key properties of $\mu(S)$ and $S_l(S_r)$ allowed for a complete characterization of the bifurcation structure of \eqref{eq:NAS-leading-order-asy}, the first being that $\mu'(S)<0$ for $S_\text{crit}<S<S_\star$, and the second that $-1 < S_l'(S_r) < 0$ for all $S_\text{crit}<S_r<S_\star$. Since these properties likewise hold for the $\mu(S)$ and $S_l(S_r)$ in our present case we will simply state the results from \cite{gomez_2021}, referring the interested reader to Section 2.3 of \cite{gomez_2021} for more details. The first result states that if
\begin{equation}\label{eq:asym-existence-1}
	0<\kappa<\kappa_{c1} \equiv \mu(S_{crit})/S_\text{crit},
\end{equation}
then \eqref{eq:NAS-leading-order-asy} has a unique solution for any $1\leq n\leq N-1$. In addition if $n\geq N-n$ then \eqref{eq:NAS-leading-order-asy} does not have a solution for any $\kappa\geq \kappa_{c1}$. If on the other hand $n<N-n$ then \eqref{eq:NAS-leading-order-asy} has exactly two distinct solutions for
\begin{equation}
	\kappa_{c1} < \kappa < \kappa_{c2}\equiv \mu(S_r^\star) / f(S_r^\star,n/N),
\end{equation}
where $S_\text{crit}<S_r^\star<S_\star$ is the unique solution to
\begin{equation}
	f(S_r^\star,n/N)\mu'(S_r^\star) = f'(S_r^\star,n/N)\mu(S_r^\star),
\end{equation}
and no solutions if $\kappa\geq \kappa_{c2}$.

\subsection{On the Fractional Green's Function}\label{subsec:greens-func-properties}

The preceding sections have highlighted the importance of the fractional Green's function satisfying \eqref{eq:greens-equation} in the asymptotic construction of quasi equilibrium solutions. We conclude this section by highlighting some of the key properties of the fractional Green's function and relating them to the behaviour of the classical Green's function in one-, two-, and three-dimensions.

The limiting behaviour of $G_D(x)$ as $x\rightarrow 0$ plays a crucial role in the existence and stability of multi-spike solutions. Interestingly this behaviour is markedly different when $s_2\in(1/2,1]$, $s_2\in(0,1/2)\setminus\{\tfrac{1}{2r}\,|\,r\in\mathbb{Z},\,r\geq 1\}$, and $s_2\in\{\tfrac{1}{2r}\,|\,r\in\mathbb{Z},\,r\geq 1\}$. In particular when $s_2\in(1/2,1)$ the Green's function is not singular with $G_D(x)\sim \mathfrak{a}_{s_2}|x|^{2s_2-1} + O(1)$ as $x\rightarrow 0$ \cite{gomez_2022}. On the other hand, referring to Propositions \ref{prop:greens} and \ref{prop:greens-log} in Appendix \ref{app:greens-function}, we have
{\small \begin{equation}
		G_D(x)\sim\begin{cases}  \sum_{k=1}^{k_{\max}}\tfrac{(-1)^{k-1}\mathfrak{a}_{ks_2}}{D^{k-1}}|x|^{2ks_2-1} +O(1),& s_2\in(0,1/2)\setminus\{\tfrac{1}{2r}\,|\,r\in\mathbb{Z},\,r\geq 1\},\\ \sum_{k=1}^{r-1}\frac{(-1)^{k-1}\mathfrak{a}_{ks_2}}{D^{k-1}}|x|^{2ks_2-1} + \frac{(-1)^r}{\pi D^{r-1}}\log|x| + O(1),& s_2 = \tfrac{1}{2r},\,\text{for } r\in\mathbb{Z},\,r\geq 1, \end{cases}
\end{equation}}
where $k_{\max} = \lceil\tfrac{1}{2s_2}-1\rceil$. The singular behaviour in each of these cases has direct analogies with the singular behaviour of the non-fractional Green's function in one-, two, and three-dimensions. Specifically, we may view the fractional Green's function as analogous to the one-, two-, and three-dimensional non-fractional Green's function when $s_2\in(1/2,1)$, $s_2=1/2$, and $s_2\in(0,1/2)\setminus\{\tfrac{1}{2r}\,|\,r\in\mathbb{Z},\, r\geq 1\}$ respectively. This analogy further extends to the methods used in the analysis of spike solutions as is evident by the similarities between the analysis in \cite{gomez_2022} for $s_2\in(1/2,1)$ and the classical Gierer-Meinhardt system (e.g. in \cite{iron_2001}), that in \cite{medeiros_2022} for $s_2=1/2$ and the two-dimensional Gierer-Meinhardt system \cite{wei_2001_gm_2d_weak,wei_2002_gm_2d_strong}, and that in the present paper with the analysis of spike solutions in the three-dimensional Schnakenberg \cite{tzou_2017_schnakenberg} and Gierer-Meinhardt \cite{gomez_2021} systems. For the remaining values of $s_2\in\{\tfrac{1}{2r}\,|\,r\in\mathbb{Z},\, r\geq 2\}$ the mixing between logarithmic and algebraic singularities leads to problems which don't appear to have a clear classical analog. The analysis of the fractional Gierer-Meinhardt system for these remaining parameter values is not addressed in this paper but is an interesting direction for future research.

\begin{figure}[t]
	\centering
	\includegraphics[scale=0.7]{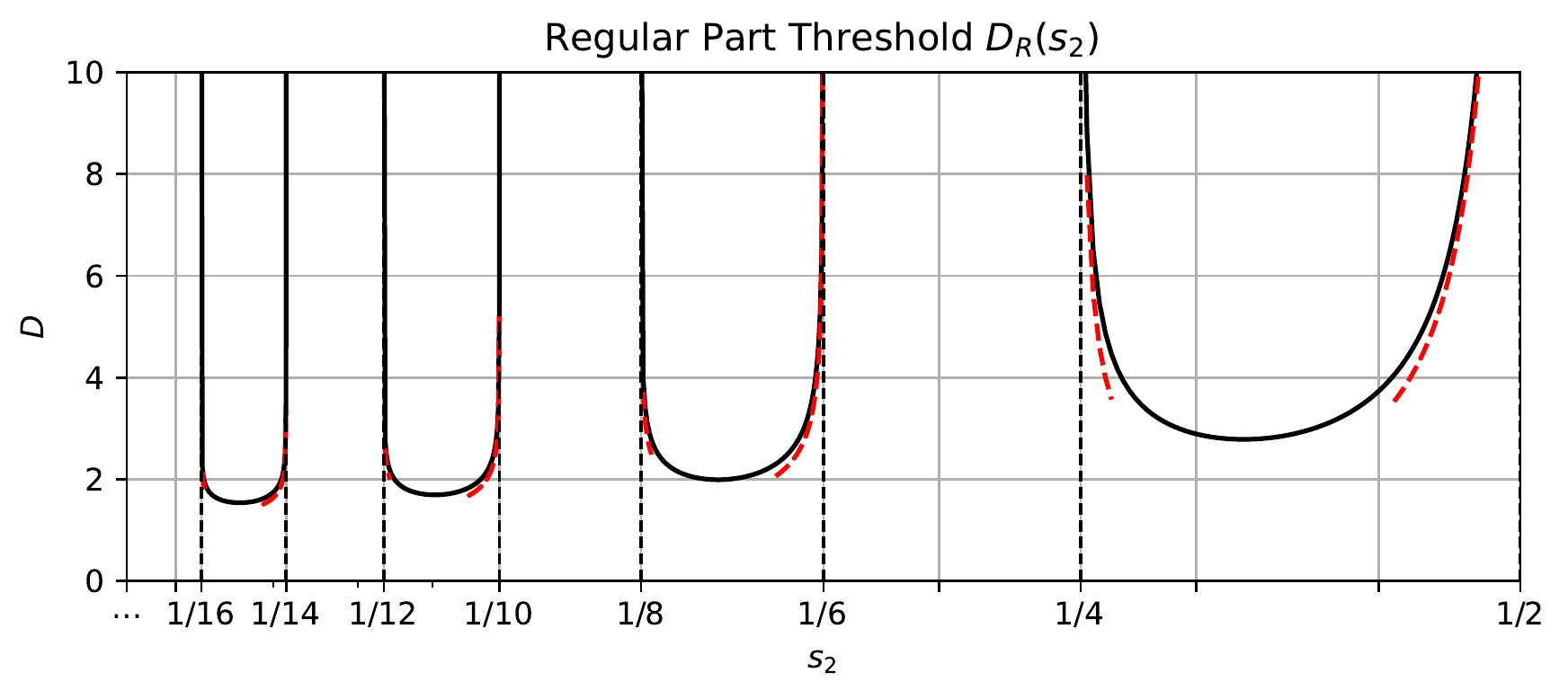}
	\caption{Plots of the threshold $D_R(s_2)$ (solid black) and its limiting behaviour as $s_2\rightarrow \bigl(\tfrac{1}{2r}\bigr)^-$ for odd values of $r\geq 1$ and as $s_2\rightarrow \bigl(\tfrac{1}{2r}\bigr)^+$ for even values of $r\geq 1$ (dashed red).}\label{fig:D_R}
\end{figure}

We now consider the regular part of the Green's function $R_D(x)$ which can be computed using the series expansion \eqref{eq:greens-series-rapid-R_D}. Numerical calculations indicate that $R_D(0)>0$ for all $D>0$ when $s_2\in(\tfrac{1}{2(r+1)},\tfrac{1}{2r})$ for even values of $r\geq 1$, whereas there is a threshold $D_R(s_2)>0$ for which $R_D(0)<0$ for all $D<D_R(s_2)$ when $s_2\in(\tfrac{1}{2(r+1)},\tfrac{1}{2r})$ for odd values of $r\geq 1$. The threshold $D_R(s_2)$ can be numerically computed using \eqref{eq:greens-series-rapid-R_D} and is plotted in Figure \ref{fig:D_R}. Note that special care must be taken when using the series \eqref{eq:greens-series-rapid-R_D} as $s_2\rightarrow \tfrac{1}{2r}^+$ for any integer $r\geq 2$. Specifically, in this case $k_{\max}=\lceil\tfrac{1}{2s_2}-1\rceil < r$ so that in the limit the second term in \eqref{eq:greens-series-rapid-R_D} does not converge. This is easily fixed by letting $k_{\max}=r$ when $s_2$ is sufficiently close to $\tfrac{1}{2r}$ and we use this in our numerical computations. In addition our numerical calculations indicate that $D_R(s_2)\rightarrow+\infty$ as $s_2\rightarrow\bigl(\tfrac{1}{2(r+1)}\bigr)^+$ and $s_2\rightarrow \bigl(\tfrac{1}{2r}\bigr)^-$ for odd values of $k\geq 1$. This diverging behaviour can be explicitly characterized by balancing dominant terms in the series \eqref{eq:greens-series-rapid-R_D}. Specifically by noting that $\Gamma(-z)\sim -(1-z)^{-1}$ as $z\rightarrow 1$ we deduce that $\mathfrak{a}_{rs_2}\sim \pi^{-1}(1-2rs_2)^{-1}$ as $s_r\rightarrow \tfrac{1}{2r}$. Assuming that $D=D_R(s_2)\gg 1$ in \eqref{eq:greens-series-rapid-R_D} and balancing dominant terms then implies that
\begin{equation}
	D_R(s_2)\sim \begin{cases} \bigl(\frac{2}{\pi(1-2rs_2)}\bigr)^{1/r}, & \text{as}\quad s_2\rightarrow \bigl(\tfrac{1}{2r}\bigr)^{-}\quad\text{ for odd $r$},\\ \bigl(\frac{2}{\pi(2rs_2-1)}\bigr)^{1/r}, & \text{as}\quad s_2\rightarrow \bigl(\tfrac{1}{2r}\bigr)^{+}\quad \text{for even $r$}.  \end{cases}
\end{equation}
which we plot using dashed lines in Figure \ref{fig:D_R}. This diverging behaviour of the threshold $D_R(s_2)$ is suggestive of an alternative scaling that arises in this limit. Indeed the appearance of a logarithmic singularity at values of $s_2=\tfrac{1}{2r}$ for integer $r\geq 1$ suggests that an additional small parameter $\nu=-\tfrac{1}{\log \varepsilon}$ must be incorporated into the asymptotic theory, likely leading to alternative distinguished asymptotic regimes for the diffusivity.

We conclude by remarking that the negativity of $R_D(0)$ for $D<D_R(0)$ and $s_2\in(\tfrac{1}{2(r+1)},\tfrac{1}{2r})$ for even values of $r\geq 1$ poses a challenge to the application of our asymptotic theory. Indeed, recalling the NAS \eqref{eq:NAS} we observe that $R_D(0)<0$ in this regime contradicts the positivity of $\mu(S)$. By restricting our attention to the case $D=D_0\varepsilon^{2s_2-1}$ this difficulty can be circumvented at least in theory when $\varepsilon\ll 1$ and for which $D>D_R(s_2)$. However to validate our asymptotic theory with numerical simulations we have to use a finite value of $\varepsilon>0$ which may lead to $D<D_R(s_2)$ especially as $s_2$ approaches any of the values for which $D_R(s_2)$ diverges. Such behaviour is not in the range of validity of our asymptotic theory and we will henceforth ignore it though we would be remiss to not at least suggest approaches for handling this issue. One possibility is to develop a higher order asymptotic theory though this falls out of the scope of this paper. An alternative approach is to consider a $\varepsilon$-dependent core problem \eqref{eq:core_problem} posed on the truncated domain $|y|<L/\varepsilon$ for some $L>0$ in which case negative values of $\mu(S)$ are permissible provided that the solution $V_c$ remains positive. This approach however has two major shortcomings: it requires an appropriate assignment for $V_c$ in $|y|\geq L/\varepsilon$ in order to have a well-posed problem, and the core problem will need to be recomputed anew for different values of $\varepsilon$.

\section{Linear Stability: The Large, $O(1)$, Eigenvalues}\label{sec:stability}

In this section we consider the linear stability on an $O(1)$ timescale of the $N$-spike equilibrium solution $u_e$ and $v_e$ constructed in Section \ref{sec:equilibrium} above. We proceed by substituting into \eqref{eq:frac-gm-full-system} the perturbed solutions $u = u_e + e^{\lambda t}\phi$ and $v = v_e + e^{\lambda t}\psi$ where $|\phi|,|\psi|\ll 1$. To linear order in $\phi$ and $\psi$ we then have the spectral problem
\begin{subequations}\label{eq:stability}
	\begin{align}[left=\empheqlbrace]
		&\lambda\phi + \varepsilon^{2s_1}(-\Delta)^{s_1}\phi + \phi - 2v_e^{-1}u_e\phi + v_e^{-2}u_e^2\psi = 0, & -1<x<1, \label{eq:stability-phi}\\
		&\tau\lambda\psi + D(-\Delta)^{s_2}\psi + \psi - 2u_e\phi = 0, & -1<x<1, \label{eq:stability-psi}\\
		&\phi(x+2)=\phi(x),\qquad \psi(x+2)=\psi(x),& -1<x<1,
	\end{align}
\end{subequations}
for which we seek $\lambda=O(1)$ eigenvalues. If $\Re(\lambda)>0$ (resp. $\Re(\lambda)<0$) then the $N$-spike equilibrium solution is linearly unstable (resp. stable) and we will commonly refer to such an eigenvalue as being unstable (resp. stable). Noting that the diffusivity $\varepsilon^{2s_1}\ll 1$ appearing in \eqref{eq:stability-phi} is asymptotically small we will use the method of matched asymptotic expansions to derive a globally coupled eigenvalue problem (GCEP) from which distinct modes of instabilities and their respective thresholds can be determined.

For each $i=1,..,N$ and $y=O(1)$ we begin by substituting
\begin{equation*}
	\phi(x_i+\varepsilon y) = \Phi_i^\varepsilon(y),\quad \psi(x_i+\varepsilon y) = \Psi_i^\varepsilon(y),
\end{equation*}
into \eqref{eq:stability} to get
\begin{subequations}
	\begin{align}[left=\empheqlbrace]
		& \lambda\Phi_i + (-\Delta)^{s_1}\Phi_i + \Phi_i - 2V_i^{-1}U_i\Phi_i + V_i^{-2}U_i^2\Psi_i = 0,&  -1+x_i <\varepsilon y< 1-x_i \\
		& \tau\lambda\varepsilon^{2s_2}D^{-1}\Psi_i + (-\Delta)^{s_2}\Psi + \varepsilon^{2s_2}D^{-1}\Psi_i - 2U_i\Phi_i = 0,&  -1+x_i <\varepsilon y< 1-x_i.
	\end{align}
\end{subequations}
Assuming that $D\gg O(\varepsilon^{2s_2})$ and exploiting the homogeneity of this system  we obtain the leading order asymptotic expansion
\begin{equation*}
	\Phi_i^\varepsilon \sim c_i\Phi_c^\lambda(y;S_i) + o(1),\qquad \Psi_i^\varepsilon \sim c_i\Psi_c^\lambda(y;S_i) + o(1),
\end{equation*}
where $\Phi_c^\lambda(y;S)$ and $\Psi_c^\lambda(y;S)$ satisfy
\begin{subequations}\label{eq:inner-stability-temp-0}
	\begin{align}[left=\empheqlbrace]
		&(-\Delta)^{s_1}\Phi_c^\lambda + \Phi_c^\lambda - 2V_c^{-1}U_c\Phi_c^\lambda + V_c^{-2}U_c^2\Psi_c^\lambda = -\lambda\Phi_c^\lambda , & -\infty<y<\infty, \\
		&(-\Delta)^{s_2}\Psi_c^\lambda - 2U_c\Phi_c^\lambda = 0, & -\infty<y<\infty, \label{eq:inner-stability-temp-0-psi}
	\end{align}
\end{subequations}
where we assume the general far-field behaviour
\begin{equation}\label{eq:inner-stability-temp-0-psi-lim}
	\Phi_c^\lambda \rightarrow 0,\quad \Psi_c^\lambda \sim B(\lambda,S) + o(1), \qquad \text{as}\quad |y|\rightarrow\infty.
\end{equation}
The undetermined constants $c_1,...,c_N$ correspond to distinct instability modes and moreover yield additional degrees of freedom with which we can normalize the behaviour of solutions to \eqref{eq:inner-stability-temp-0}. We can solve \eqref{eq:inner-stability-temp-0-psi} explicitly as
\begin{equation}\label{eq:Psi_c_temp}
	\Psi_c^\lambda(y;S) = B(\lambda,S) + 2\mathfrak{a}_{s_2}\int_{-\infty}^\infty \frac{U_c(z;S)\Phi_c^\lambda(z;S)}{|y-z|^{1-2s_2}}dz.
\end{equation}
Substituting this back into \eqref{eq:inner-stability-temp-0} then results in the inhomogeneous equation
\begin{subequations}
	\begin{equation}\label{eq:stability-scalar}
		\mathscr{M}\Phi_c^\lambda = \lambda\Phi_c^\lambda + B(\lambda,S)V_c^{-2}U_c^{2}
	\end{equation}
	where the nonlocal operator $\mathscr{M}=\mathscr{M}(S)$ is defined by
	\begin{equation}\label{eq:M-def}
		\mathscr{M}\Phi \equiv -(-\Delta)^{s_1}\Phi - \Phi_c^\lambda + 2V_c^{-1}U_c\Phi - 2\mathfrak{a}_{s_2}V_c^{-2}U_c^2\int_{-\infty}^{\infty}\frac{U_c(z)\Phi(z)}{|y-z|^{1-2s_2}}dz.
	\end{equation}
\end{subequations}
Observe that if $B(\lambda,S) = 0$ then $\lambda$ is an eigenvalue of $\mathscr{M}(S)$ and $\Phi_c^\lambda(y;S)$ the corresponding eigenfunction. If $\lambda$ is not an eigenvalue of $\mathscr{M}$ then we can uniquely solve \eqref{eq:stability-scalar} for $\Phi_c$ which gives
\begin{equation}\label{eq:phi_c_explicit}
	\Phi_c^\lambda(y,S) = B(\lambda,S)(\mathscr{M}-\lambda)^{-1}\bigl(V_c(y;S)^{-1}U_c(y;S)\bigr)^{2}.
\end{equation}
In addition we make note of the far-field behaviour
\begin{equation*}
	\Psi_c^\lambda  \sim B(\lambda,S) + 2\mathfrak{a}_{s_2}|y|^{2s_2-1}\int_{-\infty}^\infty U_c(z;S)\Phi_c^\lambda(z;S)dz\qquad\text{as}\quad |y|\rightarrow\infty,
\end{equation*}
in which the second term vanishes if $\Phi_c^\lambda(\cdot;S)$ is odd. On the other hand if $\Phi_c^\lambda(y;S)$ is not odd then using the additional degrees of freedom granted by $c_1,...,c_N$ we can normalize $\Phi_c^\lambda(y;S)$ such that
\begin{equation}\label{eq:phi_c_normalization}
	\int_{-\infty}^\infty U_c(z;S)\Phi_c^\lambda(z;S)dz = \frac{1}{2\mathfrak{a}_{s_2}},
\end{equation}
with which we get the far-field behaviour
\begin{equation}
	\Psi_c^\lambda  \sim B(\lambda,S) + |y|^{2s_2-1}\qquad \text{as}\quad |y|\rightarrow\infty.
\end{equation}
Note in addition that such a normalization fixes $B(\lambda,S)$which we obtain by multiplying \eqref{eq:phi_c_explicit} by $U_c(y;S)$ and integrating to get
\begin{equation}\label{eq:B-expression}
	B(\lambda,S) = \biggl( 2\mathfrak{a}_{s_2}\int_{-\infty}^\infty U_c(z;S)(\mathscr{M}-\lambda)^{-1}\bigl(V_c(z;S)^{-1}U_c(z;S)\bigr)^{2}dz\biggr)^{-1}.
\end{equation}

We next consider the distributional limit
\begin{equation*}
	2u_e\phi\rightarrow 2\varepsilon^{1-2s_2}D\sum_{i=1}^{N}c_i\int_{-\infty}^\infty U_c(y;S_i)\Phi_c^\lambda(y;S_i)dy\delta(x-x_i)
\end{equation*}
from which we observe that any $i\in\{1,...,N\}$ corresponding to odd-valued $\Phi_c^\lambda(y;S_i)$ will not contribute to the outer problem. A modification of the proceeding calculations in which we keep track of such odd-valued $\Phi_c(\cdot,S_i)$ reveals that such terms do not contribute to the linear stability over an $O(1)$ timescale, though they do contribute to drift instabilities considered in \S\ref{sec:slow-dynamics} below. Without loss of generality we therefore assume that none of the $\Phi_c^\lambda(y;S_i)$ ($i=1,...,N$) are odd-valued. Using the normalization \eqref{eq:phi_c_normalization} we thus obtain the outer problem
\begin{subequations}
	\begin{equation}\label{eq:stability-outer-pde}
		(-\Delta)^{s_2}\psi + \frac{1+\tau\lambda}{D}\psi = \varepsilon^{1-2s_2}\mathfrak{a}_{s_2}^{-1}\sum_{i=1}^Nc_i\delta(x-x_i), \qquad x\in(-1,1)\setminus\{x_1,...,x_N\},
	\end{equation}
	together with the singular behaviour
	\begin{equation}\label{eq:stability-outer-limit}
		\psi(x)\sim c_i\bigl(B(\lambda,S_i)+\varepsilon^{1-2s_2}|x-x_i|^{2s_2-1}\bigr),\quad x\rightarrow x_i,
	\end{equation}
\end{subequations}
for each $i=1,...,N$. The solution to \eqref{eq:stability-outer-pde} can then be expressed in terms of the Green's function satisfying \eqref{eq:greens-equation} as
\begin{equation}
	\psi(x) =\frac{\varepsilon^{1-2s_2}}{\mathfrak{a}_{s_2}}\sum_{i=1}^Nc_iG_{D_\lambda}(x-x_i),\qquad D_\lambda \equiv \frac{D}{1+\tau\lambda}.
\end{equation}
Using \eqref{eq:greens-series-rapid} the matching condition \eqref{eq:stability-outer-limit} then becomes
\begin{equation*}
	\begin{split}
		c_i\bigl(B(\lambda,S_i)+\varepsilon^{1-2s_2}|x-x_i|^{2s_2-1}\bigr) \sim \frac{\varepsilon^{1-2s_2}}{\mathfrak{a}_{s_2}}\biggl( c_i \sum_{k=1}^{k_{\max}}\frac{(-1)^{k-1}\mathfrak{a}_{ks_2}}{D_\lambda^{k-1}}|x-x_i|^{2ks_2-1} + c_iR_{D_\lambda}&(0) \\
		+ \sum_{j\neq i}c_j G_{D_\lambda}(|x_i-x_j|) + O(|x-x_i|)   & \biggr),
	\end{split}
\end{equation*}
as $x\rightarrow x_i$ for each $i=1,...,N$. The leading order singular behaviour immediately balances whereas balancing the leading order constants for each $i=1,...,N$ yields the GCEP
\begin{subequations}\label{eq:gcep}
	\begin{equation}\label{eq:gcep-equation}
		\mathcal{B}(\lambda,\pmb{S})\pmb{c} = \varepsilon^{1-2s_2}\mathfrak{a}_{s_2}^{-1}\mathcal{G}_{D_\lambda}\pmb{c},
	\end{equation}
	where $\pmb{S} = (S_1,...,S_N)^T$, $\pmb{c}=(c_1,...,c_N)^T$, and $\mathcal{B}(\lambda,\pmb{S})$ and $\mathcal{G}_{D_\lambda}$ are $N\times N$ matrices with entries
	\begin{equation}\label{eq:gcep-matrices}
		(\mathcal{B}(\lambda,\pmb{S}))_{ij} = \begin{cases}B(\lambda,S_i), &i=j,\\ 0,& i\neq j,\end{cases}\qquad (\mathcal{G}_{D_\lambda})_{ij} = \begin{cases}R_{D_\lambda}(0), &i=j,\\ G_{D_\lambda}(|x_i-x_j|),& i\neq j.\end{cases}
	\end{equation}
\end{subequations}

In the following subsections we consider the leading order behaviour of the GCEP \eqref{eq:gcep} when $D\ll O(\varepsilon^{2s_2-1})$ and when $D=D_0\varepsilon^{2s_2-1}$. This leading order behaviour will provide insights into the modes of instabilities arising in each of these asymptotic regimes. However, as in the case of the NAS \eqref{eq:NAS} analyzed in \S \ref{sec:equilibrium} we remind the reader that the errors in such leading order approximations will typically be unacceptably large for moderately small value of $\varepsilon>0$. Therefore when we perform full numerical simulations of \eqref{eq:frac-gm-full-system} to support our asymptotic predictions in \S \ref{sec:simulations} we will be numerically computing the relevant stability thresholds from the $\varepsilon$-dependent GCEP \eqref{eq:gcep} directly.

\begin{figure}[t!]
	\centering 
	\begin{subfigure}{0.25\textwidth}
		\includegraphics[width=\linewidth]{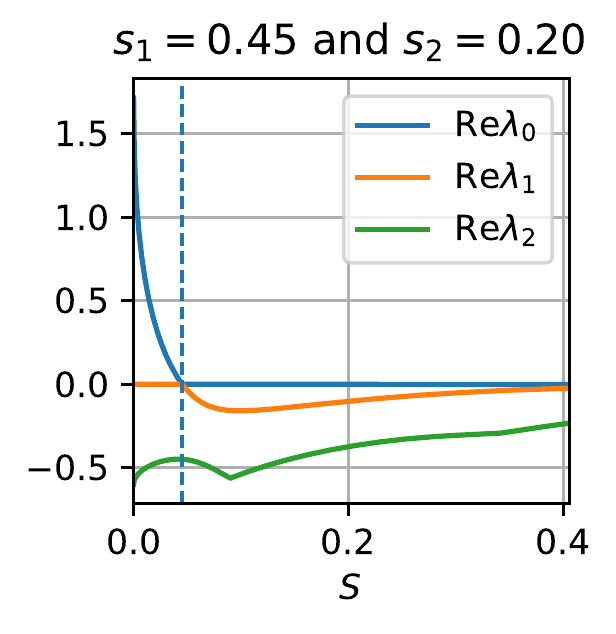}
		\caption{}
		\label{fig:M_spec_su_0.450_sv_0.200}
	\end{subfigure}\hfil 
	\begin{subfigure}{0.25\textwidth}
		\includegraphics[width=\linewidth]{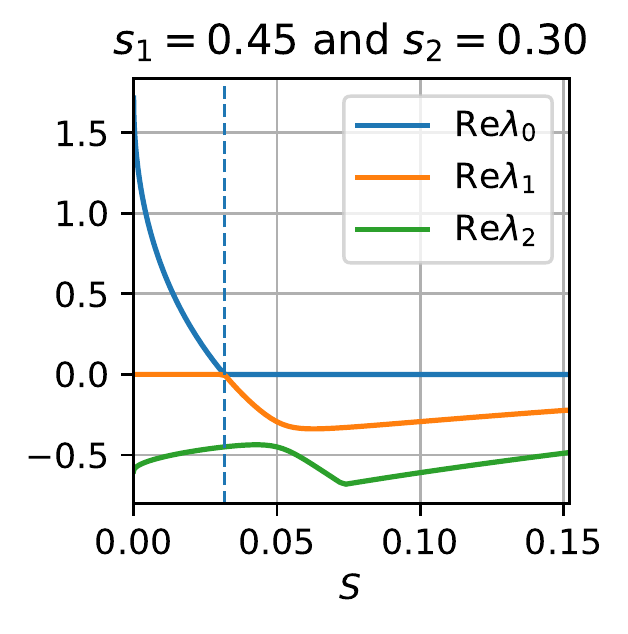}
		\caption{}
		\label{fig:M_spec_su_0.450_sv_0.300}
	\end{subfigure}\hfil 
	\begin{subfigure}{0.25\textwidth}
		\includegraphics[width=\linewidth]{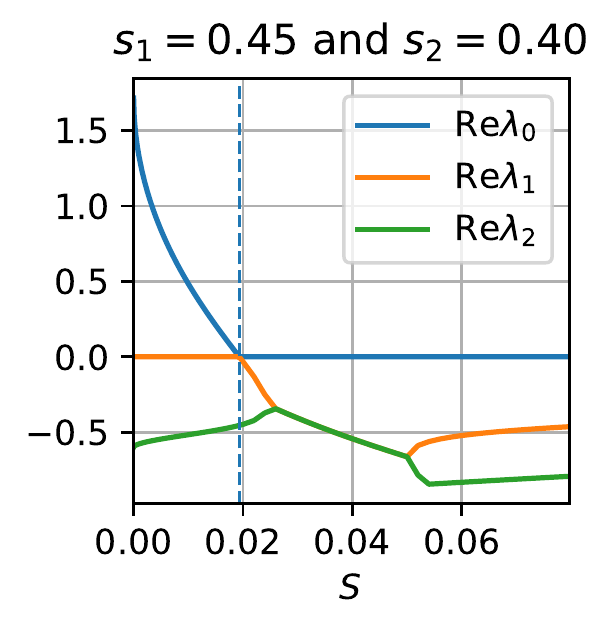}
		\caption{}
		\label{fig:M_spec_su_0.450_sv_0.400}
	\end{subfigure}\hfil 
	\begin{subfigure}{0.25\textwidth}
		\includegraphics[width=\linewidth]{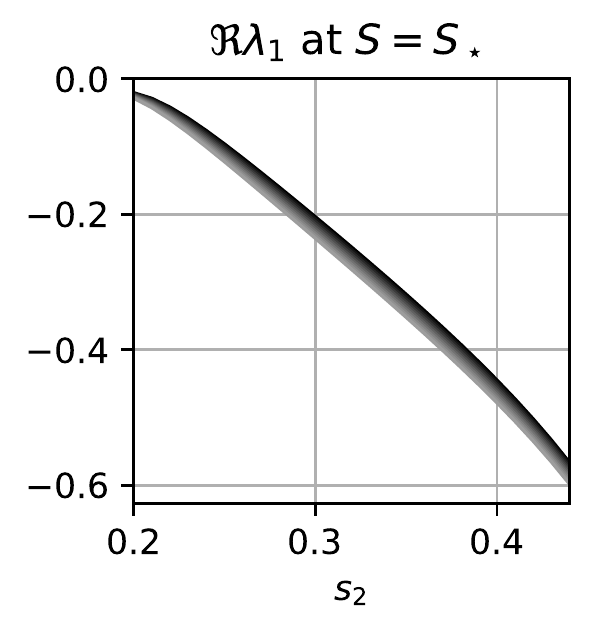}
		\caption{}
		\label{fig:M_spec_S_star}
	\end{subfigure}
	\caption{(A)-(C) Plots of the real part of the eigenvalues of $\mathscr{M}$ versus $0<S<S_\star$ at the indicated values of $s_1$ and $s_2$. In each plot the dashed vertical line corresponds to the value of $S=S_\text{crit}$ at which $\mu'(S_\text{crit})=0$. (D) Plots of $\Re\lambda_1$ at $S=S_\star$ versus $0.2<s_2<0.5$ for different values of $s_1$. The darkest (uppermost) and lightest (lowermost) curves correspond to values of $s_1=0.4$ and $s_1=0.49$ respectively, with the intermediate curves being separated by intervals of $0.01$.}
	\label{fig:M_spectrum}
\end{figure}

\subsection{Linear Stability in the $D\ll O(\varepsilon^{2s_2-1})$ Regime}\label{subsec:stability-reg-1}

We consider first perhaps the simplest case which is when $D\ll O(\varepsilon^{2s_2-1})$ or $D=O(1)$ in particular. From our discussion in \S \ref{sec:equilibrium} we know that in this case all $N$-spike solutions are symmetric to leading order in $\varepsilon\ll 1$ with $S_1=...=S_N=S_\star$. Moreover in this regime the GCEP \eqref{eq:gcep} reduces to the single scalar equation $B(\lambda,S_\star)=0$. Therefore $\lambda$ must be an eigenvalue of the operator $\mathscr{M}(S_\star)$ defined in \eqref{eq:M-def} above with the far-field asymptotics \eqref{eq:inner-stability-temp-0-psi-lim}.

By numerically calculating the spectrum of $\mathscr{M}$ as outlined in Appendix \ref{app:numerical-implementation} we have observed that the dominant eigenvalue is always stable when $S=S_\star$. In Figures \ref{fig:M_spec_su_0.450_sv_0.200}-\ref{fig:M_spec_su_0.450_sv_0.400} we plot the three largest eigenvalues of $\mathscr{M}$. Note that $\lambda=0$ is always an eigenvalue of $\mathscr{M}$ but that this corresponds to the \textit{translational mode} $\Phi = \partial U_c/\partial y$ and $\Psi = \partial V_c/\partial y$ whose analysis is deferred to \S\ref{sec:slow-dynamics} below. Therefore $\lambda_1$ is the appropriate eigenvalue for which \eqref{eq:inner-stability-temp-0-psi-lim} is satisfied when $S=S_\star$ and the plot of $\Re\lambda_1$ at $S=S_\star$ versus $0.2<s_2<0.5$ for select values of $s_1$ in Figure \ref{fig:M_spec_S_star} indicates that this eigenvalue is always stable. In summary, when $D\ll O(\varepsilon^{2s_2-1})$ all $N$-spike solutions are linearly stable to leading order in $\varepsilon\ll 1$.

\subsection{Linear Stability in the $D=O(\varepsilon^{2s_2-1})$ Regime}\label{subsec:stability-reg-2}

In this section we consider the case when $D=D_0\varepsilon^{2s_2-1}$ and for which we will consider the case $D_0\rightarrow \infty$ as a special case. Using the large $D$ asymptotics of the Green's function \eqref{eq:greens-large-D} the GCEP \eqref{eq:gcep} becomes to leading order in $\varepsilon\ll 1$
\begin{equation}\label{eq:gcep-D0}
	\mathcal{B}(\lambda,\pmb{S})\pmb{c} = \frac{\kappa}{1+\tau\lambda}\mathcal{E}_N\pmb{c},\qquad \mathcal{E}_N=\frac{1}{N}\pmb{e}\pmb{e}^T,
\end{equation}
where $\pmb{e}=(1,\cdots,1)^T$ and where we remind the reader that $\kappa = ND_0/(2\mathfrak{a}_{s_2})$. In this section we will consider the linear stability of both the symmetric and asymmetric solutions described in \S \ref{subsec:sym_asym_regime}. We demonstrate that the symmetric $N$-spike solutions are susceptible to two types of instabilities: oscillatory instabilities arising through a Hopf bifurcation, and competition instabilities arising through a zero eigenvalue crossing. On the other hand we will show that asymmetric solutions are always linearly unstable with respect to competition instabilities. The proceeding analysis closely follows previous work done on the three-dimensional Gierer-Meinhardt model \cite{gomez_2021} with its successful adaptation to the present one-dimensional fractional case being due to the properties of $\mu(S)$ described in \S \ref{subsec:core-problem}.

\begin{figure}[t!]
	\centering 
	\begin{subfigure}{0.25\textwidth}
		\includegraphics[scale=0.675]{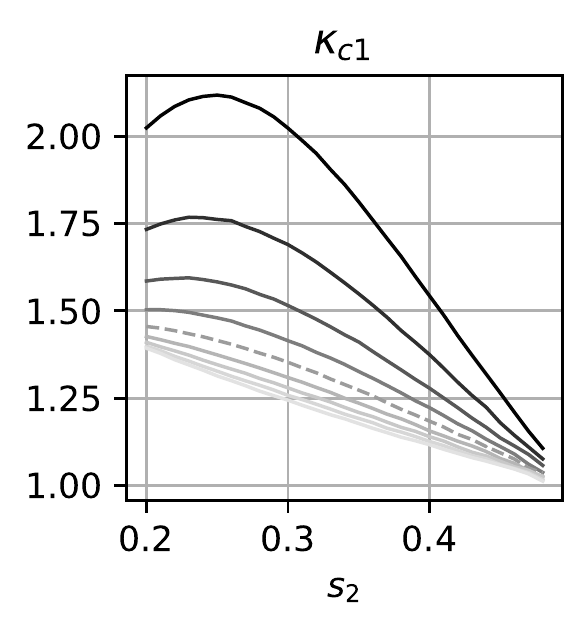}
		\caption{}
		\label{fig:comp_leading_order}
	\end{subfigure}\hfil 
	\begin{subfigure}{0.25\textwidth}
		\includegraphics[scale=0.675]{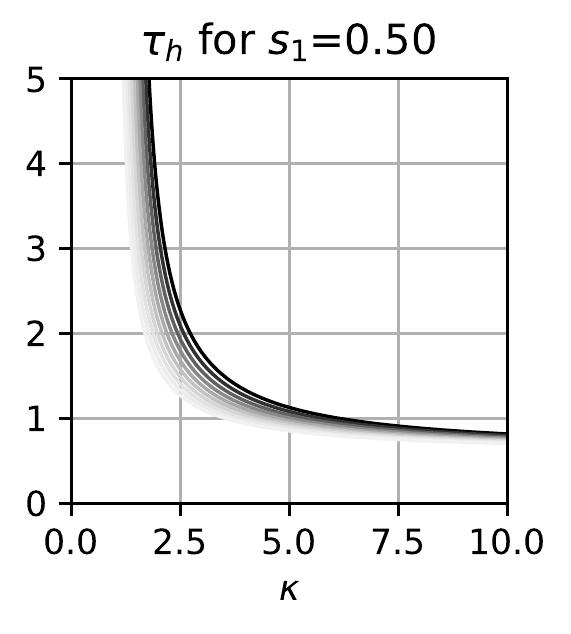}
		\caption{}
		\label{fig:hopf_leading_order_su_0.500}
	\end{subfigure}\hfil 
	\begin{subfigure}{0.25\textwidth}
		\includegraphics[scale=0.675]{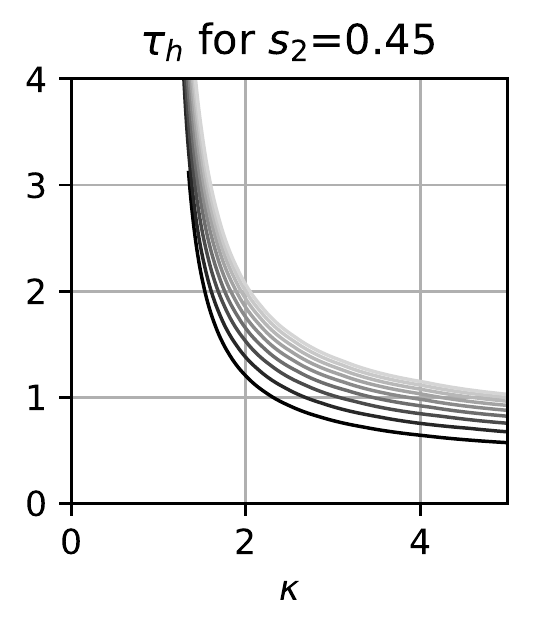}
		\caption{}
		\label{fig:hopf_leading_order_sv_0.450}
	\end{subfigure}%
	\begin{subfigure}{0.25\textwidth}
		\includegraphics[scale=0.675]{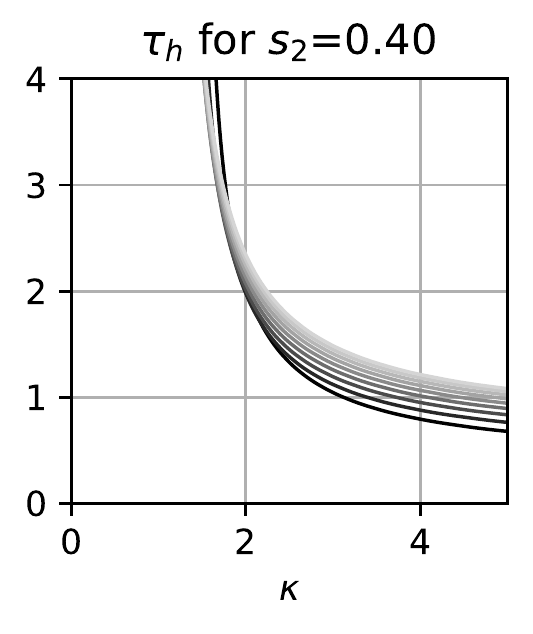}
		\caption{}
		\label{fig:hopf_leading_order_sv_0.400}
	\end{subfigure}
	\caption{(A) Leading order competition instability threshold $\kappa_{c1}$ versus $s_2$. The darkest (uppermost) and lightest (lowermost) curves correspond to values of $s_1=0.3$ and $s_1=0.7$ respectively, with the intermediate curves corresponding to increments of $0.05$. The dashed line corresponds to $s_1=0.5$. (B) The leading order Hopf bifurcation threshold $\tau_h$ at $s_1=0.5$. The darkest and lightest curves correspond to $s_2=0.3$ and $s_2=0.48$ respectively, with the intermediate curves corresponding to $0.02$ increments in $s_2$. (C-D) The leading order Hopf bifurcation threshold $\tau_h$ at indicated values of $s_2=0.45$ and $s_2=0.40$. The darkest and lightest curves in both plots correspond to $s_1=0.3$ and $s_1=0.7$ respectively, with the intermediate curves corresponding to $0.05$ increments in $s_1$.}
	\label{fig:hopf-bifurcations}
\end{figure}

\subsubsection{The Shadow Limit $D_0\rightarrow\infty$}

Before analyzing \eqref{eq:gcep-D0} in general we first consider the \textit{shadow} limit obtained by letting $D_0\rightarrow\infty$. As discussed in \S \ref{subsec:sym_asym_regime} all $N$-spike solutions are then symmetric with $S_c\sim (\mathfrak{b}_{s_1}\mathfrak{a}_{s_2}\kappa^2)^{-1}\ll 1$. Moreover by using the small $S$ asymptotics \eqref{eq:small-S-asymptotics} and the definition of $\mathscr{M}$ given in \eqref{eq:M-def} we readily deduce that
\begin{equation}
	\mathscr{M}\Phi\sim \mathscr{L}\Phi+O(\kappa^{-1}),\qquad\mathscr{L}\Phi\equiv -(-\Delta)^{s_1}\Phi-\Phi+2w_{s_1}\Phi.
\end{equation}
From \eqref{eq:B-expression} we obtain
\begin{equation*}
	B\bigl(\lambda,(\mathfrak{b}_{s_1}\mathfrak{a}_{s_2}\kappa^2)^{-1}\bigr) \sim \frac{\kappa \int_{-\infty}^\infty w_{s_1}(y)^2dy}{ 2 \int_{-\infty}^{\infty} w_{s_1}(y)\bigl(\mathscr{L}-\lambda\bigr)^{-1}w_{s_1}(y)^2 dy},
\end{equation*}
with which \eqref{eq:gcep-D0} becomes
\begin{equation}\label{eq:gcep-D0-infty}
	\frac{\int_{-\infty}^\infty w_{s_1}(y)^2dy}{ 2 \int_{-\infty}^{\infty} w_{s_1}(y)\bigl(\mathscr{L}-\lambda\bigr)^{-1}w_{s_1}(y)^2 dy}\pmb{c} = \frac{1}{1+\tau\lambda}\mathcal{E}_N\pmb{c}.
\end{equation}
Note that the shadow limit case is independent of $s_2$. If $N\geq 2$ then this equation is satisfied if $\pmb{c}$ is any \textit{competition} mode satisfying $c_1+...+c_N=0$ and $\lambda$ is the dominant eigenvalue of $\mathscr{L}$. Since the dominant eigenvalue, $\Lambda_0$, of $\mathscr{L}$ has a positive real part (see Section 4 of \cite{frank_2013_uniqueness}) we therefore deduce that multi-spike solutions in the $D_0\rightarrow\infty$ are always linearly unstable. We refer to the resulting instabilities as \textit{competition} instabilities since the condition $c_1+...+c_N=0$ leads to the growth of some spikes at the expense of the decay of others. If on the other hand $N=1$ then \eqref{eq:gcep-D0-infty} becomes the scalar nonlocal eigenvalue problem (NLEP)
\begin{equation}\label{eq:gcep-D0-infty-scalar}
	1 - \frac{2}{1+\tau\lambda}\frac{\int_{-\infty}^{\infty} w_{s_1}(y)\bigl(\mathscr{L}-\lambda\bigr)^{-1}w_{s_1}(y)^2 dy}{\int_{-\infty}^\infty w_{s_1}(y)^2dy} = 0.
\end{equation}
Following the arguments used in the classical one-dimensional Gierer-Meinhardt system in \cite{ward_2003_shadow} it can be shown that there is a Hopf bifurcation threshold $\tau_h^\infty$ such that all eigenvalues are stable if $\tau<\tau_h^\infty$ whereas there is exactly one complex conjugate pair of unstable eigenvalues when $\tau>\tau_h^\infty$. To calculate this Hopf bifurcation threshold we substitute the purely imaginary eigenvalue $\lambda=i\lambda_I$ into \eqref{eq:gcep-D0-infty-scalar} and isolate real and imaginary parts to get the system
\begin{equation}
	\begin{cases}
		2\Re \bigl(\int_{-\infty}^\infty w_{s_1}(y)\bigl(\mathscr{L}-i\lambda_I\bigr)^{-1}w_{s_1}(y)^2dy\bigr) = \int_{-\infty}^\infty w_{s_1}(y)^2dy,& \\
		2\Im \bigl(\int_{-\infty}^\infty w_{s_1}(y)\bigl(\mathscr{L}-i\lambda_I\bigr)^{-1}w_{s_1}(y)^2dy\bigr) = \tau\lambda_I\int_{-\infty}^\infty w_{s_1}(y)^2dy.&
	\end{cases}
\end{equation}
We can then find the Hopf bifurcation threshold by numerically solving the first equation for $\lambda_I=\lambda_h^\infty(s_1)$ and then substituting into the second equation to get a value for the Hopf bifurcation threshold $\tau=\tau_h^\infty(s_1)$ (for plots of $\tau_h^\infty$ and $\lambda_h^\infty$ see Figure 1A of \cite{gomez_2022}).

In summary, when $D_0\rightarrow\infty$ multi-spike solutions are always linearly unstable due to competition instabilities whereas single spike solutions are linearly stable provided $\tau$ does not exceed the numerically calculated Hopf bifurcation threshold $\tau=\tau_h^\infty(s_1)$. We now address the question of what happens to these competition instability and Hopf bifurcation thresholds for symmetric $N$-spike solutions when $D_0$ is finite.

\subsubsection{Stability Threshold for Symmetric Solutions}\label{subsubsec:symmetric-stability}

We now consider the linear stability of symmetric $N$-spike solutions for which we remind the reader that $S_1=...=S_N=S_c$ where $S_c$ satisfies \eqref{eq:NAS-leading-order-distinguished-regime-symmetric}. To determine the linear stability of these solutions with respect to competition modes we first let $\pmb{c}$ satisfy $c_1+...+c_N=0$. It follows that \eqref{eq:gcep-D0} reduces to $B(\lambda,S_c)=0$ so that $\lambda$ is an eigenvalue of $\mathscr{M}$ whose eigenfunction satisfies the far-field behaviour \eqref{eq:inner-stability-temp-0-psi-lim}. Numerical calculations of the spectrum of $\mathscr{M}$ indicate that this eigenvalue is positive if $S<S_\text{crit}$ whereas it is negative if $S_\text{crit}<S<S_\star$ (see Figure \ref{fig:M_spec_su_0.450_sv_0.200}-\ref{fig:M_spec_su_0.450_sv_0.400}). From \eqref{eq:NAS-leading-order-distinguished-regime-symmetric} and the plots of $\mu(S)$ in Figure \ref{fig:core-far-field} we therefore conclude that symmetric $N$-spike solutions are linearly stable with respect to competition instabilities if $\kappa<\kappa_{c_1}$ and linearly unstable otherwise. Recall that $\kappa_{c1}=\mu(S_\text{crit})/S_\text{crit}$ was previously encountered in \eqref{eq:asym-existence-1} when considering the existence of asymmetric solutions. From the definition of $\kappa$ we can alternatively express this threshold for $\kappa$ as a threshold for the diffusivity
\begin{equation}
	D_{0,\text{comp}} = \frac{2\mathfrak{a}_{s_2}}{N}\frac{\mu(S_\text{crit})}{S_\text{crit}}.
\end{equation}
As in the classical Gierer-Meinhardt model (and other singularly perturbed reaction diffusion systems) the stability of multi-spike solutions decreases as the number of spikes increases. In Figure \ref{fig:comp_leading_order} we plot the leading order competition instability threshold $\kappa_{c_1}$ versus $s_2$ for several values of $s_1$. From which we observe that the competition instability threshold is monotone decreasing in $s_1$. Moreover the threshold decreases monotonically with $s_2$ for $s_1>0.5$ whereas we see that for $s_1<0.5$ it is non-monotone, increasing for smaller values of $s_2$ and then decreasing.

Since $c_1+...+c_N=0$ spans an $(N-1)$-dimensional subspace of $\mathbb{R}^N$ it remains only to consider the \textit{synchronous} modes $\pmb{c}$ for which $c_1=...=c_N$. By substituting such a synchronous mode $\pmb{c}$ into \eqref{eq:gcep-D0} we get
\begin{equation}\label{eq:gcep-D0-synch}
	B(\lambda,S_c) - \frac{\kappa}{1+\tau\lambda} = 0.
\end{equation}
First we show that $\lambda=0$ is not a solution of \eqref{eq:gcep-D0-synch}. Differentiating the core problem \eqref{eq:core_problem} with respect to $S$ we first make the observation that $B(0,S_c)=\mu'(S_c)$ so that after solving \eqref{eq:NAS-leading-order-distinguished-regime-symmetric} for $\kappa$ \eqref{eq:gcep-D0-synch} becomes
\begin{equation*}
	S_c\mu'(S_c) - \mu(S_c) = 0,
\end{equation*}
for which we claim the left-hand-side is strictly negative. This is clearly true for $S_c\geq S_\text{crit}$ since $\mu'(S_c)<0$ (see Figure \ref{fig:core-far-field}). On the other hand for $S_c<S_\text{crit}$ the claim follows by observing that the derivative of the left-hand-side is $\mu''(S_c)<0$ whereas the small $S$ asymptotics \eqref{eq:small-S-asymptotics} imply $S_c\mu'(S_c)-\mu(S_c)\sim -\tfrac{1}{2}\sqrt{S_c/(\mathfrak{b}_{s_1}\mathfrak{a}_{s_2})}<0$ as $S_c\rightarrow 0^+$. Therefore instabilities with respect to the synchronous mode must arise through a Hopf bifurcation. Seeking purely imaginary eigenvalues $\lambda=i\lambda_I$ and separating the real and imaginary parts of \eqref{eq:gcep-D0-synch} we obtain the system
\begin{equation}
	\frac{|B(i\lambda_I,S_c)|^2}{\text{Re}[B(i\lambda_I,S_c)]} - \frac{\mu(S_c)}{S_c} = 0,\qquad \tau = -\frac{\text{Im}[B(i\lambda_I,S_c)]}{\lambda_I\text{Re}[B(i\lambda_I,S_c)]},
\end{equation}
which we can numerically solve for $\lambda_I=\lambda_h(S_c,s_1,s_2)$ from the first equation and then calculate the Hopf bifurcation threshold $\tau=\tau_h(S,s_1,s_2)$ from the second equation. The first equation is numerically solved using Newton's method by slowly increasing $S_c$ starting from a small value for which the shadow-limit value $\lambda_h^\infty(s_1,s_2)$ provides an accurate initial guess. The resulting (leading order) Hopf bifurcation thresholds $\tau_h(S_c,s_1,s_2)$ and associated eigenvalue $\lambda_h(S_c,s_1,s_2)$ are plotted in Figures \ref{fig:hopf_leading_order_su_0.500} and \ref{fig:hopf_leading_order_sv_0.450}. In all cases we observe that the Hopf bifurcation threshold diverges toward $+\infty$ as $S_c\rightarrow S_\text{crit}^-$ and this is a consequence of the nonlocal operator $\mathscr{M}$ having a zero eigenvalue for this value of $S_c$. As discussed in \S \ref{sec:simulations} the Hopf bifurcation threshold can be extended beyond this critical value of $S_c=S_\text{crit}$ but this requires calculating the Hopf bifurcation threshold from the $\varepsilon$-dependent GCEP \eqref{eq:gcep} directly.

\subsubsection{Asymmetric $N$-Spike Solutions are Always Unstable}

We conclude this section on the leading order stability of multi-spike solutions by adapting the analysis for the three-dimensional Gierer-Meinhardt model \cite{gomez_2021} to show that the asymmetric solutions of \S \ref{subsec:sym_asym_regime} are linearly unstable. The analysis follows closely that previously done in \cite{gomez_2021} so we provide only an outline, highlighting the key properties of $\mu(S)$ which allow the adaptation of the analysis in \cite{gomez_2021}.

The key idea in showing that the asymmetric solutions are always linearly unstable is to construct specific modes $\pmb{c}$ for which an instability is guaranteed. Assuming without loss of generality that $S_1=...=S_n=S_r>S_\text{crit}$ and $S_{n+1}=...=S_N=S_l(S_r)$ the leading order GCEP \eqref{eq:gcep-D0} becomes
\begin{equation}\label{eq:asy-stab}
	\begin{pmatrix} B(\lambda,S_r)\mathcal{I}_n & \mathcal{O}_{n,N-n} \\ \mathcal{O}_{N-n,n} & B(\lambda,S_l(S_r))\mathcal{I}_{N-n} \end{pmatrix}\pmb{c} = \frac{\kappa}{1+\tau\lambda}\mathcal{E}_N\pmb{c},
\end{equation}
where $\mathcal{I}_{n}$ is the $n\times n$ identity matrix and $\mathcal{O}_{n,m}$ is the $n\times m$ zero matrix. If $1\leq n\leq N-2$ then the mode $\pmb{c}$ with $c_1=...=c_n=0$ and $c_{n+1}+...+c_N=0$ is immediately seen to be unstable since \eqref{eq:asy-stab} reduces to $B(\lambda,S_l(S_r))=0$ and $S_l(S_r)<S_\text{crit}$ implies the dominant eigenvalue of $\mathscr{M}$ with the far-field behaviour \eqref{eq:inner-stability-temp-0-psi-lim} is unstable (see Figures \ref{fig:M_spec_su_0.450_sv_0.200}-\ref{fig:M_spec_su_0.450_sv_0.400}). Thus, competition \textit{between} the $N-n$ \textit{small} spikes is always destabilizing.

Since the modes considered above are trivial when $n=N-1$ a different argument must be used to show the instability of asymmetric solutions in this case. In particular if $n\geq N-n$ then it it was shown in \cite{gomez_2021} that unstable modes of the form $c_1=...=c_{n}=c_r$ and $c_{n+1}=...=c_N=c_l$ can always be found. The argument used in \cite{gomez_2021} relies on the some key properties of $\mu(S)$ and the spectrum of $\mathscr{M}$. First it requires that $\mu'(S_l)>0$, $\mu'(S_r)<0$, and $S_l'(S_r)>-1$ all of which our numerical calculations indicate are satisfied in the present case (see Figures \ref{fig:core-far-field} and \ref{fig:Sl_prime_plots}). Second it requires that the eigenvalues of $\mathscr{M}$ satisfying the appropriate far-field behaviour \eqref{eq:inner-stability-temp-0-psi-lim} are stable for $S_r>S_\text{crit}$. Since this condition is likewise satisfied (see for example Figures \ref{fig:M_spec_su_0.450_sv_0.200}-\ref{fig:M_spec_su_0.450_sv_0.400}) we are able to adapt the argument from \cite{gomez_2021}  and therefore conclude that all asymmetric $N$-spike solutions are linearly unstable.

\section{Slow Spike Dynamics and the Equilibrium Configurations}\label{sec:slow-dynamics}

It is well known that localization solutions to a variety of singularly perturbed reaction diffusion systems exhibit slow dynamics \cite{iron_2002,tzou_2017_schnakenberg,gomez_2021}. Similar behaviour has likewise been observed for the fractional Gierer-Meinhardt system in one-dimension when $s_2>1/2$ \cite{gomez_2022}. In this section we establish that these slow dynamics persist in the case $1/4<s_2<1/2$ albeit at a different time scale. The dynamics in this parameter regime share qualitative similarities with their classical counterparts in one-, two-, and three-dimensions. Specifically the dynamics are determined by the gradient of the Green's function which leads to a mutual repulsion between spikes. The derivation of the slow dynamics however more closely resembles that for the three-dimensional Gierer-Meinhardt and Schnakenberg systems \cite{gomez_2021,tzou_2017_schnakenberg} owing to, as in previous sections, the strong coupling between the activator and inhibitor in the inner region. In this section we formally derive the equations governing the slow dynamics of a multi-spike solution and in \S \ref{subsec:simulations-slow-dynamics} we validate our theory with numerical examples of two-spike solutions (see also Figure \ref{fig:slow-dynamics-waterfall} in \S \ref{sec:intro}).

Slow spike dynamics are the result of higher order corrections so we begin by first substituting  $x=x_i+\varepsilon y$ into \eqref{eq:v-outer} to obtain the higher order expansion
\begin{gather*}
	v\sim \varepsilon^{1-4s_2}D\mathfrak{a}_{s_2}^{-1}\biggl[ S_i^\varepsilon\biggl(\sum_{k=1}^{k_\text{max}} \frac{(-1)^{k+1}\mathfrak{a}_{ks_2}}{D^{k-1}}|y|^{2ks_2-1}\varepsilon^{2ks_2-1} + R_D(0) + \varepsilon R_D'(0)y + O(\varepsilon^2)\biggr) \\
	+ \sum_{j\neq i} S_j^\varepsilon G_D(x_i-x_j) + \varepsilon \beta_{i,1} y + O(\varepsilon^2)\biggr],
\end{gather*}
where we have defined
\begin{equation}
	\beta_{i,1} = \sum_{j\neq i} S_j^\varepsilon G_D'(x_i-x_j).
\end{equation}
Note that due to the periodic boundary conditions we have $R_D'(0)=0$. This implies, as subsequent calculations will show, that the dynamics of individual spikes are independent of their absolute position in the interval $-1<x<1$ but are due solely to interactions between spikes. Next we refine the inner expansion \eqref{eq:inner-expansion-0} by letting
\begin{subequations}\label{eq:inner-expansion-1}
	\begin{gather}
		u(x_i+\varepsilon y) \sim \varepsilon^{-2s_2}D\bigl(U_i^\varepsilon + \Phi_i^\varepsilon + \text{h.o.t.} \bigr),\\
		v(x_i+\varepsilon y) \sim \varepsilon^{-2s_2}D\bigl(V_i^\varepsilon + \Psi_i^\varepsilon + \varepsilon^{2-2s_2}\mathfrak{a}_{s_2}^{-1}\beta_{i,1}y + \text{h.o.t.} \bigr),
	\end{gather}
\end{subequations}
where $U_i^\varepsilon \equiv U_c(y;S_i^\varepsilon)$, $V_i^\varepsilon\equiv V_c(y;S_i^\varepsilon)$, $|\Phi_i^\varepsilon|\ll U_i^\varepsilon$ and $|\Psi_i^\varepsilon|\ll V_i^\varepsilon$, and where $\text{h.o.t.}$ refers to higher order terms whose order will become evident after the asymptotic expansions are carried out. Substituting \eqref{eq:inner-expansion-1} into \eqref{eq:frac-gm-full-system} we find that $\pmb{\Phi}_i^\varepsilon \equiv (\Phi_i^\varepsilon,\Psi_i^\varepsilon)^T$ satisfies $\mathcal{L}_i^\varepsilon \pmb{\Phi_i}^\varepsilon = \pmb{f}_i^\varepsilon$, where
\begin{equation}
	\mathcal{L}_i^\varepsilon \equiv \begin{pmatrix} (-\Delta)^{s_1} + 1 - 2 \tfrac{U_i^\varepsilon}{V_i^\varepsilon} & \bigl(\tfrac{U_i^\varepsilon}{V_i^\varepsilon}\bigr)^2 \\ -2U_i^\varepsilon & (-\Delta)^{s_2} \end{pmatrix},\quad \pmb{f}_i^\varepsilon \equiv \begin{pmatrix} \tfrac{1}{\varepsilon}\tfrac{dx_i}{dt}\tfrac{dU_i^\varepsilon}{dy} - \varepsilon^{2-2s_2}\mathfrak{a}_{s_2}^{-1}\bigl(\tfrac{U_i^\varepsilon}{V_i^\varepsilon}\bigr)^2 \beta_{i,1}y \\ -\tfrac{\varepsilon^{2s_2}}{D}V_i^\varepsilon \end{pmatrix}.
\end{equation}
We observe that $(\tfrac{d}{dy}U_i^\varepsilon,\tfrac{d}{dy}V_i^\varepsilon)^T$ is in the kernel of $\mathcal{L}_i^\varepsilon$ and assume that the kernel of $(\mathcal{L}_i^\varepsilon)^T$ is likewise one-dimensional and spanned by $\pmb{P}_i^\varepsilon \equiv (P_i^\varepsilon,Q_i^\varepsilon)^T$. We can then impose the solvability condition
\begin{equation*}
	0 = \int_{-\infty}^\infty (\pmb{P}_i^\varepsilon)^T\mathcal{L}_i^\varepsilon\pmb{\Phi}_i^\varepsilon dy = \tfrac{1}{\varepsilon}\tfrac{dx_i}{dt}\int_{-\infty}^{\infty} P_i^\varepsilon\tfrac{dU_i^\varepsilon}{dy}dy - \varepsilon^{2-2s_2}\mathfrak{a}_{s_2}^{-1}\beta_{i,1}\int_{-\infty}^{\infty}yP_i^\varepsilon\bigl(\tfrac{U_i^\varepsilon}{V_i^\varepsilon}\bigr)^{2}dy - \tfrac{\varepsilon^{2s_2}}{D}\int_{-\infty}^\infty Q_i^\varepsilon V_i^\varepsilon dy.
\end{equation*}
Numerical calculations indicate that $P_i^\varepsilon$ and $Q_i^\varepsilon$ are odd so that the final term vanishes and therefore
\begin{equation}\label{eq:slow-dynamics}
	\frac{dx_i}{dt} \sim \mathfrak{a}_{s_2}\varepsilon^{3-2s_2}\frac{\int_{-\infty}^\infty y P_i^\varepsilon\bigl(\tfrac{U_i^\varepsilon}{V_i^\varepsilon}\bigr)^{2}dy}{\int_{-\infty}^\infty P_i^\varepsilon \tfrac{dU_i^\varepsilon}{dy}dy}\sum_{j\neq i} S_j^\varepsilon G_D'(x_i-x_j),\qquad (i=1,...,N).
\end{equation}
Together with the NAS \eqref{eq:NAS} this constitutes a differential algebraic system for the $N$ spike locations $x_1,...,x_N$ and their strengths $S_1^\varepsilon,...,S_N^\varepsilon$.

We immediately observe that \eqref{eq:slow-dynamics} implies that the slow dynamics occur over a slow $O(\varepsilon^{2s_2-3})$ timescale. Furthermore since $P_i^\varepsilon$ is odd in $y$ we also have
\begin{equation}
	\frac{\int_{-\infty}^\infty y P_i^\varepsilon\bigl(\tfrac{U_i^\varepsilon}{V_i^\varepsilon}\bigr)^{2}dy}{\int_{-\infty}^\infty P_i^\varepsilon \tfrac{dU_i^\varepsilon}{dy}dy} \leq 0.
\end{equation}
If $x_i\gtrless x_j$ then $G_D'(x_i-x_j)\lessgtr 0$ and we thus conclude that spikes are mutually repulsing. In particular it is easy to see from \eqref{eq:slow-dynamics} that two-spike solutions are stationary if and only if $|x_1-x_2|=1$. In \S \ref{subsec:simulations-slow-dynamics} we compare the slow-dynamics predicted by the differential algebraic system \eqref{eq:slow-dynamics} and \eqref{eq:NAS} with numerical simulations of \eqref{eq:frac-gm-full-system} for two spike solutions that are initially separated by a distance $|x_1(0)-x_2(0)|<1$.

We conclude this section by outlining how to calculate the function $P_i^\varepsilon$ needed to evaluate the coefficient appearing in \eqref{eq:slow-dynamics}. Following the analysis of \S \ref{sec:stability} we write
\begin{equation*}
	Q_i^\varepsilon = C_i^\varepsilon - \mathfrak{a}_{s_2}\int_{-\infty}^\infty\frac{(U_i^\varepsilon(z)/V_i^\varepsilon(z))^2 P_i^\varepsilon(z)}{|y-z|^{1-2s_2}}dz,
\end{equation*}
from which we deduce that $P_i^\varepsilon$ solves $\mathscr{M}^\star(S_i^\varepsilon) P_i^\varepsilon = -2C_i^\varepsilon U_i^\varepsilon$ where we define the adjoint operator $\mathscr{M}^\star=\mathscr{M}^\star(S)$ by
\begin{equation}
	\mathscr{M}^\star(S) P \equiv -(-\Delta)^{s_1} P - P + 2\frac{U_c}{V_c}P-2\mathfrak{a}_{s_2}U\int_{-\infty}^\infty\frac{(U_c(z)/V_c(z))^2 P(z)}{|y-z|^{1-2s_2}}dz.
\end{equation}
Numerical calculations (not included) indicate that the adjoint operator $\mathscr{M}^\star$, like $\mathscr{M}$ in \S \ref{sec:stability}, has exactly one zero eigenvalue for all $0<S<S_\star$ except at $S=S_\text{crit}$ for which it has exactly two zero eigenvalues. In particular assuming $S_i^\varepsilon \neq S_\text{crit}$ we may set $C_i^\varepsilon=0$ and thus deduce that $P_i^\varepsilon$ is in the kernel of $\mathscr{M}^\star(S_i^\varepsilon)$.

\section{Numerical Simulations}\label{sec:simulations}

In this section we numerically simulate the fractional Gierer-Meinhardt system \eqref{eq:frac-gm-full-system} to support our asymptotic calculations in the preceding section. Using the asymptotically constructed solutions from \S \ref{sec:equilibrium} as initial conditions we choose parameter values to support the stability thresholds found by numerically solving \eqref{eq:NAS}. We proceed in three parts. In the first we consider Hopf bifurcations of single spike solutions, in the second we consider competition instabilities of two-spike solutions, and in the third and final part we consider the slow dynamics of two-spike solutions. In the first two parts we will first numerically compute the corresponding $\varepsilon$-dependent stability thresholds and compare them with their leading order counterparts. As emphasized in \S \ref{sec:stability}, due to the fractional powers of $\varepsilon$ in the asymptotic expansions of the stability thresholds we anticipate that the leading order thresholds deviate substantially from those obtained by solving \eqref{eq:NAS} directly. Finally, when considering the slow-dynamics of two-spike solutions in the third part we will choose parameter values for which the two-spike solutions are linearly stable with respect to Hopf and competition instabilities.

\subsection{Hopf Bifurcation of One-Spike Solutions}\label{subsec:simulations-hopf}

\begin{figure}[t!]
	\centering 
	\begin{subfigure}{0.25\textwidth}
		\includegraphics[width=\linewidth]{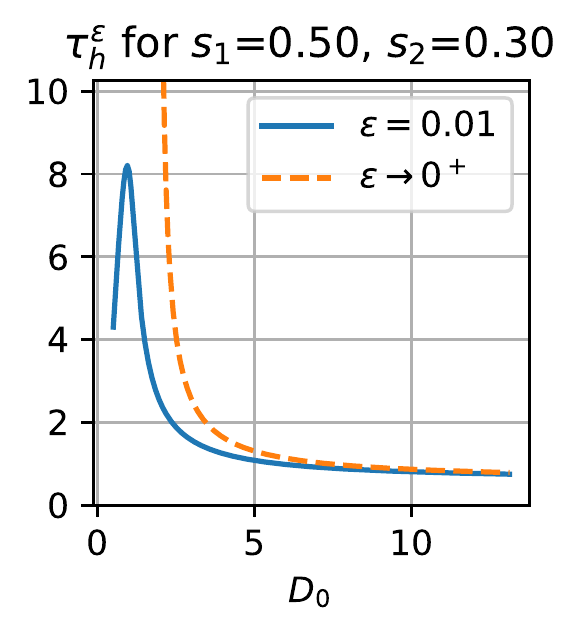}
		\caption{}
		\label{fig:hb-eps-0p01-1}
	\end{subfigure}\hfil 
	\begin{subfigure}{0.25\textwidth}
		\includegraphics[width=\linewidth]{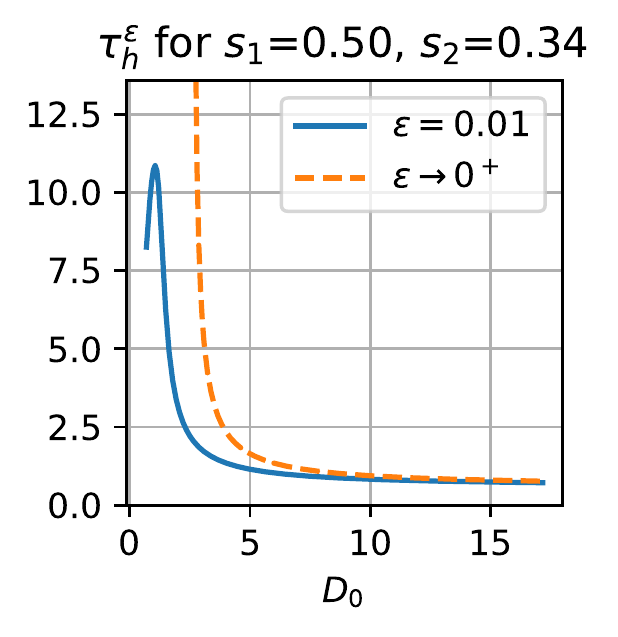}
		\caption{}
		\label{fig:hb-eps-0p01-2}
	\end{subfigure}\hfil 
	\begin{subfigure}{0.25\textwidth}
		\includegraphics[width=\linewidth]{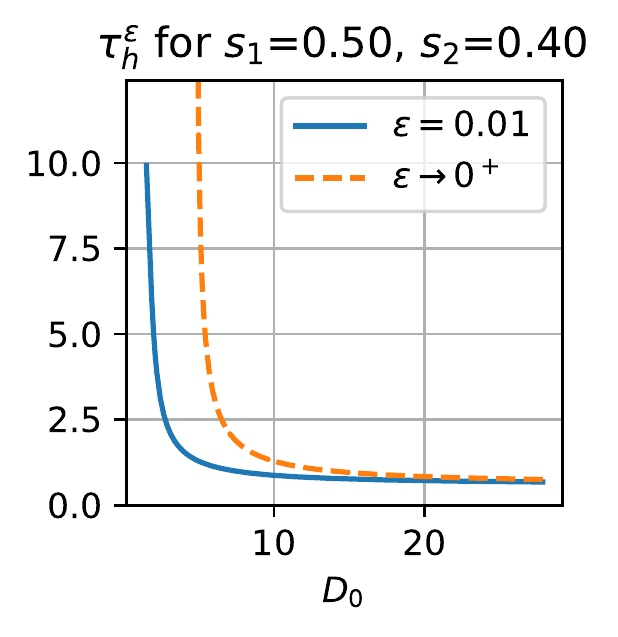}
		\caption{}
		\label{fig:hb-eps-0p01-3}
	\end{subfigure}\hfil 
	\begin{subfigure}{0.25\textwidth}
		\includegraphics[width=\linewidth]{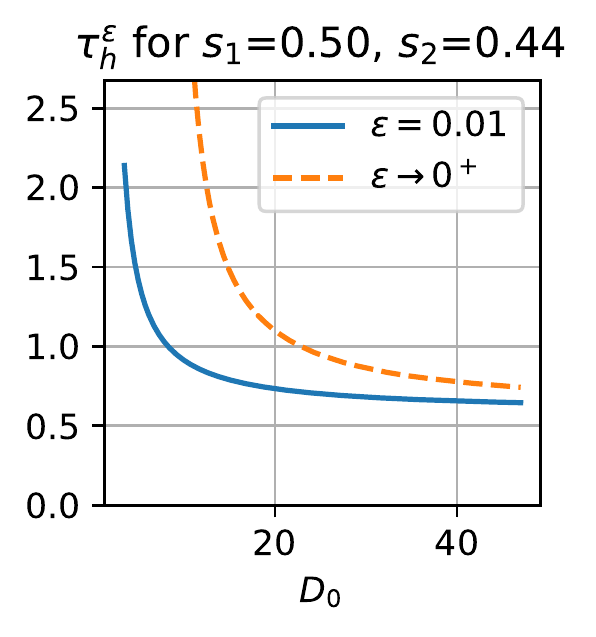}
		\caption{}
		\label{fig:hb-eps-0p01-4}
	\end{subfigure}
	\medskip
	\begin{subfigure}{0.25\textwidth}
		\includegraphics[width=\linewidth]{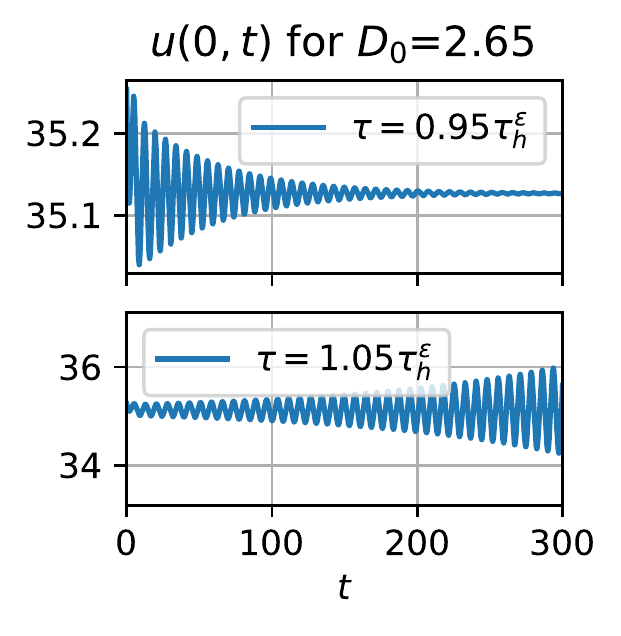}
		\caption{}
		\label{fig:hb-eps-0p01-sim-1}
	\end{subfigure}\hfil 
	\begin{subfigure}{0.25\textwidth}
		\includegraphics[width=\linewidth]{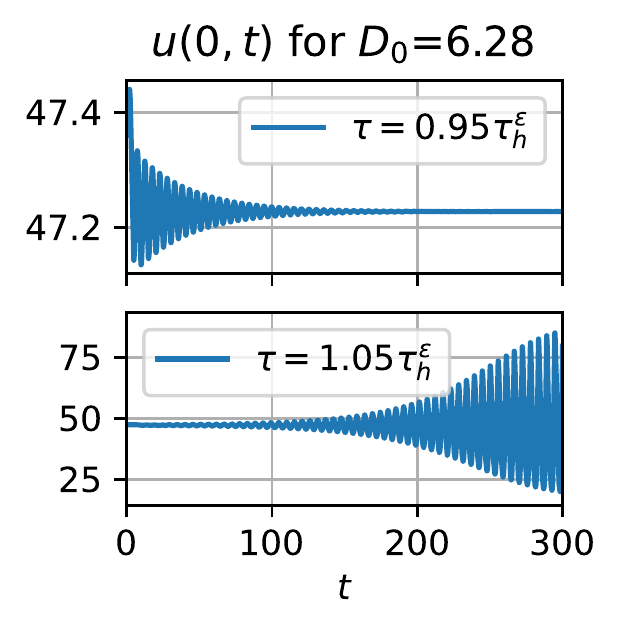}
		\caption{}
		\label{fig:hb-eps-0p01-sim-2}
	\end{subfigure}\hfil 
	\begin{subfigure}{0.25\textwidth}
		\includegraphics[width=\linewidth]{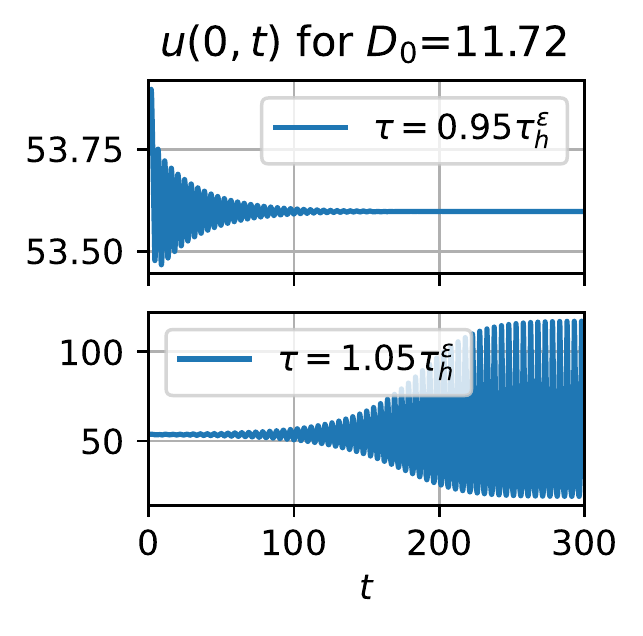}
		\caption{}
		\label{fig:hb-eps-0p01-sim-3}
	\end{subfigure}\hfil 
	\begin{subfigure}{0.25\textwidth}
		\includegraphics[width=\linewidth]{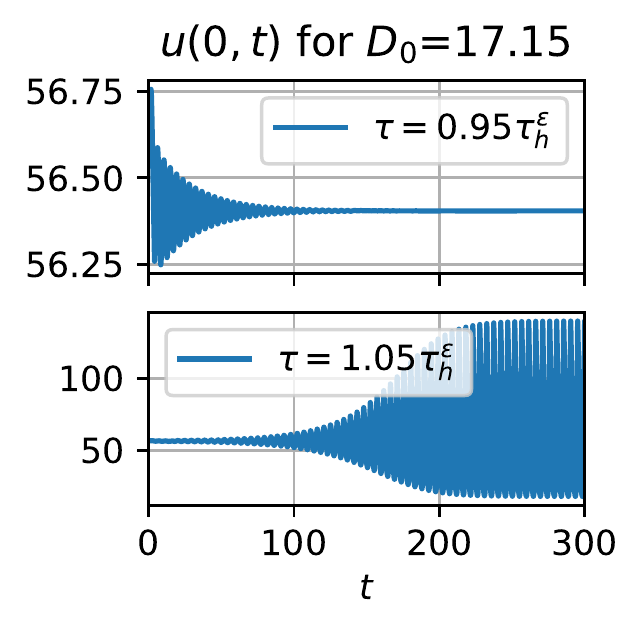}
		\caption{}
		\label{fig:hb-eps-0p01-sim-4}
	\end{subfigure}
	\caption{(A)-(D) Hopf bifurcation thresholds for a one-spike solution obtained by numerically solving the $\varepsilon$-dependent system \eqref{eq:hopf-N=1} with $\varepsilon=0.01$. (E)-(F) Plots of $u(0,t)$ from numerically simulating the fractional Gierer-Meinhardt system with $\varepsilon=0.01$, $s_1=0.5$, $s_2=0.34$, and indicated values of $D_0$ with $\tau=0.95\tau_h^\varepsilon(D_0)$ (top) and $\tau=1.05\tau_h^\varepsilon(D_0)$ (bottom). In each case a single spike solution (obtained using the asymptotics of \S \ref{sec:equilibrium}) centred at $x=0$ with multiplicative noise was used as the initial condition.}
	\label{fig:hopf-bifurcation-eps}
\end{figure}

We first verify the Hopf bifurcation threshold for a single spike solution centred, without loss of generality, at $x=0$. With $N=1$ the NAS \eqref{eq:NAS} and GCEP \eqref{eq:gcep} become
\begin{subequations}\label{eq:hopf-N=1}
	\begin{align}[left=\empheqlbrace]
		\mu(S_c) = \mathfrak{a}_{s_2}^{-1} \varepsilon^{1-2s_2}R_{D_0\varepsilon^{2s_2-1}}(0) S_c,\label{eq:N=1-NAS}\\
		B(i\lambda_I,S_c) = \frac{\mathfrak{a}_{s_2}^{-1}\varepsilon^{1-2s_2}}{1+i\tau\lambda_I}R_{\tfrac{D_0\varepsilon^{2s_2-1}}{1+i\tau\lambda_I}}(0), \label{eq:N=1-GCEP}
	\end{align}
\end{subequations}
where we remind the reader that $R_D(x)$ is given by \eqref{eq:greens-series-rapid-R_D}. For a given value of $D_0$ we first solve \eqref{eq:N=1-NAS} for $S_c=S_c^\varepsilon$. Separating real and imaginary parts in \eqref{eq:N=1-GCEP} we can then numerically solve for the Hopf bifurcation threshold $\tau=\tau_h^\varepsilon$ and accompanying eigenvalue $\lambda_I=\lambda_h^\varepsilon$. Specifically we solve the resulting system with Newton's method starting with a large value of $D_0$ for which the shadow limit solutions $\tau_h^\infty$ and $\lambda_I^\infty$ are good initial guesses. Using $\varepsilon=0.01$ the resulting $\varepsilon$-dependent Hopf bifurcation thresholds are shown in Figures \ref{fig:hb-eps-0p01-1}-\ref{fig:hb-eps-0p01-4} which illustrate the persistence of the Hopf bifurcation threshold for $S>S_\text{crit}$ not captured by the leading order theory. To support our asymptotically predicted threshold we performed several numerical simulations of the full system \eqref{eq:frac-gm-full-system} with $\varepsilon=0.01$ and using a single spike solution centred at the origin as an initial condition. In Figures \ref{fig:hb-eps-0p01-sim-1}-\ref{fig:hb-eps-0p01-sim-4} we plot $u(0,t)$ when $s_1=0.5$ and $s_2=0.34$ for select values of $D_0$ and values of $\tau$ slightly below and slightly above the Hopf bifurcation threshold, all of which validate the Hopf bifurcations thresholds from the asymptotic theory.

\subsection{Competition Instabilities of Two-Spike Solutions}\label{subsec:simulations-competition}

\begin{figure}[t!]
	\centering 
	\begin{subfigure}{0.25\textwidth}
		\includegraphics[width=\linewidth]{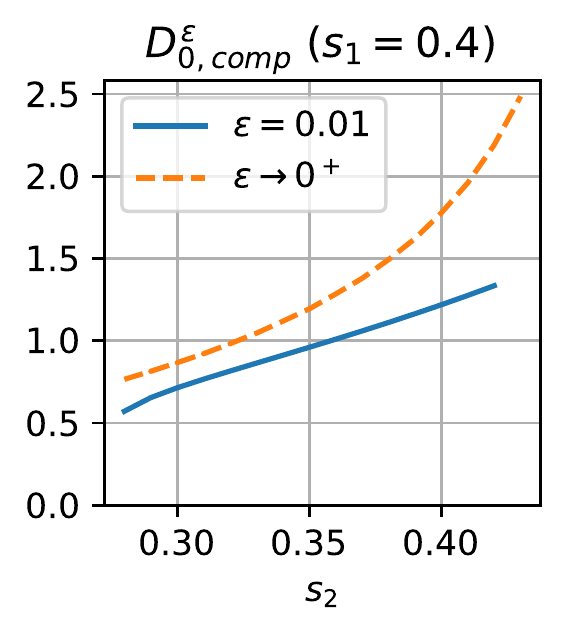}
		\caption{}
		\label{fig:comp-eps-0p01-1}
	\end{subfigure}\hfil 
	\begin{subfigure}{0.25\textwidth}
		\includegraphics[width=\linewidth]{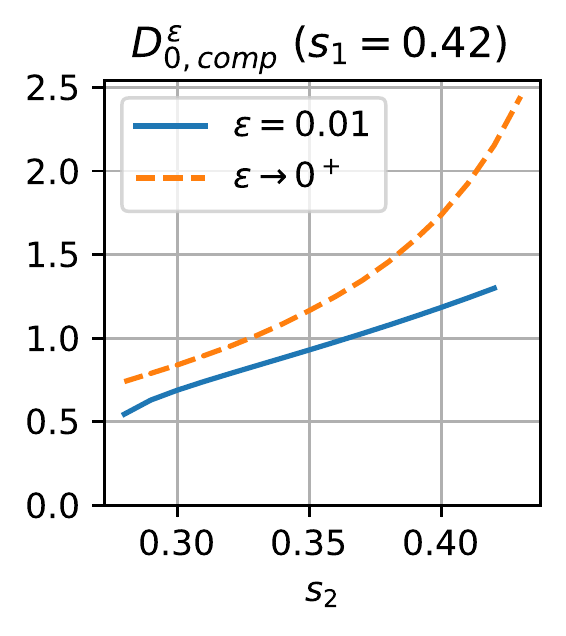}
		\caption{}
		\label{fig:comp-eps-0p01-2}
	\end{subfigure}\hfil 
	\begin{subfigure}{0.25\textwidth}
		\includegraphics[width=\linewidth]{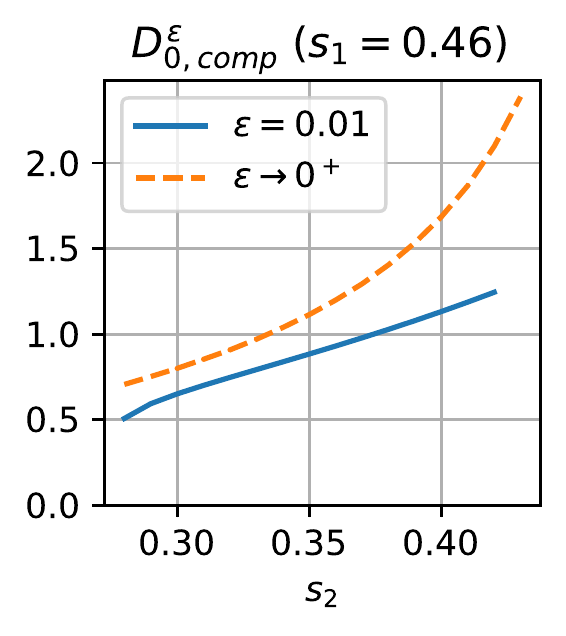}
		\caption{}
		\label{fig:comp-eps-0p01-3}
	\end{subfigure}\hfil 
	\begin{subfigure}{0.25\textwidth}
		\includegraphics[width=\linewidth]{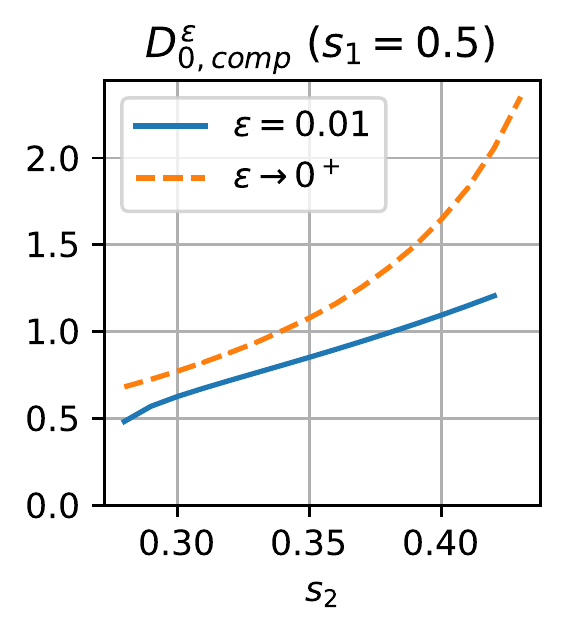}
		\caption{}
		\label{fig:comp-eps-0p01-4}
	\end{subfigure}
	\medskip
	\begin{subfigure}{0.25\textwidth}
		\includegraphics[width=\linewidth]{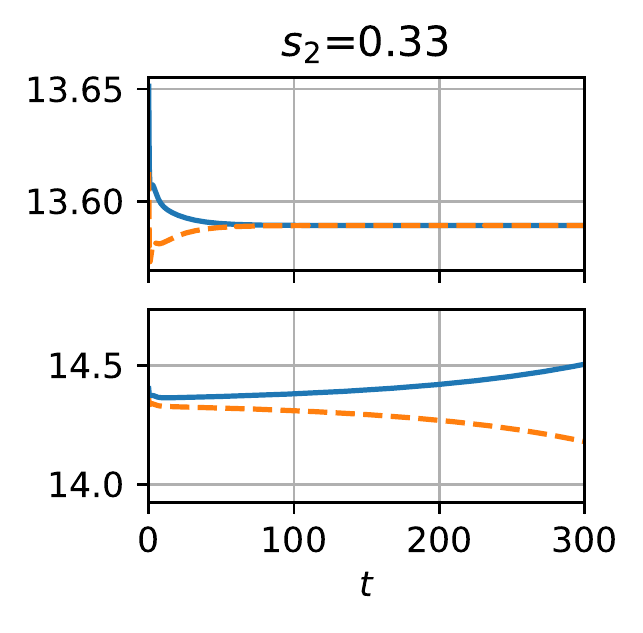}
		\caption{}
		\label{fig:comp-eps-0p01-sim-1}
	\end{subfigure}\hfil 
	\begin{subfigure}{0.25\textwidth}
		\includegraphics[width=\linewidth]{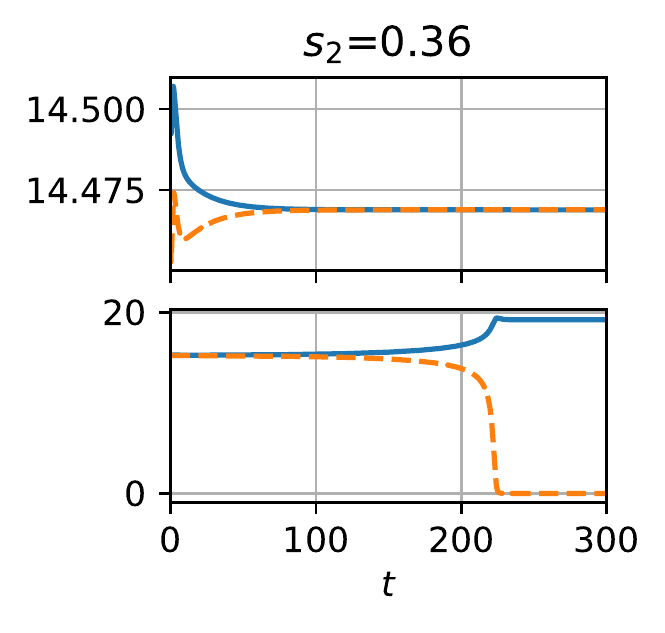}
		\caption{}
		\label{fig:comp-eps-0p01-sim-2}
	\end{subfigure}\hfil 
	\begin{subfigure}{0.25\textwidth}
		\includegraphics[width=\linewidth]{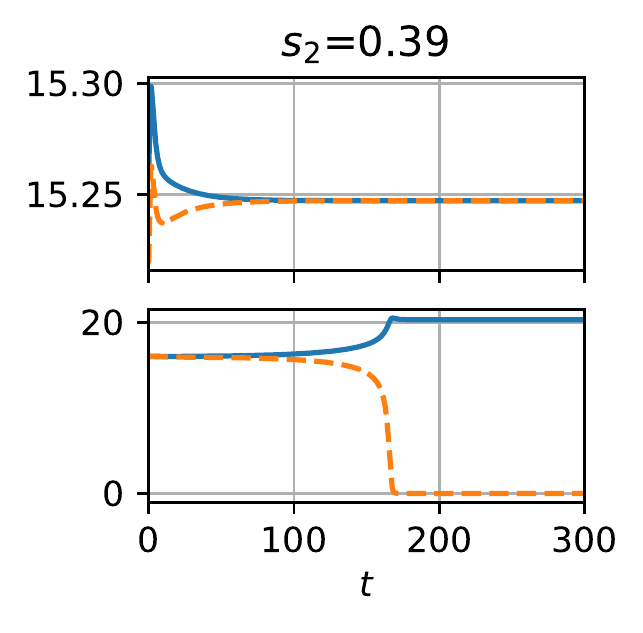}
		\caption{}
		\label{fig:comp-eps-0p01-sim-3}
	\end{subfigure}\hfil 
	\begin{subfigure}{0.25\textwidth}
		\includegraphics[width=\linewidth]{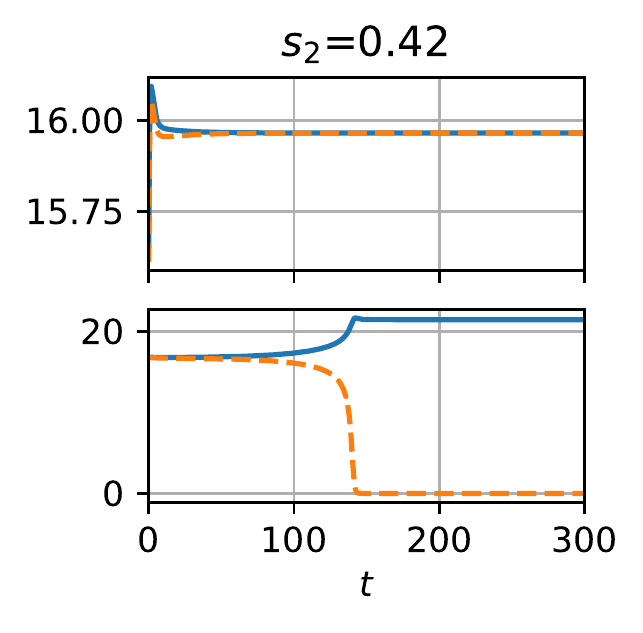}
		\caption{}
		\label{fig:comp-eps-0p01-sim-4}
	\end{subfigure}
	\caption{(A)-(D) Competition instability thresholds for a two-spike solution obtained by numerically solving the $\varepsilon$-dependent system \eqref{eq:comp-N=2} with $\varepsilon=0.01$. (E)-(F) Plots of $u(x_1,t)$ (solid blue) and $u(x_2,t)$ (dashed orange) from numerically simulating the fractional Gierer-Meinhardt system with $\varepsilon=0.01$, $s_1=0.5$, at the indicated values of $s_2$  with $D_0=0.95D_{0,\text{comp}}^\varepsilon$ (top) and $D_0=1.05D_{0,\text{comp}}^\varepsilon$ (bottom). In each case a two spike solution (obtained using the asymptotics of \S \ref{sec:equilibrium}) separated by a distance of $|x_1-x_2|=1$ with multiplicative noise was used as the initial condition.}
	\label{fig:comp-bifurcation-eps}
\end{figure}

Turning our attention now to the case of a symmetric $N=2$-spike solution we perform numerical simulations to verify the onset of competition instabilities as predicted by our stability theory. We assume that $|x_1-x_2|=1$ so that there are no small eigenvalues or, equivalently, there are no slow dynamics as discussed in \S \ref{sec:slow-dynamics}. In this case the NAS \eqref{eq:NAS} and GCEP \eqref{eq:gcep} with $\lambda=0$ become
\begin{subequations}\label{eq:comp-N=2}
	\begin{align}[left=\empheqlbrace]
		\mu(S_c) = \mathfrak{a}_{s_2}^{-1}\varepsilon^{1-2s_2}  \bigl(R_{D_0\varepsilon^{2s_2-1}}(0) + G_{D_0\varepsilon^{2s_2-1}}(1)\bigr)S_c, \\
		\mu'(S_c) = \mathfrak{a}_{s_2}^{-1}\varepsilon^{1-2s_2} \bigl(R_{D_0\varepsilon^{2s_2-1}}(0) - G_{D_0\varepsilon^{2s_2-1}}(1)\bigr).
	\end{align}
\end{subequations}
We can numerically solve this system for $D_0$ as a function of $s_2$ at select values of $s_1$. Doing so with $\varepsilon=0.01$ we obtain the higher order competition instability threshold shown in Figures \ref{fig:comp-eps-0p01-1}-\ref{fig:comp-eps-0p01-4}. In contrast to the leading order competition threshold which can be calculated as in \S \ref{subsubsec:symmetric-stability} there is an upper limit to the value of $s_2$ for which we can compute the higher order $\varepsilon$-dependent threshold from \eqref{eq:comp-N=2}. This is a consequence of the change in sign of $R_D(0)$ for smaller values of $D$ as described in \S \ref{subsec:greens-func-properties}. For sufficiently small values of $\varepsilon$ the value of $D=D_0\varepsilon^{2s_2-1}$ will always exceed this threshold and a competition instability threshold $D_{0,\text{comp}}^\varepsilon$ can be calculated for values of $s_2$ closer to $1/2$. Otherwise higher order correction terms need to be calculated or the inhibitor in the numerical discretization of the core problem \eqref{eq:core_problem} needs to be allowed to become negative as described in \S \ref{subsec:greens-func-properties}. We will not address these additional technical difficulties further.

To support our asymptotically calculated higher order competition instability threshold we performed several numerical experiments. In each experiment we use the methods of \S \ref{sec:equilibrium} to asymptotically construct a symmetric two-spike solution with spikes centred at $x_1=-0.5$ and $x_2=0.5$. Using this solution as the initial condition we then solve \eqref{eq:frac-gm-full-system} numerically for select values of $s_1$ and $s_2$ and with small value of $\tau=0.05$ (so that there are no Hopf bifurcations)  as well as values of $D=D_0\varepsilon^{2s_2-1}$ such that $D_0$ is either slightly below or slightly above the numerically calculated competition instability threshold $D_{0,\text{comp}}^\varepsilon$. In each case we found good agreement with the higher order calculated threshold $D_{0,\text{comp}}^\varepsilon$ and in Figure \ref{fig:comp-eps-0p01-sim-1}-\ref{fig:comp-eps-0p01-sim-4} we show a sampling of numerically calculated values of the spike heights $u(x_1,t)$ and $u(x_2,t)$ for values of $D_0=0.95D_{0,\text{comp}}^\varepsilon$ (top) and $D_0=1.05D_{0,\text{comp}}^\varepsilon$ (bottom).

\subsection{Slow Dynamics of Two-Spike Solutions}\label{subsec:simulations-slow-dynamics}

\begin{figure}[t!]
	\centering 
	\begin{subfigure}{0.25\textwidth}
		\includegraphics[width=\linewidth]{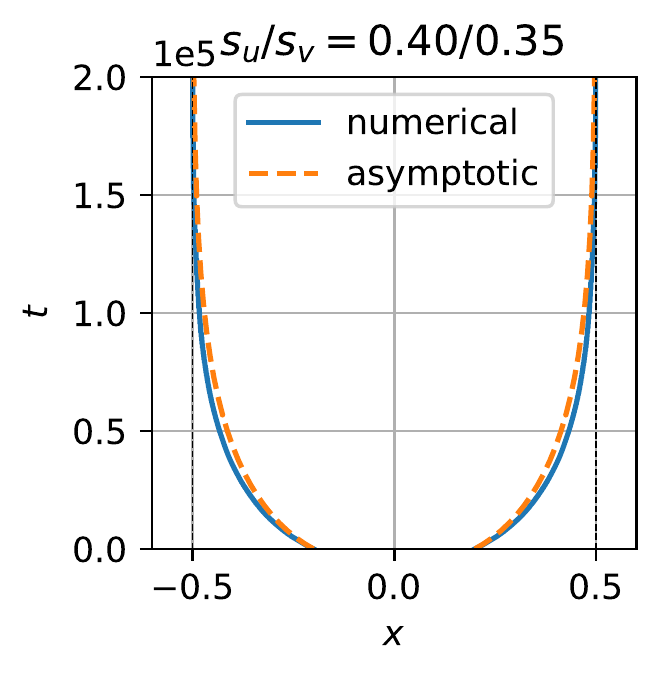}
		\caption{}
		\label{fig:slow-dyn-traj-example-000}
	\end{subfigure}\hfil 
	\begin{subfigure}{0.25\textwidth}
		\includegraphics[width=\linewidth]{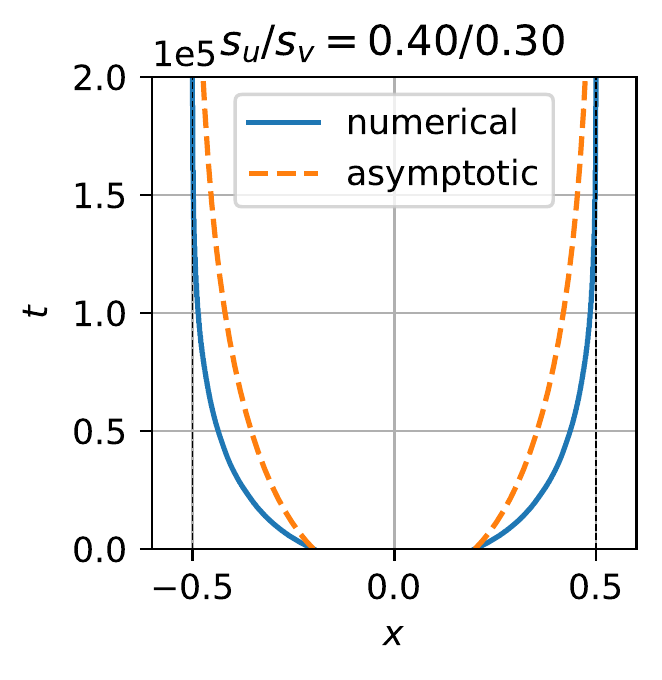}
		\caption{}
		\label{fig:slow-dyn-traj-example-001}
	\end{subfigure}\hfil 
	\begin{subfigure}{0.25\textwidth}
		\includegraphics[width=\linewidth]{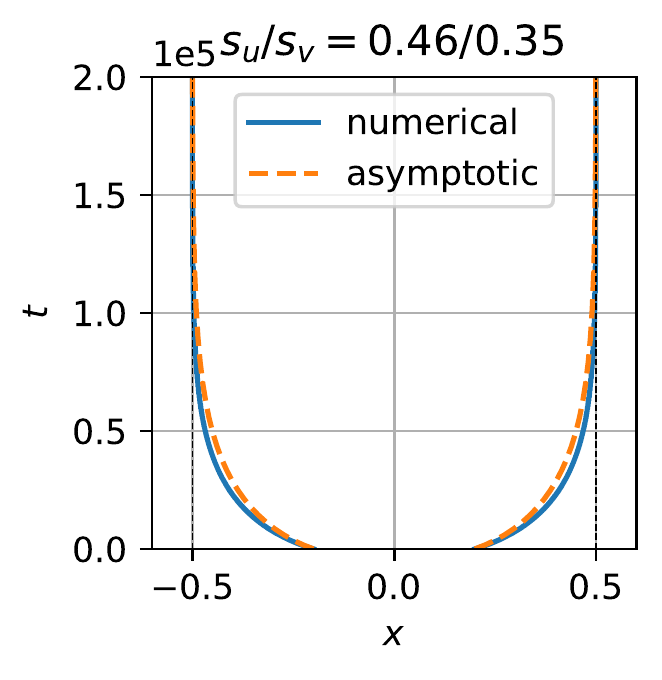}
		\caption{}
		\label{fig:slow-dyn-traj-example-002}
	\end{subfigure}\hfil 
	\begin{subfigure}{0.25\textwidth}
		\includegraphics[width=\linewidth]{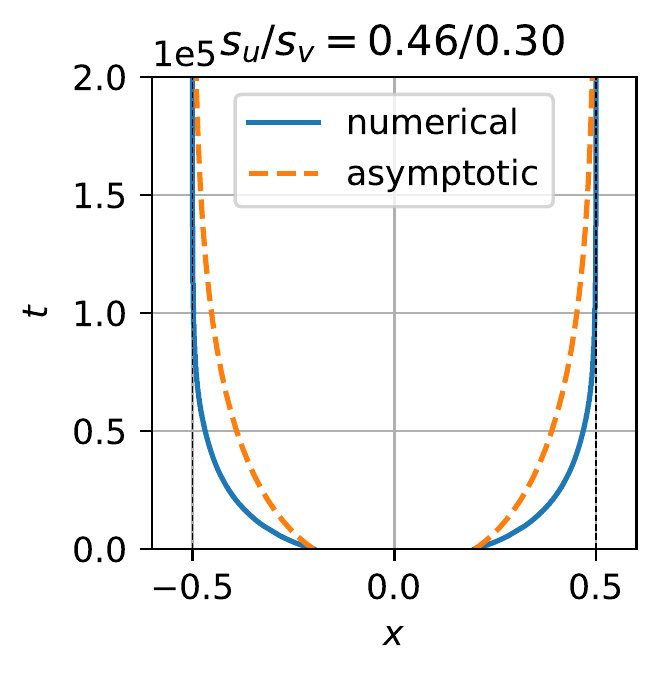}
		\caption{}
		\label{fig:slow-dyn-traj-example-003}
	\end{subfigure}
	\caption{Time evolution of spike locations $x_1(t)$ and $x_2(t)$ in a two-spike solution. In each example $\tau=0.1$, $\varepsilon=0.01$, and $D=D_0\varepsilon^{2s_2-1}$ where $D_0=0.8 D_{0,\text{comp}}^\varepsilon$ and $D_{0,\text{comp}}^\varepsilon$ is the corresponding two-spike competition instability threshold.}
	\label{fig:simulation-trajectories}
\end{figure}

We conclude the numerical validation of our asymptotic theory by considering the slow dynamics of symmetric two-spike solutions. Using the translational invariance granted by the periodic boundary conditions we reduce the differential algebraic system \eqref{eq:slow-dynamics} and \eqref{eq:NAS} to the pair of scalar equations
\begin{subequations}\label{eq:simulations-dynamics-dae}
	\begin{align}[left=\empheqlbrace]
		& \frac{d(x_2-x_1)}{dt} = 2 \mathfrak{a}_{s_2}\varepsilon^{3-2s_2}\frac{\int_{-\infty}^\infty y P_c^\varepsilon\bigl(\tfrac{U_c^\varepsilon}{V_c^\varepsilon}\bigr)^{2}dy}{\int_{-\infty}^\infty P_c^\varepsilon \tfrac{dU_c^\varepsilon}{dy}dy} G_D'(x_2-x_1),\\
		& \mu(S_c^\varepsilon) = \frac{\varepsilon^{1-2s_2}}{\mathfrak{a}_{s_2}}\bigl(R_D(0) + G_D(|x_2-x_1|)\bigr)S_c^\varepsilon.
	\end{align}
\end{subequations}
We remind the reader that the NAS (second equation) determines the common spike strength $S_c^\varepsilon$ for a given spike separation distance $|x_2-x_1|$. The common spike strength is then used to solve \eqref{eq:core_problem} for $U_c^\varepsilon$ and $V_c^\varepsilon$ as well as to solve for $P_c^\varepsilon$ in the adjoint problem of \S \ref{sec:slow-dynamics}. We implement this system numerically by pre-computing $S_c^\varepsilon$ as a function of $0\leq |x_2-x_1|\leq 2$ and then computing each of $U_c^\varepsilon$, $V_c^\varepsilon$, and $P_c^\varepsilon$ as functions of $0<S_c^\varepsilon<S_\star$. The differential algebraic system can then be easily solved with any standard ordinary differential equation library (we used \texttt{solve\_ivp} from the SciPy \texttt{integrate} library).

To validate our asymptotic theory we performed multiple numerical simulations of \eqref{eq:frac-gm-full-system} with an initial condition consisting of a symmetric two-spike solution constructed using the methods in \S \ref{sec:equilibrium} where the spikes are concentrated at $x_1=-0.2$ and $x_2=0.2$. For each of our simulations we set $\varepsilon=0.01$ and used values of $\tau = 0.1$  and $D=0.8 D_{0,\text{comp}}^\varepsilon \varepsilon^{2s_2-1}$ with which we can avoid Hopf bifurcations and competition instabilities (see Sections \ref{subsec:simulations-hopf} and \ref{subsec:simulations-competition}). These simulations were completed for the pairs $(s_1,s_2)=(0.4,0.35), (0.4,0.3), (0.46,0.3),$ and $(0.46,0.35)$ and the resulting spike trajectories are shown as solid blue lines in Figure \ref{fig:simulation-trajectories}. In each of these plots the trajectories predicted by solving \eqref{eq:simulations-dynamics-dae} are shown as dashed orange lines. We see that in each case the asymptotics provide good qualitative agreement of the spike trajectories. Finally we direct the reader to Figure \ref{fig:slow-dynamics-waterfall} where have plotted in more detail the time evolution of the activator to accompany Figure \ref{fig:slow-dyn-traj-example-000}.

\section{Rigorous Results for $s_2\approx1/2$}\label{sec:rigorous}
In this section we shall rigorously study the existence and stability of the ground state solution to the core problem
\begin{subequations}\label{6.1-core}
	\begin{align}[left=\empheqlbrace]
		& (-\Delta)^\frac12 U+U-V^{-1}U^2=0,\quad (-\Delta)^sV-U^2=0,  & -\infty<x<\infty,  \\
		& U,V>0, & -\infty<x<\infty,  \\
		&U,V\to0,&\mbox{as}\quad |x|\to+\infty.
	\end{align}
\end{subequations}
We proceed by first presenting in \S~\ref{sec6.1} several known results which will be used throughout this section. Then in \S~\ref{subsec:rigorous-existence} and \S~\ref{sec6.3} we provide the rigorous study on the existence and stability of the ground state solution respectively.

\subsection{Preliminaries}
\label{sec6.1}
\begin{lemma}
	\label{le2.green}
	Let $s<\frac12$ and $G(x)$ be the Green's function of the equation
	\begin{equation}
		\label{2.green}
		\s G(x)=\delta(x).
	\end{equation}
	Then
	$$G(x)=\dfrac{\Gamma(1-2s)\sin(s\pi)}{\pi}x^{2s-1}.$$
\end{lemma}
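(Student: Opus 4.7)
The plan is to derive the Green's function by Fourier analysis, since $(-\Delta)^{s}$ is a Fourier multiplier. Taking the Fourier transform of both sides of \eqref{2.green} and using the identity $\widehat{(-\Delta)^s G}(\xi) = |\xi|^{2s}\hat{G}(\xi)$, one obtains, in the sense of tempered distributions,
\begin{equation*}
    \hat{G}(\xi) = |\xi|^{-2s}.
\end{equation*}
Since $0<s<1/2$, we have $0<2s<1$, so $|\xi|^{-2s}$ is locally integrable at the origin and defines a legitimate tempered distribution whose inverse Fourier transform is homogeneous of degree $2s-1$ (and even); thus $G(x) = c_s |x|^{2s-1}$ for some constant $c_s$ to be determined.

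The second step is to compute $c_s$ by invoking the classical one-dimensional Riesz potential identity, namely the Fourier transform pair
\begin{equation*}
    \mathcal{F}\bigl[|x|^{-\alpha}\bigr](\xi) = \frac{2^{1-\alpha}\sqrt{\pi}\,\Gamma\bigl(\tfrac{1-\alpha}{2}\bigr)}{\Gamma\bigl(\tfrac{\alpha}{2}\bigr)}\,|\xi|^{\alpha-1},\qquad 0<\alpha<1,
\end{equation*}
applied with $\alpha = 1-2s$ (so $\alpha - 1 = -2s$). Solving for the prefactor and inverting yields $c_s = \Gamma\bigl(\tfrac{1}{2}-s\bigr)\bigl(2^{2s}\sqrt{\pi}\,\Gamma(s)\bigr)^{-1}$. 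I would give a short self-contained justification of this pair (e.g.\@ by regularization with $|x|^{-\alpha}e^{-\epsilon|x|}$ and passing to the limit) to make the argument complete, or alternatively cite \cite{pozrikidis2018fractional} directly as the excerpt already does.

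The final step is to rewrite $c_s$ in the form stated in the lemma. Applying the Legendre duplication formula $\Gamma(s)\Gamma(s+\tfrac{1}{2}) = 2^{1-2s}\sqrt{\pi}\,\Gamma(2s)$ to eliminate $\Gamma(s)$, then the reflection formula $\Gamma(\tfrac12-s)\Gamma(\tfrac12+s) = \pi/\cos(\pi s)$ to combine $\Gamma(\tfrac12-s)$ with $\Gamma(\tfrac12+s)$, and finally $\Gamma(2s)\Gamma(1-2s) = \pi/\sin(2\pi s)$ together with the double-angle identity $\sin(2\pi s)=2\sin(\pi s)\cos(\pi s)$, one finds after cancellation
\begin{equation*}
    c_s = \frac{\Gamma(1-2s)\sin(\pi s)}{\pi},
\end{equation*}
which is exactly the constant in the lemma. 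The only real obstacle is this chain of gamma-function manipulations; everything else is standard Fourier analysis.
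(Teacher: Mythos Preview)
Your proposal is correct and follows essentially the same Fourier-transform approach as the paper. The only difference is cosmetic: the paper evaluates the inverse transform directly via the cosine integral $\int_0^\infty \xi^{-2s}\cos\xi\,d\xi = \Gamma(1-2s)\sin(\pi s)$ (after the change of variables $\xi\mapsto \xi/x$), whereas you quote the Riesz-kernel Fourier pair and then reduce the constant with duplication and reflection identities---both routes land on the same $c_s$.
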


\begin{proof}
	Using the Fourier transform, we can write the \eqref{2.green} as
	\begin{equation}
		|\xi|^{2s}\hat G(\xi)=1.
	\end{equation}	
	Therefore, we have
	\begin{equation}
			G(x)=\frac{1}{\pi}\int_0^\infty\frac{\cos(x\xi)}{\xi^{2s}}d\xi
			=x^{2s-1}\frac{1}{\pi}\int_0^\infty\frac{\cos\xi}{\xi^{2s}}d\xi
			=\frac{\Gamma(1-2s)\sin(s\pi)}{\pi}x^{2s-1},
	\end{equation}
	where we used $\int_0^\infty\frac{\cos\xi}{\xi^{2s}}d\xi=\Gamma(1-2s)\sin(s\pi)$ if $2s<1.$
\end{proof}
\medskip

We introduce the transformation
{\begin{equation}
	\label{2.tau}
	U=\tau_sU,\quad V=\tau_sV,\quad \tau_s=\left(\frac{\Gamma(1-2s)\sin(s\pi)}{\pi}\int_{\R}w^2(y)dy\right)^{-1}=\frac{1}{2\Gamma(1-2s)\sin(s\pi)},
\end{equation}
where $w$ is the unique ground state solution to
\begin{equation}
	\label{2.ground}
	{(-\Delta)^{1/2}} w+w-w^2=0\quad \mbox{in}\quad \R,\qquad w(x)\to0\quad \mbox{as}\quad |x|\to\infty.
\end{equation}
In this case we can give the explicit form of $w$ and the integral of $w^2$ on the real line
\begin{equation}
	w(x)=\frac{2}{1+x^2}\quad\mathrm{and}\quad \int_{\R}w^2dx=2\pi.
\end{equation}
Based on \eqref{2.tau} we can write \eqref{6.1-core} as
\begin{subequations}\label{2.sys}
	\begin{align}[left=\empheqlbrace]
		& \sh U+U-V^{-1}U^2=0,\quad \s V-\tau_s U^2=0, & -\infty<x<\infty,\\
		& U,V>0, & -\infty<x<\infty,\\
		& U,V\to0, & \mbox{as}\quad |x|\to\infty.	
	\end{align}	
\end{subequations}
We look for a solution to \eqref{2.sys} in the form $U=w+\phi$ with $\phi$ being a lower order term. Denoting by $T(h)$ the unique solution of the equation
\begin{equation}
	\label{2.h}
	\s V=\tau_sh\quad \mbox{in}\quad \R,\qquad V(x)\to0\quad\mbox{as}\quad |x|\to\infty,
\end{equation}
for $h\in L^\infty(\R)$, then formally we have
\begin{equation}
	T(U^2)=T(w^2)+2T(w\phi)+h.o.t.,
\end{equation}
where $h.o.t.$ indicates the higher order terms. We denote $v_w=T(w^2)$ so that using the Green's function given in Lemma \ref{le2.green} we have
\begin{equation}
	\label{2.v}
	v_w=T(w^2)=\tau_s\int_{\R}w^2(y)G(x-y)dy.
\end{equation}
Expanding the Green function in the following way
\begin{equation}
	\label{2.exp-g}
	\begin{aligned}
		G(x)=~&\frac{\Gamma(1-2s)\sin(s\pi)}{\pi}|x|^{2s-1}=e^{(2s-1)\log |x|+\log\Gamma(1-2s)\sin(s\pi)/\pi}\\
		=~&\frac{\Gamma(1-2s)\sin(s\pi)}{\pi}\left(1+(2s-1)\log |x|+(2s-1)^2(\log |x|)^2/2+\cdots\right),
	\end{aligned}
\end{equation}
and then using the fact $\Gamma(1-2s)\sim (1-2s)^{-1}$ as $s\to\frac12$ we get that
\begin{equation}
	T(w^2)=\tau_s\frac{\Gamma(1-2s)\sin(s\pi)}{\pi}\int_{\R}w^2(y)dy+O(2s-1).
\end{equation}
As a consequence when $x$ is bounded we have
\begin{equation}
	\label{2.v'}
	v_w\equiv T(w^2)=1+O(2s-1)\quad \mbox{and}\quad  T(w\phi)=\frac{1}{\int_{\R}w^2}\int_{\R}w\phi dy+O(2s-1),
\end{equation}
then the nonlinear term of the first equation in \eqref{2.sys} can be written as
\begin{equation}
	\frac{U^2}{V}=\frac{w^2+2w\phi+h.o.t.}{v_w+2T(w\phi)+h.o.t.}
	=\frac{w^2}{v_w}+2w\phi-2\frac{\int_{\R}w\phi dy}{\int_{\R}w^2dy}w^2+h.o.t.+O(2s-1).
\end{equation}
Substituting it into the first equation of \eqref{2.sys} we get
\begin{equation}
	\label{2.lin}
	L(\phi)\equiv \sh \phi+(1-2w)\phi+2\frac{\int_{\R}w\phi dy}{\int_{\R}w^2 dy}w^2=S(w)+N(\phi),
\end{equation}
where
$$S(w)=-\sh w-w+\frac{w^2}{v_w},$$
and
$$N(\phi)=\frac{(w+\phi)^2}{T((w+\phi)^2)}-\frac{w^2}{v_w}-2w\phi+2\frac{\int_{\R}w\phi dy}{\int_{\R}w^2 dy}w^2,$$
and represents the higher order terms in $\phi.$

Concerning the ground state $w$ and the non-local linearized operator $L$, we have the following result
\begin{proposition}
	\label{pr2.1}
	Let $w$ be the unique, positive, radially symmetric solution to	\eqref{2.ground}.
	\begin{itemize}
		
		\item [(a)] Let $L_0=(-\Delta)^\frac12+(1-2w)id$. Then we have
		\begin{equation*}
			\mathrm{Ker}(L_0)=\mathrm{Span}\left\{\frac{d w}{dx}\right\}.	
		\end{equation*}
		
		\item [(b)] Let $L$ be the lineazied operator defined in \eqref{2.lin}
		and
		$$L^*\phi=\sh\phi+(1-2w)\phi+2\frac{\int_{\R}w^2\phi dx}{\int_{\R}w^2dx}w.$$
		Then
		\begin{equation}
			\label{2.kernel}
			\mathrm{Ker}(L)=\mathrm{Ker}(L^*)={\mathrm{Span}}\left\{\frac{d w}{d x}\right\}.
		\end{equation}
	\end{itemize}
\end{proposition}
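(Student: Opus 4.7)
The plan is to prove (a) and (b) in sequence, with (a) providing the crucial input for (b). For (a), the operator $L_0 = \sh + (1-2w)\mathrm{id}$ is the linearization of $\sh u + u - u^2 = 0$ at its ground state $w$. Differentiating the ground state equation \eqref{2.ground} in $x$ yields $L_0(w') = 0$, so $\mathrm{Span}\{w'\} \subseteq \mathrm{Ker}(L_0)$. The reverse inclusion is the non-degeneracy of the ground state for the half-Laplacian equation in one dimension, which is a now-classical result; we will invoke it directly from \cite{frank_2013_uniqueness} (the same reference already used in the excerpt for the spectrum of the local linearized operator).

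For (b), the inclusion $\mathrm{Span}\{w'\}\subseteq \mathrm{Ker}(L)\cap \mathrm{Ker}(L^*)$ will be verified by direct computation: since $L_0(w')=0$ by (a), it suffices to check that the rank-one corrections appearing in $L$ and $L^*$ vanish on $w'$, which reduces to $\int_{\R} w\,w'\,dx = \tfrac{1}{2}\int_{\R}(w^2)'\,dx = 0$ and $\int_{\R} w^2 w'\,dx = \tfrac{1}{3}\int_{\R}(w^3)'\,dx = 0$, both of which follow from the decay of $w$ at infinity. The reverse inclusion rests on the algebraic identity $L_0(w) = -w^2$, obtained by substituting $\sh w = w^2 - w$ (from \eqref{2.ground}) into the definition of $L_0 w$.

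For the reverse inclusion in $\mathrm{Ker}(L)$, given $\phi\in\mathrm{Ker}(L)$ we will set $c = (\int w\phi\,dy)/(\int w^2\,dy)$ so that $L_0\phi = -2cw^2 = 2c\,L_0 w$, whence $L_0(\phi - 2cw) = 0$. By part (a), $\phi = 2cw + \alpha w'$ for some $\alpha\in\R$; substituting back into the defining integral for $c$ and using $\int w w'\,dx=0$ forces $c = 2c$, hence $c=0$ and $\phi\in\mathrm{Span}\{w'\}$. For $\phi\in\mathrm{Ker}(L^*)$, we will set $d = (\int w^2\phi\,dx)/(\int w^2\,dx)$ so that $L_0\phi = -2dw$; pairing with $w$ in $L^2(\R)$ and using the self-adjointness of $L_0$ together with $L_0 w = -w^2$ gives $-d\int w^2\,dx = -2d\int w^2\,dx$, forcing $d=0$, which reduces the equation to $L_0\phi = 0$ and hence $\phi\in\mathrm{Span}\{w'\}$ by (a). The main obstacle is thus part (a), whose non-degeneracy content is substantive in the fractional setting; once granted, part (b) is an elementary manipulation whose key ingredients are the identity $L_0 w = -w^2$ and the self-adjointness of $L_0$ on $L^2(\R)$.
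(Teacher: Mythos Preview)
Your proof is correct and, for part (a) and the computation of $\mathrm{Ker}(L)$ in part (b), follows essentially the same path as the paper: invoke the non-degeneracy result from \cite{frank_2013_uniqueness} for (a), and for $\mathrm{Ker}(L)$ use the identity $L_0 w=-w^2$ to reduce $L_0\phi=-2cw^2$ to $\phi-2cw\in\mathrm{Ker}(L_0)$, then close the loop on $c$.

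Where you genuinely diverge from the paper is in the treatment of $\mathrm{Ker}(L^*)$. The paper produces an explicit preimage of $-w$ under $L_0$, namely $w+x\,\partial_x w$ (obtained by differentiating the scaling family $\lambda\mapsto\lambda w(\lambda x)$ at $\lambda=1$), writes $\phi-c_1(\phi)(w+x\,\partial_x w)\in\mathrm{Ker}(L_0)$, and then forces $c_1(\phi)=0$ using the explicit numerical values $\int_{\R}w^3\,dx=3\pi$ and $\int_{\R}w^2\,dx=2\pi$ specific to $w(x)=2/(1+x^2)$. Your argument instead pairs $L_0\phi=-2dw$ against $w$ and uses only the self-adjointness of $L_0$ together with $L_0 w=-w^2$ to obtain $-d\int w^2=-2d\int w^2$. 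This is both shorter and more robust: it requires neither the scaling identity $L_0(w+xw')=-w$ nor any explicit integrals, and would carry over verbatim to other exponents $s_1$ where $w$ has no closed form. The paper's route, by contrast, makes the structure more visible (it exhibits an actual element mapped to $-w$) and parallels the $\mathrm{Ker}(L)$ argument exactly, at the cost of relying on the special form of $w$ for $s_1=\tfrac12$.
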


\begin{proof}
	The proof of part (a) is given by \cite[Proposition 1.1 and Theorem 2.3]{frank_2013_uniqueness}. To prove part (b) we first notice that $L_0w=-w^2$. If $\phi\in\mbox{Ker}(L)$ then
	\begin{equation}
		L_0\phi=-c(\phi)w^2,\quad\mbox{where}\quad c(\phi)=2\frac{\int_{\R}w\phi dx}{\int_{\R}w^2dx}.
	\end{equation}
	Therefore, by conclusion (a) we get  $\phi - c(\phi) w \in\mbox{Ker}(L_0)$ and in particular
	$$
	\phi = \beta\frac{d w}{d x}  + c(\phi)w,
	$$
	for some constant $\beta$. 	As a consequence, we have
	\begin{equation*}
		c(\phi)=2c(\phi)\frac{\int_{\R}w^2dx}{\int_{\R}w^2dx}=2c(\phi),
	\end{equation*}
	which implies $c(\phi)=0$. Hence, $\phi\in\mbox{Ker}(L_0)$ and we get that $\phi\in \mathrm{Span}\left\{\frac{d w}{d x}\right\}$. Similarly, if $\phi\in \mathrm{Ker}(L^*)$ then
	\begin{equation*}
		L^*\phi=-c_1(\phi)w,\quad \mbox{where}\quad c_1(\phi)=2\frac{\int_{\R}w^2\phi dx}{\int_{\R}w^2dx}.
	\end{equation*}
	Using the fact
	$$L_0(w+x\cdot\partial_xw)=-w,$$
	we have
	\begin{equation}
		\phi-2\frac{\int_{\R}w^2\phi dx}{\int_{\R}w^2dx}(w+x\cdot\partial_xw)\in\mbox{Ker}(L_0).	
	\end{equation}
	Then
\begin{equation*} c_1(\phi)=2c_1(\phi)\frac{\int_{\R}(w+x\cdot\partial_xw)w^2dx}{\int_{\R}w^2dx}=2c_1(\phi)
=2c_1(\phi)\frac{\frac23\int_{\R}w^3dx}{\int_{\R}w^2dx}=2c_1(\phi),
\end{equation*}
	where we used
	$$\int_{\R}w^3dx=3\pi,\quad \int_{\R}w^2dx=2\pi.$$
	Thus $c_1(\phi)=0$ and $\phi\in \mathrm{Span}\left\{\frac{d w}{d x}\right\}$ which proves the third conclusion.
\end{proof}

In the end of this subsection, we provide the analysis of the linear operator $L$ in a framework of weighted $L^\infty$ spaces. For this purpose we consider the following norm for a function defined on $\R$. We define
\begin{equation}
	\label{2.weighted}
	\|\phi\|_*=\|\rho(x)^{-1}\phi\|_{L^\infty(\R)},\quad \mbox{where}\quad \rho(x)=\frac{1}{(1+|x|)^{\mu}},\qquad \frac12<\mu\leq2.
\end{equation}
Given a function $h$ with $\|h\|_*<\infty$, due to the fact that $\mathrm{Ker}(L)={ {Span}}\left\{\frac{d w}{d x}\right\}$, we need to study the related linear problem in the following form
\begin{equation}\label{2.lin-e}	
		L\phi=h+c\frac{dw}{dx}\quad -\infty<x<\infty,\qquad \phi(x)\to0\quad \mbox{as } |x|\to\infty,\qquad\langle \phi,\tfrac{d w}{d x}\rangle=0.
\end{equation}
Our aim is to find $(\phi,c)$ such that \eqref{2.lin-e} holds. Concerning \eqref{2.lin-e} we have the following existence result and a-priori estimate  {for which a proof can be found in \cite[Theorem 4.2]{wei_2019_multi_bump}.}

\begin{theorem}\label{th2.1}
	 If $h$ satisfies $\|h\|_*<\infty$ then problem \eqref{2.lin-e} has an unique solution $\phi=\mathcal{T}(h)$ and $c=c(h)$. Moreover there exists a constant $C>0$ such that for any such $h$
	\begin{equation}
		\label{2.estimate-a}
		\|\mathcal{T}h\|_*\leq C\|h\|_*.
	\end{equation}
\end{theorem}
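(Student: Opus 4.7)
The plan is to follow the standard Liapunov--Schmidt framework in the weighted $L^\infty$ space: first read off $c(h)$ from the adjoint kernel, then prove the a priori estimate \eqref{2.estimate-a} by a blow-up/contradiction argument; existence will then follow from Fredholm theory. To determine $c$, I would pair \eqref{2.lin-e} with $dw/dx$. Proposition \ref{pr2.1}(b) gives $L^\ast(dw/dx) = 0$, so $\langle L\phi, w'\rangle = 0$ and consequently $c\,\|w'\|_{L^2}^2 = -\int_{\mathbb{R}} h\,w'\,dx$. Since $w'(x) = O(|x|^{-3})$ and $|h(x)| \le \|h\|_\ast (1+|x|)^{-\mu}$ with $\mu > 1/2$, this integral converges absolutely and yields $|c(h)| \le C\|h\|_\ast$.

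For the a priori estimate I would argue by contradiction. Suppose sequences $h_n,\phi_n,c_n$ satisfy \eqref{2.lin-e} with $\|\phi_n\|_\ast = 1$ but $\|h_n\|_\ast \to 0$; the previous step forces $c_n \to 0$. Interior regularity for the nonlocal operator $(-\Delta)^{1/2} + (1-2w)$ produces, along a subsequence, locally uniform convergence $\phi_n \to \phi_0$. Passing to the limit together with $\langle \phi_n, w'\rangle = 0$ gives $L\phi_0 = 0$ and $\langle \phi_0, w'\rangle = 0$, hence $\phi_0 \equiv 0$ by Proposition \ref{pr2.1}(b). To upgrade this to decay in the weighted norm, observe that for $|x| > R$ large, $w(x) = O(|x|^{-2})$ is small, so \eqref{2.lin-e} reduces to
\[
(-\Delta)^{1/2}\phi_n + \phi_n = g_n(x), \qquad \|g_n\,\rho^{-1}\|_{L^\infty(|x| > R)} = o(1),
\]
after absorbing $2w\phi_n$, the nonlocal coupling term, the source $h_n$, and $c_n w'$. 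The Green's function $G$ of $(-\Delta)^{1/2}+1$ decays like $|x|^{-2}$ by \eqref{2.asy-1}, and since the weight in \eqref{2.weighted} has $\mu \le 2$ a direct convolution estimate gives $((-\Delta)^{1/2}+1)^{-1}\rho \le C\rho$ at infinity. Combined with the positivity of the resolvent and the uniform bound on $\phi_n$ over $|x|\le R$ coming from the local convergence, this produces $|\phi_n(x)| \le o(1)\rho(x)$ uniformly in $|x| > R$, contradicting $\|\phi_n\|_\ast = 1$.

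With the a priori estimate in hand, existence is a Fredholm-alternative argument: the map $\phi \mapsto L\phi$ restricted to $\{\phi \in L^\infty_\rho : \langle \phi, w'\rangle = 0\}$ is a compact perturbation of $(-\Delta)^{1/2} + 1$ and hence Fredholm of index zero; injectivity supplied by the a priori estimate then yields surjectivity and the bound \eqref{2.estimate-a}. The main obstacle is the decay step: because $(-\Delta)^{1/2}$ is nonlocal there is no direct pointwise maximum principle, so the barrier must be built by convolution against the explicit Green's function, and the matching between its $|x|^{-2}$ tail and the weight $\mu$ is sharp. The restrictions $1/2 < \mu \le 2$ in \eqref{2.weighted} are in fact dictated precisely by the convergence of the pairing $\int hw'\,dx$ in the first step and by the decay of this Green's function in the third.
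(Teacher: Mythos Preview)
The paper does not supply its own proof of this theorem; it simply refers the reader to \cite[Theorem~4.2]{wei_2019_multi_bump}. Your outline is the standard Liapunov--Schmidt plus blow-up scheme for nonlocal linear problems in weighted $L^\infty$ spaces, and it is essentially the argument carried out in that reference, so the approaches coincide.

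One small correction to your closing remark: the lower bound $\mu>1/2$ is not what makes the pairing $\int_{\mathbb{R}} h\,w'\,dx$ converge (since $w'(x)=O(|x|^{-3})$, that integral is finite for any $\mu>-2$). The role of $\mu>1/2$ is rather that it forces functions in the weighted class to lie in $L^2(\mathbb{R})$, so that the orthogonality constraint $\langle\phi,w'\rangle=0$ and the adjoint identity $\langle L\phi,w'\rangle=\langle\phi,L^*w'\rangle$ used in your first step are rigorously justified. Your identification of $\mu\le 2$ with the $|x|^{-2}$ tail of the resolvent kernel of $(-\Delta)^{1/2}+1$ is correct.
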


\subsection{The rigorous proof of the existence results}\label{subsec:rigorous-existence}
In this section we shall give rigorous proof of Theorem \ref{th1.exist}.

\subsubsection{Error estimates}
We begin by studying $v_w(x)$ for which we prove improved estimates. By the definition \eqref{2.v} we see that
\begin{equation}
	\s v_w=\tau_sw^2,\quad v_w(x)\to0~\ \mbox{as}~\ |x|\to+\infty,
\end{equation}
where $\tau_s$ is given in \eqref{2.tau}. {We will consider $v_w(x)$ in the two disjoint regions $x\in I_s$ and $x\in\mathbb{R}\setminus I_s$ where we define the interval}
\begin{equation}
	\label{3.region}
	I_s\equiv \left[-100(1-2s)^{-1},~100(1-2s)^{-1}\right].
\end{equation}
Starting with $x\in I_s$ we use the Green representation formula {\eqref{2.v} and the asymptotics  \eqref{2.exp-g}} to get
\begin{equation}
	\begin{split}
		v_w(x)=&\tau_s\frac{\Gamma(1-2s)\sin(s\pi)}{\pi}\int_{\R}w^2(y)dy
+(2s-1)\tau_s\frac{\Gamma(1-2s)\sin(s\pi)}{\pi}\int_{\R}\log|x-y|w^2(y)dy\\
		&+(2s-1)^2\tau_s\frac{\Gamma(1-2s)\sin(s\pi)}{2\pi}\int_{\R}w^2(y)(\log|x-y|)^2dy+o((2s-1)^2)\\
		=&1+(2s-1)\frac{1}{\int_{\R}w^2(y)dy}\int_{\R}\log|x-y|w^2(y)dy\\
		&+\frac{(2s-1)^2}{2}\frac{1}{\int_{\R}w^2(y)dy}\int_{\R}(\log|x-y|)^2w^2(y)dy+o((2s-1)^2).
	\end{split}
\end{equation}
Next we define
\begin{equation}
	H_i(x)=\frac{\int_{\R}w^2(y)(\log|x-y|)^idy}{\int_{\R}w^2(y)dy},
\end{equation}
and readily deduce that $H_i(x)$ is even since $w(y)$ is even.
Furthermore, as $|x|$ is sufficiently large, by standard potential analysis we can write
\begin{equation}
	H_i(x)=(\log|x|)^i+f(x),
\end{equation}
where $f$ is an even function and itself with its first derivative are uniformly bounded.

While for $|x|\geq 100(1-2s)^{-1}$, using the potential analysis, we get
that
\begin{equation}
	\label{3.1.far}
	v_w(x)\geq c\tau_s|x|^{2s-1}\quad \mbox{for}\quad |x|\geq100(1-2s)^{-1}.
\end{equation}

Summarizing the above estimates, we have the following conclusion.

\begin{lemma}
	\label{le3.1}
	 {Letting} $v_w$ be defined as in \eqref{2.v}  {we have the following estimates}:
	\begin{itemize}
		\item [(a).]  {If} $x\in I_s$,  {then}
		\begin{equation}
			v_w(x)=1+(2s-1)H_1(x)+\frac{(2s-1)^2}{2}H_2(x)+o((1-2s)^3).
		\end{equation}
		
		\item [(b).]  {If} $x\in \R\setminus I_s$,  {then}
		\begin{equation}
			v_w(x)\geq c\tau_s|x|^{2s-1}
		\end{equation}
		{for some constant $c>0$.}
	\end{itemize}
\end{lemma}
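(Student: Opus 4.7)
The plan is to split according to the two regimes in the statement: inside $I_s$, perform a Taylor expansion of the Green kernel in powers of $(2s-1)$, and outside $I_s$, produce a direct lower bound from the Green representation.

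For part (a), I would begin from the Green representation
\begin{equation*}
v_w(x) = \tau_s \int_{\R} w^2(y) G(x-y)\,dy = \tau_s \tfrac{\Gamma(1-2s)\sin(s\pi)}{\pi} \int_{\R} w^2(y)\,|x-y|^{2s-1}\,dy,
\end{equation*}
write $|x-y|^{2s-1} = \exp\bigl((2s-1)\log|x-y|\bigr)$, and expand the exponential. The central cancellation is $\tau_s \cdot \frac{\Gamma(1-2s)\sin(s\pi)}{\pi} = \frac{1}{\int_{\R} w^2}$, which eliminates the singular $\Gamma(1-2s)$ prefactor. The coefficients of the expansion in powers of $(2s-1)$ are then precisely $1$, $H_1(x)$, $H_2(x)/2$, and so on. The remaining technical point is that each integral $\int w^2(y)(\log|x-y|)^k\,dy$ is well-defined (this follows from the $|y|^{-4}$ decay of $w^2(y) = 4/(1+y^2)^2$ and the local integrability of $\log$) and that the Taylor remainder is genuinely $o((1-2s)^3)$ uniformly for $x \in I_s$. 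The constraint $|x| \leq 100(1-2s)^{-1}$ makes $(2s-1)\log|x-y|$ of order $(1-2s)\log((1-2s)^{-1})\to 0$ on the effective support of $w^2$, so a dominated-convergence argument closes the remainder.

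For part (b), I would use that the mass of $w^2$ is concentrated near the origin while $|x|$ is large. For $|y| \leq |x|/2$, the triangle inequality gives $|x-y| \leq 3|x|/2$, hence $|x-y|^{2s-1} \geq (3/2)^{2s-1}|x|^{2s-1}$ since $2s-1 < 0$. Restricting the Green integral to $\{|y|\leq |x|/2\}$ then yields
\begin{equation*}
v_w(x) \geq \tau_s\,\tfrac{\Gamma(1-2s)\sin(s\pi)}{\pi}\,(3/2)^{2s-1}|x|^{2s-1} \int_{|y|\leq |x|/2} w^2(y)\,dy.
\end{equation*}
For $|x| \geq 100(1-2s)^{-1}$ and $s$ close to $1/2$, the mass integral on the right is bounded below by a fixed positive constant independent of $s$, the factor $(3/2)^{2s-1}$ is bounded below by $2/3$, and $\Gamma(1-2s)\sin(s\pi)/\pi$ blows up like $(1-2s)^{-1}/\pi$; together these yield the claimed lower bound $v_w(x) \geq c\tau_s|x|^{2s-1}$.

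The main obstacle is the uniform control of the Taylor remainder in part (a). One must simultaneously manage the small parameter $(2s-1)$, the growth of $\log|x|$ up to order $\log((1-2s)^{-1})$ for $x \in I_s$, and the integrability of $w^2$ against higher powers of $\log|\cdot|$. The precise balance $(1-2s)\log((1-2s)^{-1}) \to 0$ is what saves the expansion, and tracking the rate at which each term of order $k$ contributes a factor of size $\bigl((1-2s)\log((1-2s)^{-1})\bigr)^k$ is the technical step that must be carried out carefully to justify the stated $o((1-2s)^3)$ remainder uniformly in $x \in I_s$.
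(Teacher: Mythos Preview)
Your proposal is correct and follows essentially the same route as the paper: for (a) you expand $|x-y|^{2s-1}=\exp((2s-1)\log|x-y|)$ and use the identity $\tau_s\cdot\tfrac{\Gamma(1-2s)\sin(s\pi)}{\pi}=(\int_{\R}w^2)^{-1}$ to produce the coefficients $1,H_1,H_2/2$, exactly as the paper does; for (b) the paper simply invokes ``standard potential analysis'' without details, and your triangle-inequality argument on $\{|y|\le|x|/2\}$ is precisely the concrete version of that step. Your identification of the uniform remainder control on $I_s$ as the one genuine technical point is also apt --- the paper does not spell this out either.
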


We now focus on estimating the quantity {$S(w) = -\sh w - w + v_w^{-1}w^2$ which, using \eqref{2.ground}, can be rewritten as }
$$S(w)=\frac{w^2}{v_w}-w^2.$$
Let us first analyze the term $S(w)$ in the interval $I_s$ introduced in \eqref{3.region}. It is easy to see that in this region we have
\begin{equation*}
	v_w(x)=1+O\left((1-2s)^{1-\delta}\right),
\end{equation*}
where $\delta$ is any small positive number, {and therefore}
\begin{equation*}
	\frac{1-v_w}{v_w}w^2=O\left((1-2s)^{1-\delta}w^2\right).
\end{equation*}
{Writing
\begin{equation}
	\label{3.error}
	S(w)=\frac{1-v_w}{v_w}w^2.
\end{equation}
we deduce that for $x\in I_s$}
\begin{equation}
	|S(w)|=O\left((1-2s)^{1-\delta}{\rho(x)}\right)\quad \mbox{for}\quad |1-2s|\ll1.
\end{equation}

On the other hand, by Lemma \ref{le3.1} we find that for $x\in\mathbb{R}\setminus I_s$
\begin{equation}
	\label{3.error-1}
	\begin{aligned}
		|S(w)|\leq C(1-2s)^{-1}|x|^{1-2s}w^2=C(1-2s)^{-1}|x|^{-2}\rho(x)
		=O(1-2s)\rho(x).
	\end{aligned}
\end{equation}
In conclusion, we have
\begin{lemma}
	\label{le3.2}
	Let $\mu=2$ in the definition of $\|\cdot\|_*$. If $1-2s$ is sufficiently small then we have
	$$\|S(w)\|_*\leq C(1-2s)^{1-\delta},$$	
	where $C$ is some constant independent of $\e$ and $\delta$ is any small positive number independent of $\e$.
\end{lemma}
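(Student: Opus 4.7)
The plan is to exploit the identity $S(w) = (1-v_w)w^2/v_w$ together with the two parts of Lemma \ref{le3.1}, splitting the real line into the inner region $I_s$ and its complement. The crucial observation is that on $I_s$ the deviation $|1-v_w|$ is small (measured by a log-loss), while on $\R\setminus I_s$ the smallness comes from the algebraic decay of $w$ beating the lower bound $v_w\gtrsim \tau_s|x|^{2s-1}$.

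For $x\in I_s$, I would invoke Lemma \ref{le3.1}(a) to write
\begin{equation*}
1-v_w(x)=-(2s-1)H_1(x)-\tfrac12(2s-1)^2 H_2(x)+o((1-2s)^2).
\end{equation*}
Using the representation $H_i(x)=(\log|x|)^i+f_i(x)$ with $f_i$ bounded (together with the boundedness of $H_i$ on compact sets), one gets $|H_i(x)|\le C(1+\log^i(1+|x|))$, hence on $I_s$
\begin{equation*}
|H_i(x)|\le C\bigl(1+\log^i((1-2s)^{-1})\bigr).
\end{equation*}
Since $(1-2s)\log((1-2s)^{-1})\le C_\delta(1-2s)^{1-\delta}$ for any fixed $\delta>0$, and the analogous estimate with $\log^2$ is even smaller, I obtain $|1-v_w(x)|\le C(1-2s)^{1-\delta}$ uniformly on $I_s$. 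In particular $v_w\ge 1/2$ when $1-2s$ is small, so $|(1-v_w)/v_w|\le C(1-2s)^{1-\delta}$. Combined with $w^2(x)=4(1+x^2)^{-2}\le 4\rho(x)$ for $\mu=2$, this yields
\begin{equation*}
\rho(x)^{-1}|S(w)(x)|\le C(1-2s)^{1-\delta}\quad\text{for }x\in I_s.
\end{equation*}

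For $x\in\R\setminus I_s$, the quantity $v_w$ is bounded above by $\|v_w\|_{L^\infty}$ uniformly in $s$ (since $v_w\to 0$ at infinity), so $|1-v_w|\le C$. By Lemma \ref{le3.1}(b) and the explicit formula \eqref{2.tau} giving $\tau_s\sim(1-2s)/2$ as $s\to 1/2$, we have $v_w(x)\ge c(1-2s)|x|^{2s-1}$. Hence
\begin{equation*}
|S(w)(x)|\le C\,\frac{w^2(x)\,|x|^{1-2s}}{1-2s}\le C\,\frac{\rho(x)(1+|x|)^{-2}|x|^{1-2s}}{1-2s}.
\end{equation*}
Since $|x|\ge 100(1-2s)^{-1}$ on this region, $(1+|x|)^{-2}|x|^{1-2s}\le C|x|^{-1-2s}\le C(1-2s)^{1+2s}$, and dividing by $1-2s$ leaves $C(1-2s)^{2s}\le C(1-2s)$ for $s$ near $1/2$. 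Therefore $\rho(x)^{-1}|S(w)(x)|\le C(1-2s)$ on $\R\setminus I_s$. Taking the worse of the two bounds gives $\|S(w)\|_*\le C(1-2s)^{1-\delta}$.

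The only nontrivial step is the inner-region bound: it requires tracking the logarithmic corrections in $H_i(x)$ carefully and using that $(1-2s)^k\log^k((1-2s)^{-1})\to 0$ faster than any $(1-2s)^{1-\delta}$. The outer-region estimate is a routine computation in which the singular factor $\tau_s^{-1}\sim(1-2s)^{-1}$ from the lower bound on $v_w$ is exactly compensated by the extra decay of $w^2$ on the tail $|x|\ge100(1-2s)^{-1}$, so no loss occurs there.
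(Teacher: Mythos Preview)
Your proof is correct and follows essentially the same approach as the paper: split into $I_s$ and $\R\setminus I_s$, use Lemma~\ref{le3.1}(a) together with the logarithmic bound $(1-2s)\log((1-2s)^{-1})\le C_\delta(1-2s)^{1-\delta}$ on the inner region, and use Lemma~\ref{le3.1}(b) combined with $\tau_s\sim(1-2s)/2$ and the extra $|x|^{-2}$ decay of $w^2$ on the outer region. Your write-up is in fact slightly more careful than the paper's in tracking the exponent in the outer region (you correctly get $(1-2s)^{2s}$ before simplifying to $(1-2s)$).
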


\subsubsection{The existence of solution}
Recall that the original problem was cast in the form
\begin{equation}
	\label{3.2.1}
	\sh U+U-\frac{U^2}{T(V^2)}=0.
\end{equation}
Rather than solving \eqref{3.2.1} directly we consider instead the problem of finding $A$  satisfying
\begin{equation}
	\sh A+A-\frac{A^2}{T(A^2)}=c\frac{d w}{d x},
\end{equation}
for a certain constant $c$, and such that $\langle A-w, Z\rangle=0$. Rewriting $A=w+\phi$ we get that this problem is equivalent to
\begin{equation}
	\begin{aligned}
		&\sh\phi+\phi-2w\phi+2w^2\frac{\int_{\R}w\phi dx}{\int_{\R}w^2dx}\\	
		&=-\sh w-w+\frac{w^2}{v_w}+\frac{(w+\phi)^2}{T((w+\phi)^2)}-\frac{w^2}{v_w}-2w\phi+2w^2\frac{\int_{\R}w\phi dx}{\int_{\R}w^2dx}+c\frac{d w}{dx}\\	
		&=S(w)+N(\phi)+c \frac{dw}{dx}
	\end{aligned}
\end{equation}
and
\begin{equation}
	N(\phi)=\frac{(w+\phi)^2}{T((w+\phi)^2)}-\frac{w^2}{v_w}-2w\phi
	+2w^2\frac{\int_{\mathbb{R}}w\phi dx}{\int_{\mathbb{R}}w^2dx}.
\end{equation}
Using the operator $\mathcal{T}$ introduced in Theorem \ref{th2.1}, we see that the problem is equivalent to finding a $\phi\in\mathcal{H}$ so that
$$
{\phi=Q(\phi)\equiv\mathcal{T}(S(w)+N(\phi))}.
$$
We shall show that this fixed point problem has a unique solution in the region of the form
\begin{equation}
	\label{3.setD}
	\mathcal{D}=\left\{\phi\in\mathcal{H}\mid \|\phi\|_*\leq C(1-s)^{1-\delta}\right\},
\end{equation}
for any small positive constant $\delta$, provided that $1-2s$ is sufficiently small. Here
\begin{equation}
	\mathcal{H}=\left\{\phi\in L^\infty\bigr| \big\langle\phi,\tfrac{d w}{d x}\big\rangle=0\right\}.	
\end{equation}

We have already proved that $\|S(w)\|_*\leq C(1-2s)^{1-\delta}$. In the following lemma we estimate the higher order error term $N(\phi)$.

\begin{lemma}
	\label{le3.order}
	Assume that $\phi\in\mathcal{D}$, then for $1-2s$ sufficiently small, we have
	\begin{equation}
		\|N(\phi)\|_*\leq C(\|\phi\|_*+\sigma(1-2s))\|\phi\|_*,
	\end{equation}
	where $\sigma(1-2s)\leq C(1-2s)^{1-\delta}$ as $1-2s\to0$.
\end{lemma}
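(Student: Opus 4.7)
The plan is to exploit the structure of $N(\phi)$: by design, the term $L(\phi)$ already absorbs the linearization of $\frac{(w+\phi)^2}{T((w+\phi)^2)}$ at $\phi=0$ computed in the formal limit $s\to\tfrac12$ (i.e.\@ with $v_w\equiv 1$ and $T(w\phi)$ replaced by the mean value $\tfrac{\int w\phi}{\int w^2}$). Hence $N(\phi)$ consists of two qualitatively different pieces: (i) a genuinely quadratic-in-$\phi$ remainder, and (ii) a linear-in-$\phi$ discrepancy that measures how far $v_w$ and $T(w\phi)$ are from their formal limits. We will bound these two pieces separately in the weighted norm $\|\cdot\|_*$ with weight $\rho(x)=(1+|x|)^{-2}$.

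First I would write $A:=2T(w\phi)+T(\phi^2)$ and expand
\begin{equation*}
\frac{(w+\phi)^2}{T((w+\phi)^2)}=\frac{w^2+2w\phi+\phi^2}{v_w}\sum_{k\geq0}\left(-\tfrac{A}{v_w}\right)^k=\frac{w^2}{v_w}+\frac{2w\phi}{v_w}-\frac{2w^2T(w\phi)}{v_w^2}+Q(\phi),
\end{equation*}
where $Q(\phi)$ collects all terms that are at least quadratic in $\phi$ (namely $\phi^2/v_w$, cross terms involving $A$ and $2w\phi+\phi^2$, and the higher-order tail of the geometric series). Substituting this into the definition of $N(\phi)$ produces the clean decomposition
\begin{equation*}
N(\phi)=\underbrace{2w\phi\left(\tfrac{1}{v_w}-1\right)}_{(\mathrm{I})}-\underbrace{2w^2\left(\tfrac{T(w\phi)}{v_w^2}-\tfrac{\int w\phi\,dy}{\int w^2\,dy}\right)}_{(\mathrm{II})}+Q(\phi).
\end{equation*}

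Next I would estimate each piece in $\|\cdot\|_*$. For (I), on $I_s$ Lemma~\ref{le3.1}(a) gives $|1-v_w|=O((1-2s)^{1-\delta})$, while on $\R\setminus I_s$ the lower bound $v_w\geq c\tau_s|x|^{2s-1}$ together with the $|x|^{-2}$ decay of $w$ and of $\rho$ yields a harmless tail estimate; combined this produces $\|(\mathrm{I})\|_*\lesssim (1-2s)^{1-\delta}\|\phi\|_*$. For (II), I would substitute \eqref{2.exp-g} into the Green representation $T(w\phi)(x)=\tau_s\int G(x-y)w(y)\phi(y)\,dy$ exactly as was done for $v_w$ in the proof of Lemma~\ref{le3.1}, which produces $T(w\phi)=\tfrac{\int w\phi}{\int w^2}+O((1-2s)^{1-\delta})\|\phi\|_*$ pointwise on $I_s$, with the far-field contribution again controlled by $w$'s decay; combined with the estimate $v_w^{-2}-1=O((1-2s)^{1-\delta})$, this gives $\|(\mathrm{II})\|_*\lesssim (1-2s)^{1-\delta}\|\phi\|_*$. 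For the quadratic remainder $Q(\phi)$, the key pointwise bounds are $|A|\lesssim \|\phi\|_*$ (using that $w\phi$ and $\phi^2$ decay fast enough for $T$ to be bounded) and $|v_w|\geq c>0$ uniformly on $I_s$ with the same far-field control as above, yielding $\|Q(\phi)\|_*\lesssim \|\phi\|_*^2$. Adding the three contributions produces the claimed bound with $\sigma(1-2s)=C(1-2s)^{1-\delta}$.

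The main obstacle I anticipate is the far-field control of the nonlocal quantities $T(w\phi)$ and $T(\phi^2)$ in the region $\R\setminus I_s$ where $G(x-y)$ grows like $|x-y|^{2s-1}$ and $v_w$ is no longer bounded below by a constant. Here one must carefully balance the algebraic growth of the Green's kernel against the $|x|^{-2}$ decay guaranteed by the weight $\rho$ in the definition of $\|\cdot\|_*$ (and by the $w(x)=2/(1+x^2)$ factor) to ensure that the prefactor $w^2/v_w^2$ in (II) and in $Q(\phi)$ does not spoil the $\rho(x)$-weighted bound. This is exactly the mechanism by which the choice $\mu=2$ in \eqref{2.weighted} is natural, and it dictates the appearance of $(1-2s)^{1-\delta}$ rather than $(1-2s)$ in the final estimate.
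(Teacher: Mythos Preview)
Your proposal is correct and follows essentially the same strategy as the paper. The paper also splits $N(\phi)$ into a linear-discrepancy part (its $N_2$, containing exactly your terms (I) and (II) plus the stray $\phi^2/v_w$) and a genuinely quadratic part (its $N_1$, corresponding to your $Q(\phi)$), and bounds them using Lemma~\ref{le3.1} on $I_s$ and the potential estimate $v_w\geq c\tau_s|x|^{2s-1}$ on $\R\setminus I_s$; the only organizational difference is that the paper separates the two spatial regions first and then decomposes, whereas you decompose first and then treat each region.
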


\begin{proof}
	Let us assume first $x\in\R\setminus I_s$. In this region we have $w(x)\leq C\rho(x) $. Combined with the standard potential analysis one can show that
	\begin{align*}
		& T((w+\phi)^2)\geq C(1-2s)|x|^{2s-1},\\
		& T(w\phi)\leq C(1-2s)|x|^{2s-1}\|\phi\|_*, \\
		&T(\phi^2)\leq C(1-2s)^{2-\delta}|x|^{2s-1}\|\phi\|_*.
	\end{align*}
	As a consequence,
	\begin{equation*}
		\begin{aligned}
			|N(\phi)|\leq&~\left(\frac{2wv_w\phi+v_w\phi^2-2w^2T(w\phi)-w^2T(\phi^2)
			}{v_wT((w+\phi)^2)}-2w\phi+2w^2\frac{\int_{\R}w\phi dx}{\int_{\R}w^2dx}\right)\\
			\leq&~C\left(\frac{\rho(x)}{(1-2s)(1+|x|)^{2s+1}}+
			\frac{\rho(x)}{(1-2s)(1+|x|)^{2s+1}}\|\phi\|_*\right)\|\phi\|_*+C\rho(x)^2\|\phi\|_*.
		\end{aligned}	
	\end{equation*}
	Therefore we have
	\begin{equation}
		\label{3.n-1}
		|\rho^{-1}N(\phi)|\leq C(\|\phi\|_*+(1-2s)^{6s-2})\|\phi\|_*,
	\end{equation}	
	provided $s\to\frac12$.
	
	Considering next the case $x\in I_s$ we decompose $N(\phi)$ in the form
	$$N(\phi)=N_1(\phi)+N_2(\phi),$$
	where
	\begin{align*}
		N_1(\phi)=(w+\phi)^2\Big[\frac{1}{T((w+\phi)^2)}-\frac{1}{v_w}+\frac{2T(w\phi)}{v_w^2}\Big]
		-(2w+\phi)\phi\frac{2T(w\phi)}{V^2}
	\end{align*}
	and
	\begin{align*}
		N_2(\phi)=-2\phi w\left(1-\frac{1}{v_w}\right)+2U^2\left(\frac{\int_{\R}w\phi dx}{\int_{\R}w^2dx}-\frac{T(w\phi)}{v_w^2}\right)+\frac{\phi^2}{v_w}.
	\end{align*}
	It is known that
	\begin{align*}
		v_w(x)=1+O((1-2s)^{1-\delta})
	\end{align*}
	and
	\begin{align*}
		T(w\phi)=\frac{\int_{\R}w\phi dx}{\int_{\R}w^2dx}+O((1-2s)^{1-\delta}).
	\end{align*}
	and in particular $|T(w\phi)|=O(\|\phi\|_*)$. Likewise, $T(\phi^2)=O(\|\phi\|_*^2)$. Combining these facts we obtain
	\begin{align*}
		|N_1(\phi)|\leq C(w+\phi)^2T(\phi^2)+C\left(2w\phi+\phi^2\right)T(w\phi)
		\leq C\rho(x)\|\phi\|_*^2.
	\end{align*}
	A similar analysis yields
	\begin{align*}
		|N_2(\phi)|\leq C(1-2s)^{1-\delta}\left(|\phi|w+\rho^2\|\phi\|_*\right)+C|\phi|^2,
	\end{align*}
	and therefore
	\begin{align*}
		\|N(\phi)\|_*\leq C(\|\phi\|_*^2+\sigma(1-2s)\|\phi\|_*)
	\end{align*}
	 for $x\in I_s$. Together with \eqref{3.n-1} this proves the lemma.
\end{proof}

With Lemma \ref{le3.order} we are able to give the proof of Theorem \ref{th1.exist}.

\begin{proof}[Proof of Theorem \ref{th1.exist}.]
	Using the definition of the corresponding norms, repeating almost the same arguments as Lemma \ref{le3.order} one can prove that if  {$\|\phi_i\|_*\leq C(1-2s)^{1-\delta}$ for $i=1,2$, }
	then, given any small $\kappa\in(0,1)$, we have the following inequality
	\begin{equation}
		\|N(\phi_1)-N(\phi_2)\|_*\leq \kappa\|\phi_1-\phi_2\|_*,
	\end{equation}
	provided $1-2s$ is sufficiently small. As a consequence, we get that the operator $Q$ is a contraction mapping in the set $\mathcal{D}$ defined in \eqref{3.setD}. On the other hand, we also get from Lemma \ref{le3.order} that $Q$ maps $\mathcal{D}$ into itself. Thus, by using the Banach fixed point theorem, we get the existence of a unique fixed point of $Q$ in $\mathcal{D},$ that is,
	\begin{equation}
		\label{3.f-lin}
		\sh \phi+\phi-2w\phi+2w^2\frac{\int_{\R}w\phi dx}{\int_{\R}w^2dx}=S(w)+N(\phi)+c\frac{d w}{d x}.
	\end{equation}
	Next, we notice that $S(w)$ is an even function and the linearized problem can be solved in the even symmetric function class. Without loss of generality, we can pose the further restriction on the set $\mathcal{H}$ that all the perturbations $\phi$ are even symmetric functions. As a consequence, we see that apart for the term $\frac{d w}{d x}$ all the remaining terms are even symmetric and this implies that $c=0.$ Hence $w+\phi$ is a solution to the original Gierer-Meinhardt system \eqref{6.1-core}.
\end{proof}

\subsection{Stability Analysis: large and small eigenvalues}
\label{sec6.3}

{In this section we characterize the linear stability of the ground state solution constructed in \S \ref{subsec:rigorous-existence} above by considering both large and small eigenvalues.}

\subsubsection{Large eigenvalue}
{Linearizing \eqref{6.1-core} about the equilibrium solution $(u,v)$ we obtain the following eigenvalue problem}
\begin{subequations}
	\label{4.eig}
	\begin{align}[left=\empheqlbrace]
		& \sh\phi+\phi-2V^{-1}U\phi+V^{-2}U^2\psi+\la_s\phi=0, & -\infty<x<\infty,\\
		& \s\psi-2U\phi+\tau\la_s\psi=0, & -\infty<x<\infty,
	\end{align}
\end{subequations}
where $\la_s\in\mathbb{C}$, $\phi\in H^{1}(\R)$, and $\psi\in H^{2s}(\R)$. Let
$$\hat U=\tau_s^{-1}U,\quad \hat V=\tau_s^{-1}V.$$
Then \eqref{4.eig} can be rewritten as
\begin{subequations}
	\label{4.eig-1}
	\begin{align}[left=\empheqlbrace]
		& \sh\phi+\phi-2\hat{V}^{-1}\hat{U}\phi+ \hat{V}^{-2}\hat{U}^2\psi+\la_s\phi=0, & -\infty<x<\infty, \label{4.eig-1a}\\
		& \s\psi-2\tau_s\hat U\phi+\tau\la_s\psi=0, & -\infty<x<\infty.\label{4.eig-1b}
	\end{align}
\end{subequations}
Our aim is to study the large eigenvalues, i.e.\@ those for which we may assume that there exists $c>0$ such that $|\la_s|\geq c>0$ for $1-2s$ is small. If $\Re(\la_s)<-c$ then we are done and we therefore may assume that $\Re(\la_s)\geq-c$. For a subsequence $1-2s\to0$ and $\la_s\to\la_0$ we shall derive a limiting NLEP satisfied by $\la_0$.

To simplify our argument, we shall assume $\tau=0$ and the general case can be proved by a perturbation argument.  When $x\in I_s$, we calculate
\begin{equation}
	\label{4.psi}
		\psi(x)=2\tau_s\int_{\R} {G(x-y)\hat U(y)\phi(y)dy}
		=2\frac{\int_{\R}w\phi dy}{\int_{\R}w^2dy}+O((1-2s)^{1-\delta})\|\phi\|_{H^1(\R)}.
\end{equation}
Substituting this into \eqref{4.eig-1a}, and letting $2s-1\rightarrow 0$, we derive the following nonlocal eigenvalue problem
\begin{equation}
	\label{4.non}
	\sh \phi+\phi-2w\phi+2\frac{\int_{\R}w\phi dx}{\int_{\R}w^2dx}w^2+\la_0\phi=0.
\end{equation}
By Theorem 3.2 in \cite{gomez_2022} we see that $\lambda_0<0$, which implies that the large eigenvalues are stable.

\subsubsection{Small eigenvalue}
{We next consider the} small eigenvalues of \eqref{4.eig-1}, i.e. those for which $\la_s\to0$ as $s\to\frac12$. In last section, we have already shown the existence of solutions $(\hat U,\hat V)$ to \eqref{2.sys}. We notice that this equation is translation invariant. By differentiating \eqref{2.sys} we derive that
\begin{subequations}
	\begin{align}[left=\empheqlbrace]
		& \sh \frac{d \hat U}{d x}+ \frac{d \hat U}{d x}
		-2\frac{\hat U}{\hat V}\frac{d \hat U}{d x}+\frac{\hat U^2}{\hat V^2}\frac{d \hat V}{d x}=0, & -\infty<x<\infty,\\
		& \s\frac{d\hat V}{d x}-2\tau_s\hat U\frac{d \hat U}{dx}=0, & -\infty<x<\infty.
	\end{align}
\end{subequations}
This suggests that $(\phi,\psi)$ of \eqref{4.eig-1} can be written as
\begin{equation}
	\phi=a\frac{d\hat U}{d x}+\phi^\perp, \quad\mbox{and}\quad \psi=a\frac{d\hat V}{d x}+\psi^\perp,
\end{equation}
where $\phi^\perp\perp\frac{d\hat U}{d x}$ and  $\psi^\perp$ satisfy
\begin{subequations}
	\label{4.small-1}
	\begin{align}[left=\empheqlbrace]
		& \sh\phi^\perp+\phi^\perp-2\hat{V}^{-1}\hat{U}\phi^\perp+\hat{V}^{-2}\hat{U}^2\psi^\perp+\la_s\frac{d\hu}{dx}+\la_s\phi^\perp=0, & -\infty<x<\infty, \label{4.small-1a}\\
		& \s\psi^\perp-2\tau_s\hat U\psi^\perp=0, & -\infty<x<\infty.
	\end{align}
\end{subequations}
As $s\to\frac12$, we know that
\begin{equation*}
	\frac{\hu}{\hv}\to w\quad \mbox{and}\quad \frac{\hu^2}{\hv^2}\psi^\perp
	\to 2\frac{\int_{\R}w\phi^\perp dy}{\int_{\R}w^2dy}w^2.
\end{equation*}
Multiplying \eqref{4.small-1a} by $\phi^\perp$ we have
\begin{equation}
	\label{4.small-2}
	\la_s\int_{\R}|\phi^\perp|^2dx
	=-\int_{\R}\left(\sh\phi^\perp+\phi^\perp-2\frac{\hat U}{\hat V}\phi^\perp+\frac{\hu^2}{\hv^2}\psi^\perp\right)\phi^\perp dx.
\end{equation}
From Lemma A.2 in \cite{gomez_2022} we have that
\begin{equation*}
	\begin{aligned}
		L_1(\phi^\perp,\phi^\perp)
		=~&\int_{\R}\left(|(-\Delta)^{\frac{1}{4}}\phi^\perp|^2
		+|\phi^\perp|^2-2w|\phi^\perp|^2+2\frac{\int_{\R}w\phi^\perp dx\int_{\R}w^2\phi^\perp dx}{\int_{\R}w^2dx}\right)\\
		\geq ~&\frac{\int_{\R}w^3dx\left(\int_{\R}w\phi^\perp dx\right)^2}{\left(\int_{\R}w^2dx\right)^2}+a_1 {\inf_{\psi\in X_1}\|\phi^\perp - \psi\|_{L^2(\mathbb{R})}},
	\end{aligned}
\end{equation*}
where $a_1>0$ and
$$X_1=\mbox{Span}\left\{w,\frac{dw}{d x}\right\}.$$
Since $\phi^\perp\perp\frac{d\hu}{d x}$ and $\hat U$ is well approximated by $w$, we get from \eqref{4.small-2} that
\begin{equation}
	\la_s\int_{\R}|\phi^\perp|^2dx\leq 0.
\end{equation}
Hence, we have shown all the small eigenvalues are stable. Thus, Theorem \ref{th1.stable} follows by combining the conclusions of the last two sections.

\section{Discussion}\label{sec:discussion}

In this paper we have used formal asymptotic methods to study the existence and linear stability of localized solutions for the fractional Gierer-Meinhardt system where the fractional order of the inhibitor is $s_2\in(0,1/2)$. These results extend those previously obtained in \cite{gomez_2022} and \cite{medeiros_2022} for $s_2\in(1/2,1)$ and $s_2=1/2$ respectively. Using the method of matched asymptotic expansions the construction of localized solutions was reduced to solving a system of nonlinear algebraic equations while the study of their linear stability was reduced to analyzing a globally coupled eigenvalue problem. We found that when $D=O(\varepsilon^{2	s_1-1})$ both symmetric and asymmetric multi-spike solutions can be constructed though the latter were found to always be linearly unstable. On the other hand symmetric spikes were found to have stability regions outside of which they may undergo either a competition instability or a Hopf bifurcation. Using a leading order theory we found that the competition instability threshold is monotone decreasing in $s_1$ and it is either monotone decreasing in $s_2$ when $s_1>0.5$ or non-monotonic (first increasing and then decreasing) when $s_1<0.5$. In addition we found that the Hopf bifurcation threshold increases with $1/4<s_1<1$ provided $s_2$ and $\kappa$ are large enough, whereas it decreases with $0<s_2<1/2$ for all values of $s_1$ and $\kappa$. We also computed higher-order stability thresholds for specific cases of one- and two-spike solutions and these were supported by full numerical simulations of the system \eqref{eq:frac-gm-full-system}. Finally, in addition to the linear stability over an $O(1)$ timescale we also determined that spike solutions may be susceptible to drift instabilities leading to mutual repulsion between spikes, though these arise over a much slower $O(\varepsilon^{3-2s_2})$ timescale.

A key component in the formal construction of multi-spike solutions is the core problem \eqref{eq:core_problem} which was considered in detail numerically in \S\ref{subsec:core-problem} for general $s_2\in(0,1/2)$ and rigorously in \S\ref{sec:rigorous} for $s_2\approx 1/2$. We found that the behaviour of the far-field constant $\mu(S)$ shares some properties with its counterpart in the three-dimensional Gierer-Meinhardt system previously studied in \cite{gomez_2021}. In particular we used numerical continuation to deduce the existence of a value $S=S_\star$ for which the core problem admits a ground state solution (i.e. one for which $\mu(S_\star)=0$). The existence and linear stability of such a ground state was then rigorously established in \S\ref{sec:rigorous} for $s_2\approx 1/2$.

Finally, throughout our paper we have highlighted the similarities between both the analysis and structure of localized solutions for the one-dimensional fractional Gierer-Meinhardt system when $s_2\in(0,1/2)$ and the corresponding localized solutions in the three-dimensional Gierer-Meinhardt system \cite{gomez_2021}. This connection is a result of the leading order algebraic singularity of the Green's function which in particular fixes the far-field behaviour of solutions to the core problem \eqref{eq:core_problem} and also plays a key role in the asymptotic matching. In Appendix \ref{app:greens-function} we provide an expression for the Green's function which makes explicit its singular behaviour, showing in particular that the singular behaviour consists of multiple algebraic singularities when $s_2\in(0,1/2)\setminus\{\tfrac{1}{2r}\,|\,r\in\mathbb{Z},r\geq 1\}$ (see Proposition \ref{prop:greens}) as well as logarithmic singularities for $s_2=\tfrac{1}{2r}$ for $r\in\mathbb{Z}$ with $r\geq 1$ (see Proposition \ref{prop:greens-log}) We believe that these expressions for the Green's function will be particularly useful for future studies of localized solutions in one-dimensional fractional reaction-diffusion systems.

We conclude by highlighting some outstanding problems and suggestions for future research. One of the first outstanding problems is to derive a higher-order asymptotic theory in the case when $s_2=\tfrac{1}{2r}$ for $r=1,2,...$. The key hurdle in this direction is the emergence of both logarithmic and algebraic singularities in the Green's functions and we believe that a resolution of this would spark some interesting mathematics. Additionally, it would be interesting to provide a rigorous justification for the existence and linear stability results we have formally derived for general values of $0<s_2<1/2$. Extensions of the current model to incorporate non-periodic boundary conditions as well as different reaction-kinetics would also be an interesting direction for future research. Moreover the consideration of such fractional problems in two- and three-dimensional domains will also lead to interesting mathematical questions.

\section*{Acknowledgement}

D.\@ Gomez is supported by NSERC and the Simons Foundation, M.\@ Medeiros is partially supported by NSERC, J.\@ Wei is partially supported by NSERC, and W.\@ Yang is partially supported by NSFC No.11801550 and 1187147.

\section*{Conflicts of Interest}

The authors don't have any financial or non-financial conflicts of interest to disclose in relation to the contents of this paper.

\section*{Data Availability}

The data generated during and/or analysed during the current study is available from the corresponding author upon a reasonable request.

\addcontentsline{toc}{section}{References}
\bibliographystyle{abbrv}
\bibliography{biblio}

\appendix

\section{A Rapidly Converging Series for the Green's Function}\label{app:greens-function}

In this appendix we derive a rapidly converging series expansion of the periodic fractional Green's function satisfying \eqref{eq:greens-equation}. We begin by formally computing the Fourier series
\begin{equation}\label{eq:greens-series}
	G_D(x) = \frac{1}{2}D + \sum_{n=1}^\infty\biggl(1+ \frac{1}{D(\pi n)^{2s_2}}\biggr)^{-1} \frac{\cos \pi n x}{(\pi n)^{2s_2}},
\end{equation}
In contrast to the classical Green's function, the fractional Green's function contains multiple singular terms. By identifying and removing these singular terms from the series expansion \eqref{eq:greens-series} we can therefore obtain a rapidly converging series expansion.

We will first restrict $s_2$ to be such that $s_2\neq \tfrac{1}{2r}$ for any integers $r\geq 1$, returning to the remaining cases at the end of this appendix. We first note that for any $l>1$ and sufficiently large values of $n>0$ such that $D(\pi n)^{2s_2}>1$ we have
\begin{equation*}
	\bigl(1 + \tfrac{1}{D(\pi n)^{2s_2}}\bigr)^{-1} = \bigl(1 + \tfrac{1}{D(\pi n)^{2s_2}}\bigr)^{-1}\bigl(\tfrac{-1}{D(\pi n)^{2s_2}}\bigr)^l - D(\pi n)^{2s_2} \sum_{k=1}^{l}\bigl(\tfrac{-1}{D(\pi n)^{2s_2}}\bigr)^k.
\end{equation*}
Substituting into \eqref{eq:greens-series} then gives
\begin{equation}\label{eq:greens-series-temp-0}
	G_D(x) = \tfrac{D}{2} + \tfrac{(-1)^{k_{\max}}}{D^{k_{\max}}}\sum_{n=1}^\infty\bigl(1+ \tfrac{1}{D(\pi n)^{2s_2}}\bigr)^{-1}\tfrac{\cos \pi n x}{(\pi n)^{2(1+k_{\max})s_2}} - D\sum_{k=1}^{k_{\max}}\tfrac{(-1)^k}{D^k}\sum_{n=1}^\infty \tfrac{\cos\pi n x}{(\pi n)^{2ks_2}},
\end{equation}
where we choose $k_{\max}$ to be the smallest positive integer such that $2(1+k_{\max})s_2>1$, i.e.\@  $k_{\max}=\lceil\tfrac{1}{2s_2}-1\rceil$. This choice of $k_{\max}$ guarantees the second term in \eqref{eq:greens-series-temp-0} converges. We remark that other choices are also possible. For example, if second order derivatives at $x=0$ are needed then it will be more convenient to choose $k_{\max}$ to be the smallest integer such that $k_{\max}>\frac{3}{2s_2}-1$.

To determine the singular terms from the remaining sum in \eqref{eq:greens-series-temp-0} we first let $\beta>0$ and consider the series expansion
\begin{equation}
	|x|^{\beta - 1} = \tfrac{1}{\beta} + 2\sum_{n=1}^{\infty}c_{\beta,n}\tfrac{\cos \pi n x}{(\pi n)^{\beta}},\qquad c_{\beta,n} =\int_0^{\pi n}x^{\beta-1}\cos x dx \label{eq:power-x-series}
\end{equation}
Repeated integration by parts yields the identities
\begin{equation*}
	\begin{cases}
		c_{\beta,n} = (-1)^n(\beta-1)(n\pi)^{\beta-2} - (\beta-1)(\beta-2)c_{\beta-2,n}, & \beta>2, \\
		c_{\beta,n} = -(\beta-1)\int_0^{n\pi}x^{\beta-2}\sin x dx, & \beta > 0.
	\end{cases}
\end{equation*}
Letting $q\geq 1$ be the largest positive integer such that $\beta - 2(q-1) > 0$ the above identities give
\begin{equation}\label{eq:c-temp-01}
	c_{\beta,n} = (-1)^{n+1}\sum_{r=1}^{q-1}(-1)^{r}(\pi n)^{\beta-2r}\prod_{l=1}^{2r-1}(\beta-l) + (-1)^{q}\prod_{l=1}^{2q-1}(\beta-l)\int_0^{n\pi}x^{\beta-2q}\sin x dx.
\end{equation}
Note that if $\beta>0$ is an integer then \eqref{eq:c-temp-01} yields an explicit expression for $c_{\beta,n}$ Indeed if $\beta$ is an odd integer then $\beta-2(q-1) = 1$ so that the product in the second term of \eqref{eq:c-temp-01} vanishes whereas if $\beta$ is an even integer then $\beta - 2(q-1) = 2$ and the integral in the second term of \eqref{eq:c-temp-01} evaluates to $1-(-1)^n$. In particular when $\beta=3$ we obtain the useful series
\begin{equation}\label{eq:power-x-series-2}
	|x|^2 = \frac{1}{3} + 4\sum_{n=1}^\infty (-1)^n\frac{\cos \pi n x}{(\pi n)^2}.
\end{equation}
If instead $\beta>0$ is not an integer then we first write
\begin{equation*}
	\int_0^\infty x^{\beta - 2q}\sin x dx = \int_0^\infty x^{\beta - 2q}\sin x dx - \int_{n\pi}^\infty x^{\beta - 2q}\sin x dx.
\end{equation*}
Since in this case $-1 < \beta - 2q + 1 < 1$ we can use standard properties of the Gamma function (see equations (5.9.7) and (5.5.3) in \cite{NIST:DLMF}) to write
\begin{equation*}
	\int_0^\infty x^{\beta - 2q}\sin x dx = \frac{\pi}{2\Gamma(2q-\beta)\cos(\pi(\beta-2q+1)/2)} = \frac{(-1)^q}{2\prod_{l=1}^{2q-1}(\beta-l)}\mathfrak{a}_{\beta/2}^{-1},
\end{equation*}
where $\mathfrak{a}_{\beta/2} = -\pi^{-1}\beta\Gamma(-\beta)\sin(\pi\beta/2)$. Moreover since $\beta - 2q < 0$ we can integrate by parts to get
\begin{equation*}
	\int_{n\pi}^\infty x^{\beta-2q}\sin x dx = (-1)^n(n\pi)^{\beta-2q} + (\beta-2q)\int_{n\pi}^{\infty}x^{\beta-2q-1}\cos xdx,
\end{equation*}
where we remark that the last term is $O(n^{\beta-2q-2})$. In summary, for non-integer values of $\beta>0$ we have the expression
\begin{subequations}
	\begin{equation}\label{eq:c-temp-0}
		c_{\beta,n} = \frac{1}{2\mathfrak{a}_{\beta/2}} + (-1)^n(\beta-1)(\pi n)^{\beta-2} + a_{\beta,n},
	\end{equation}
	where
	\begin{equation}\label{eq:a-temp-0}
		a_{\beta,n} \equiv (-1)^{n+1}\sum_{r=2}^{q}(-1)^r(\pi n)^{\beta-2r}\prod_{l=1}^{2r-1}(\beta-l) + (-1)^{q+1}\prod_{l=1}^{2q}(\beta-l)\int_{n\pi}^\infty x^{\beta-2q-1}\cos x dx.
	\end{equation}
\end{subequations}

The constant term appearing in \eqref{eq:c-temp-0} allows us to relate $|x|^{\beta-1}$ with the series \eqref{eq:greens-series-temp-0} whereas the decay $a_{\beta,n}=O(n^{\beta-2q-2})$ from \eqref{eq:a-temp-0} yields a quickly converging series. Specifically, since our choice of $k_{\max}=\lceil\tfrac{1}{2s_2}-1\rceil$ implies that $\beta \leq 1-2s_2$ we deduce that $\beta = 2ks_2\leq 1-2s_2$ is not an integer for all $1\leq k\leq k_{\max}$. Therefore for any $0<s_2<1/2$ we can rewrite the summands in the rightmost term of \eqref{eq:greens-series-temp-0} as
\begin{equation*}
	\sum_{n=1}^\infty\frac{\cos \pi n x}{(\pi n)^{2ks_2}} = \mathfrak{a}_{ks_2}\biggl( |x|^{2ks_2-1}-\frac{1}{2ks_2} - \frac{2ks_2-1}{2}\biggl(x^2 - \frac{1}{3} \biggr) - 2\sum_{n=1}^\infty a_{2ks_2,n}\frac{\cos\pi n x}{(\pi n)^{2ks_2}} \biggr).
\end{equation*}
Substituting this back into \eqref{eq:greens-series-temp-0} we reach our final result.

\begin{proposition}\label{prop:greens}
	Let $s_2\in(0,\tfrac{1}{2})\setminus\{ \tfrac{1}{2r} \,|\, r\in\mathbb{Z},\,r\geq 1\}$. Then the Green's function satisfying \eqref{eq:greens-equation} is given by
	\begin{subequations}\label{eq:greens-series-rapid-full}
		\begin{equation}
			G_D(x) = \sum_{k=1}^{k_{\max}}\frac{(-1)^{k-1}\mathfrak{a}_{ks_2}}{D^{k-1}}|x|^{2ks_2-1} + R_D(x)
		\end{equation}
		where  $k_{\max} = \lceil \tfrac{1}{2s_2}-1 \rceil$ and the \textit{regular part} $R_D(x)$ is given by
		\begin{equation}\label{eq:greens-series-rapid-R_D}
			\begin{split}
				R_D(x) = & \frac{1}{2}D + \frac{(-1)^{k_{\max}}}{D^{k_{\max}}}\sum_{n=1}^\infty \biggl(1+ \frac{1}{D(\pi n)^{2s_2}}\biggr)^{-1} \frac{\cos\pi n x}{(\pi n)^{2(1+{k_{\max}})s_2}} +  \frac{1}{2s_2}\sum_{k=1}^{k_{\max}}\frac{(-1)^{k}\mathfrak{a}_{ks_2}}{kD^{k-1}} \\
				&   + \sum_{k=1}^{k_{\max}}\frac{(-1)^k (2ks_2-1)\mathfrak{a}_{ks_2}}{2 D^{k-1}}\biggl(x^2-\frac{1}{3}\biggr) + 2\sum_{n=1}^\infty \biggl(\sum_{k=1}^{k_{\max}}\frac{(-1)^k\mathfrak{a}_{ks_2} a_{2ks_2,n}}{D^{k-1}(\pi n)^{2ks_2}}\biggr)\cos \pi n x,
			\end{split}
		\end{equation}
		where $a_{2ks_2,n}$ is given by \eqref{eq:a-temp-0} with $\beta=2ks_2$ and $q=\lceil ks_2\rceil$, while $\mathfrak{a}_{ks_2}$ is given by
		\begin{equation}
			\mathfrak{a}_{ks_2} = -\frac{2ks_2}{\pi}\Gamma(-2ks_2)\sin(\pi ks_2).
		\end{equation}
	\end{subequations}
\end{proposition}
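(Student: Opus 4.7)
The plan is to work directly with the Fourier series representation of $G_D$. Since $(-\Delta)^{s_2}$ is diagonalized by the basis $\{e^{i\pi n x}\}_{n\in\mathbb{Z}}$ on the torus $(-1,1)$, acting as multiplication by $(\pi n)^{2s_2}$, the defining equation \eqref{eq:greens-equation} is solved termwise and yields a series of the form $G_D(x)=\tfrac{1}{2}D+\sum_{n\geq 1}(1+(D(\pi n)^{2s_2})^{-1})^{-1}\cos(\pi n x)/(\pi n)^{2s_2}$. This series converges only conditionally when $s_2<1/2$; the goal is to extract in closed form the pieces carrying the local singular behaviour near $x=0$, leaving an absolutely and rapidly convergent trigonometric remainder to be identified with $R_D$.

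The first technical step is an algebraic manipulation of the Fourier multiplier. Setting $z_n=(D(\pi n)^{2s_2})^{-1}$, I would expand $(1+z_n)^{-1}$ as a finite geometric sum of length $k_{\max}=\lceil\tfrac{1}{2s_2}-1\rceil$ plus a remainder of order $z_n^{k_{\max}+1}$. The choice of $k_{\max}$ is dictated precisely by the requirement that this remainder, once multiplied by $(\pi n)^{-2s_2}\cos(\pi n x)$, produces an absolutely convergent Fourier series (this gives the first non-trivial term of $R_D$ in \eqref{eq:greens-series-rapid-R_D}). What remains are $k_{\max}$ pure-power trigonometric series of the form $\sum_{n\geq 1}\cos(\pi n x)/(\pi n)^{2ks_2}$ for $1\leq k\leq k_{\max}$, each of which still diverges as $x\to 0$ and must be matched to a singular function $|x|^{2ks_2-1}$ in closed form.

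The second technical step is to obtain an exact Fourier expansion of $|x|^{\beta-1}$ for non-integer $\beta\in(0,1)$. Writing $|x|^{\beta-1}=\tfrac{1}{\beta}+2\sum_{n\geq 1}c_{\beta,n}(\pi n)^{-\beta}\cos(\pi n x)$ with $c_{\beta,n}=\int_0^{\pi n}x^{\beta-1}\cos x\,dx$, I would apply repeated integration by parts to reduce $c_{\beta,n}$ to an improper sine integral (the number of iterations is controlled by $q=\lceil\beta/2\rceil$), then split the integral as $\int_0^\infty-\int_{n\pi}^\infty$. The improper integral $\int_0^\infty x^{\beta-2q}\sin x\,dx$ is evaluated by the Mellin/reflection identity for $\Gamma$, producing the constant $(2\mathfrak{a}_{\beta/2})^{-1}$, while the tail contributes an oscillating $(-1)^n(\pi n)^{\beta-2}$ piece plus an $O(n^{\beta-2q-2})$ remainder. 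The excluded values $s_2\in\{1/(2r)\}$ are precisely those where some $\beta=2ks_2$ becomes integer and the relevant Gamma-function factors develop poles, so the hypothesis on $s_2$ is exactly what is needed for this step to be valid uniformly for all $1\leq k\leq k_{\max}$.

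Assembly is the final step. Substituting the $c_{\beta,n}$ decomposition back into the $k_{\max}$ slow Fourier series, the $(2\mathfrak{a}_{\beta/2})^{-1}$ constants produce exactly the singular terms $\sum_{k=1}^{k_{\max}}(-1)^{k-1}\mathfrak{a}_{ks_2}|x|^{2ks_2-1}/D^{k-1}$; the oscillating $(-1)^n(\pi n)^{-2}$ pieces can be resummed in closed form against the known series \eqref{eq:power-x-series-2} for $|x|^2$, producing the $(x^2-\tfrac{1}{3})$ term in \eqref{eq:greens-series-rapid-R_D}; and everything else collapses into rapidly convergent trigonometric sums. The main obstacle will be the bookkeeping: tracking signs, powers of $D$, and the indices $k,q,n$ carefully enough to recognize the collected constants as $\mathfrak{a}_{ks_2}$. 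The only genuinely analytic subtlety is the Mellin evaluation of $\int_0^\infty x^{\alpha-1}\sin x\,dx$ and justifying the termwise rearrangement of the resulting double series in $k$ and $n$, for which the sharp choice of $k_{\max}$ and the non-integrality of each $2ks_2$ are essential.
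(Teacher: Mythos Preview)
Your proposal is correct and follows essentially the same approach as the paper: start from the Fourier series for $G_D$, peel off a length-$k_{\max}$ geometric expansion of the multiplier $(1+z_n)^{-1}$ to isolate an absolutely convergent remainder, then handle each residual series $\sum_n\cos(\pi n x)/(\pi n)^{2ks_2}$ by computing the Fourier coefficients $c_{\beta,n}$ of $|x|^{\beta-1}$ via integration by parts, the split $\int_0^{n\pi}=\int_0^\infty-\int_{n\pi}^\infty$, and the Gamma-function evaluation of the improper sine integral, finally resumming the $(-1)^n(\pi n)^{-2}$ pieces against the known series for $x^2-\tfrac{1}{3}$. The paper carries out exactly this sequence of steps; your identification of $k_{\max}$, of $q=\lceil ks_2\rceil$, and of the role of the hypothesis $s_2\notin\{1/(2r)\}$ all match.
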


We conclude this appendix by revisiting our restriction that $s_2\neq \tfrac{1}{2r}$ for all integers $r\geq 1$. If instead $s_2=\tfrac{1}{2r}$ for some integer $r\geq 1$, then $k_{\max} = r$ and we see that $\mathfrak{a}_{k_{\max}s_2} = -2k_{\max}s_2\pi\Gamma(-1)\sin(\pi/2)$ is undefined. This extends beyond a technical difficulty and is closely tied to the singular behaviour of the Green's function \eqref{eq:greens-series-rapid}. Indeed when $s_2>1/2$ the Green's function has no singular terms, but if $1/4<s_2<1/2$ then it has one singular term, when $1/6<s_2<1/2$ it has two, and so on. At the transition points, i.e. $s_2=2^{-1},4^{-1},6^{-1}$, etc.\@,  the Green's function has an additional logarithmic singularity. One way to see this formally is to observe that if $s_2\rightarrow \tfrac{1}{2r}$ then $|x|^{2rs_2-1}\sim (2rs_2-1)\log|x|$ whereas $\mathfrak{a}_{rs_2}\sim\pi^{-1}(1-2rs_2)^{-1}$ so that the $k=k_{\max}=r$ term in \eqref{eq:greens-series-rapid-full} behaves like $(-1)^{r}D^{1-r}\pi^{-1}\log|x|$. To make this more precise we can calculate the Fourier series of $\log|x|$ and integrate by parts to get
\begin{equation}\label{eq:log-series}
	\sum_{n=1}^\infty \frac{\cos n\pi x}{n\pi} = -\frac{1}{\pi}\log|x| -\frac{1}{\pi} - \frac{2}{\pi}\sum_{n=1}^{\infty}\biggl(\text{Si}(n\pi) - \frac{\pi}{2}\biggr)\frac{\cos n\pi x}{n\pi}
\end{equation}
where $\text{Si}(z)\equiv \int_0^z\tfrac{\sin t}{t}dt$ is the Sine integral. Note that the rightmost term converges for all $-1<x<1$ since $\text{Si}(z)\sim \pi/2 + O(z^{-1})$ as $z\rightarrow+\infty$. Comparing \eqref{eq:log-series} with the rightmost term in \eqref{eq:greens-series-temp-0} we readily deduce our next result.
\begin{proposition}\label{prop:greens-log}
	Let $s_2=\tfrac{1}{2r}$ for some integer $r\geq 1$. Then the Green's function satisfying \eqref{eq:greens-equation} is given by
	\begin{subequations}\label{eq:greens-series-rapid-full-log}
		\begin{equation}
			G_D(x) = \sum_{k=1}^{r-1}\frac{(-1)^{k-1}\mathfrak{a}_{ks_2}}{D^{k-1}}|x|^{2ks_2-1} + \frac{(-1)^r}{\pi D^{r-1}}\log|x| + R_D(x)
		\end{equation}
		where $R_D(x)$ is given by
		\begin{equation}\label{eq:greens-series-rapid-R_D-log}
			\begin{split}
				R_D(x) = & \frac{1}{2}D  + \frac{(-1)^{r}}{D^{r}}\sum_{n=1}^\infty \biggl(1+ \frac{1}{D(\pi n)^{1/r}}\biggr)^{-1} \frac{\cos\pi n x}{(\pi n)^{1+1/r}} +  r\sum_{k=1}^{r-1}\frac{(-1)^{k}\mathfrak{a}_{k/(2r)}}{kD^{k-1}} \\
				&   + \sum_{k=1}^{r-1}\frac{(-1)^k (k/r-1)\mathfrak{a}_{k/(2r)}}{2 D^{k-1}}\biggl(x^2-\frac{1}{3}\biggr) + 2\sum_{n=1}^\infty \biggl(\sum_{k=1}^{r-1}\frac{(-1)^k\mathfrak{a}_{k/(2r)} a_{k/r,n}}{D^{k-1}(\pi n)^{k/r}}\biggr)\cos \pi n x \\
				& + \frac{(-1)^r}{\pi D^{r-1}}\biggl( 1 + 2\sum_{n=1}^{\infty}\bigl(\text{Si}(n\pi)-\tfrac{\pi}{2}\bigr)\frac{\cos n\pi x}{n\pi}\biggr),
			\end{split}
		\end{equation}
	\end{subequations}
	where  $\text{Si}(z)$ is the sine integral and where $a_{k/r,n}$ and $\mathfrak{a}_{k/(2r)}$ are defined as in \eqref{eq:greens-series-rapid-full}.
\end{proposition}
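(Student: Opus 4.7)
The plan is to mimic the proof of Proposition \ref{prop:greens} and handle the single index $k = r$ separately, where an algebraic singularity degenerates into a logarithmic one. I would start from the Fourier series \eqref{eq:greens-series} and apply the same geometric-series identity that produced \eqref{eq:greens-series-temp-0} with the minimal truncation $k_{\max} = r$, which at $s_2 = \tfrac{1}{2r}$ is the smallest integer satisfying $2(1 + k_{\max}) s_2 > 1$. This isolates an absolutely convergent residual sum (which will be absorbed into $R_D$) and a finite sum $-D\sum_{k=1}^{r} (-D)^{-k}\sum_n \cos(\pi n x)/(\pi n)^{k/r}$ in which all the singular behaviour of $G_D$ is concentrated.

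For each index $1 \leq k \leq r - 1$ the exponent $2 k s_2 = k/r$ lies strictly in $(0,1)$ and in particular is not a positive integer, so the coefficient expansion \eqref{eq:c-temp-0}--\eqref{eq:a-temp-0} applies verbatim with $\beta = k/r$ and $q = 1$. Substituting it into the series $\sum_n \cos(\pi n x)/(\pi n)^{k/r}$ and invoking the auxiliary Fourier identities \eqref{eq:power-x-series} and \eqref{eq:power-x-series-2} extracts precisely the algebraic singularity $(-1)^{k-1}\mathfrak{a}_{k/(2r)} D^{1-k}|x|^{k/r-1}$, along with explicit constant, quadratic, and bounded-Fourier-tail contributions of exactly the same form as those appearing in the regular part of Proposition \ref{prop:greens}.

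The genuinely new step is the critical index $k = r$, where $k/r = 1$ and the coefficient $\mathfrak{a}_{r s_2}$ develops a pole. Here the algebraic-to-power Fourier identity must be replaced by the logarithmic identity \eqref{eq:log-series}, which I would establish by writing $\sum_{n \geq 1} e^{i n\pi x}/n = -\log(1 - e^{i\pi x})$, taking real parts to convert the singularity of $\log|1 - e^{i\pi x}|$ into a $-\pi^{-1}\log|x|$ piece modulo a bounded remainder, and then performing one integration by parts to package the remainder into a sine-integral tail whose summands $(\mathrm{Si}(n\pi) - \tfrac{\pi}{2})\cos(n\pi x)/(n\pi)$ decay like $O(n^{-2})$ and therefore converge uniformly on $[-1,1]$. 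Multiplying \eqref{eq:log-series} by the coefficient $(-1)^{r+1} D^{1-r}$ arising from the $k = r$ term of the finite sum above then contributes both the stated singular term $(-1)^r \pi^{-1} D^{1-r} \log|x|$ and the last line of \eqref{eq:greens-series-rapid-R_D-log}.

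Collecting the singular contributions across all $k$ reproduces \eqref{eq:greens-series-rapid-full-log} exactly, while the regular pieces from each of the three mechanisms above reassemble to give $R_D(x)$ as displayed in \eqref{eq:greens-series-rapid-R_D-log}. The main obstacle is the one-time derivation of \eqref{eq:log-series} together with the verification that its sine-integral remainder is uniformly bounded on compact subsets of $(-1,1)$; once that is established the remainder of the proof is disciplined bookkeeping that mirrors Proposition \ref{prop:greens} term by term, with the $k = r$ summand being the only structural novelty.
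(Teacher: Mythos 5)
Your proposal is correct and follows essentially the same route as the paper: split the Fourier series with $k_{\max}=r$, treat the indices $k=1,\dots,r-1$ with the non-integer-$\beta$ coefficient expansion exactly as in Proposition \ref{prop:greens}, and replace the degenerate $k=r$ term by the logarithmic identity \eqref{eq:log-series}, whose sine-integral tail decays like $O(n^{-2})$. Your derivation of \eqref{eq:log-series} via $-\log(1-e^{i\pi x})$ is only a cosmetic variant of the paper's direct computation of the Fourier coefficients of $\log|x|$ by one integration by parts, so the two arguments coincide in substance.
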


\section{Numerical Implementation}\label{app:numerical-implementation}

In this appendix we outline the numerical methods used for the numerous computations in this paper. Specifically we first describe the numerical implementation of the fractional Laplacian in both $\mathbb{R} $ with a prescribed far-field behaviour as well as in the interval $-1<x<1$ with periodic boundary conditions. We then outline the key steps in the numerical continuation used to construct solutions to the core problem \eqref{eq:core_problem}. This is followed by a discussion of the numerical computation of the spectrum of the nonlocal operator $\mathscr{M}$ defined in \eqref{eq:M-def} as well as its adjoint $\mathscr{M}^\star$ from \S \ref{sec:slow-dynamics}. We then outline  the implementation of IMEX methods for numerically simulating \eqref{eq:frac-gm-full-system}.

\subsection{Numerical Computation of The Fractional Laplacian}\label{subapp:fractional-laplacian}

To numerically solve the core problem \eqref{eq:core_problem} as well as to determine the spectrum of the nonlocal operator $\mathscr{M}$ defined in \eqref{eq:M-def} and its adjoint $\mathscr{M}^*$ considered in \S \ref{sec:slow-dynamics} we must first  calculate an appropriate discretization of the fractional Laplacian in $\mathbb{R} $. On the other hand to numerically simulate the full system \eqref{eq:frac-gm-full-system} we must discretize the fractional Laplacian on $-1<x<1$ with periodic boundary conditions.

\subsubsection{The Fractional Laplacian in $\mathbb{R} $}\label{subsubapp:fractional-laplacian-R1}

We use the finite-difference method of Huang and Oberman \cite{huang_2013} to calculate the discretized fractional Laplacian. For completeness we include here the most important details in the implementation. To simplify our presentation we consider the problem of discretizing $(-\Delta)^s \varphi$ for $0<s<1/2$ when $\varphi$ satisfies the far-field behaviour $\varphi\sim a_{\pm} + b_{\pm}|y|^{-\beta}$ for constants $a_{\pm}$, $b_{\pm}$, and $\beta>0$.  We first introduce the truncated domain $-L<y<L$ and consider the approximate boundary conditions $\varphi(y) \approx a_\pm + b_\pm |y|^{-\beta}$ for $\pm y\geq L$.  Next we introduce the discretization $y_i = ih$ for $i=-2N,...,2N$ where $h=L/N$. This leads to the computational domain $-2L<y<2L$ which has been expanded from the original truncated domain to account for nonlocal contributions. For each $i>N$ we impose a fixed value for $\varphi_{\pm i}\equiv \varphi(y_{\pm i})$ that depends on its value at $i=\pm N$ in one of two ways
\begin{equation}\label{eq:app-far-field-cases}
	\begin{cases}
		\text{if $a_\pm=0$ then $b_\pm = \varphi_{\pm N}L^{\beta}$ and  $\varphi_{\pm i} = \varphi_{\pm N} |y_{\pm i}/L|^{-\beta}$} ,& \text{(Case 1)},\\
		\text{if $b_\pm$ is given then $a_\pm = \varphi_{\pm N} - b_{\pm}L^{-\beta}$ and $\varphi_{\pm i} = \varphi_{\pm N} + b_{\pm} (|y_{\pm i}|^{-\beta} - L^{-\beta})$},& \text{(Case 2)}.
	\end{cases}
\end{equation}
These two cases account, respectively, for the activator and inhibitor in the core problem \eqref{eq:core_problem}. We next let $\nu_s(y)\equiv C_s |y|^{-(1+2s)}$ and for each $|i|\leq N$ we decompose \eqref{eq:fractional-laplacian} as
{\small \begin{equation}\label{eq:app-discrete-temp-0}
		(-\Delta)^s \varphi(y_i) =
		\underbrace{\int_{-2L}^{2L}\bigl(\varphi(y_i)-\varphi(y_i-y)\bigr)\nu_s(y)dy}_{\text{I}_i} + \underbrace{\varphi_i\int_{|y|>2L}\nu_s(y)dy}_\text{II} - \underbrace{\int_{|y|>2L}\varphi(y_i-y)\nu_s(y)dy}_{\text{III}_{i}}.
\end{equation}}

The first integral in \eqref{eq:app-discrete-temp-0} is approximated by performing a piecewise quadratic interpolation of $\varphi$ which gives us
\begin{equation}\label{eq:app-R1-I}
	\text{I}_i \approx \sum_{j=-2N}^{2N}(\varphi_{i} - \varphi_{i-j})w_j^Q,
\end{equation}
where $w_j^Q$ ($j=-2N,...,2N$) are the quadratic interpolation weights calculated using Definition 3.2 of \cite{huang_2013} with $\alpha=2s$ and explicitly given by
\begin{subequations}\label{eq:app-weight-frac-lap}
	\begin{equation}
		w_j^Q = \frac{C_s}{h^{2s}}\begin{cases}    \tfrac{1}{2-2s} - G''(1) - \tfrac{1}{2}(G'(3) + 3G'(1))  +G(3) - G(1), & j=\pm 1, \\ 2(G'(j+1) + G'(j-1) - G(j+1) + G(j-1)), & j=\pm 2, \pm 4, ..., \\ -\tfrac{1}{2}(G'(j+2) + 6G'(j) + G'(j-2)) + G(j+2)-G(j-2), & j=\pm 3, \pm 5,..., \end{cases}
	\end{equation}
	where
	\begin{equation}\label{eq:weight-G}
		G(t) \equiv \begin{cases} \frac{1}{2s(2-2s)(2s-1)}|t|^{2-2s}, & s\neq 1/2, \\ t-t\log|t|, & s = 1/2.\end{cases}
	\end{equation}
\end{subequations}
Note that some of the entries in \eqref{eq:app-R1-I} will have $|i-j|>N$ and for these we use \eqref{eq:app-far-field-cases}.

Next, by using the definition of $\nu_s(y)$ we readily calculate
\begin{equation}\label{eq:app-R1-II}
	\text{II} = \frac{C_s}{s(2L)^{2s}}.
\end{equation}

To calculate III$_i$ we first consider the portion of the integral for which $y>2L$ and hence $y_i-y<-L$ so that by using \eqref{eq:app-far-field-cases} we calculate
\begin{equation*}
	\int_{2L}^\infty \varphi(y_i-y)\nu_s(y)dy = \begin{cases} \varphi_{-N} L^{\beta}\int_{2L}^\infty |y_i-y|^{-\beta}\nu_s(y)dy, & \text{(Case 1)}, \\
		\tfrac{1}{2}\bigl(\varphi_{-N} - b_-L^{-\beta}\bigr)\text{II} + b_-\int_{2L}^\infty |y_i-y|^{-\beta}\nu_s(y)dy, & \text{(Case 2)}.
	\end{cases}
\end{equation*}
The rightmost integral in both cases can be written in terms of the Gauss Hypergeometric function $_2F_1(a,b,c,z)$ which we recall has the integral representation (see (9.111) in \cite{gradshteyn_2015})
\begin{equation*}
	_2F_1(a,b,c,z) = \frac{\Gamma(c)}{\Gamma(b)\Gamma(c-b)}\int_0^1\frac{t^{b-1}(1-t)^{c-b-1}}{(1-z t)^a} dt,\quad (b,c>0).
\end{equation*}
A simple change of variables then immediately yields that for any $\beta$ and $\alpha$ with $\beta+\alpha>0$
\begin{equation}\label{eq:app-tail-integral-general}
	\int_{2L}^\infty |y_i-y|^{-\beta}|y|^{-1-\alpha}dy = \frac{1}{(2L)^{\alpha+\beta}(\alpha+\beta)}{ }_2F_1(\beta,\alpha+\beta,\alpha+\beta+1,(2L)^{-1}y_i),
\end{equation}
with an analogous result for the integration over $-\infty<y<-2L$. Thus for Case 1 and Case 2 we respectively obtain
\begin{subequations}\label{eq:app-R1-III}
	{\small \begin{equation}\label{eq:app-R1-III-1}
			\text{III}_{i} \approx C_s\frac{\varphi_{-N}{}_2F_1(\beta,\beta+2s,\beta+2s+1,(2L)^{-1}y_i) + \varphi_{N}{}_2F_1(\beta,\beta+2s,\beta+2s+1,-(2L)^{-1}y_i)}{2^{\beta+2s}L^{2s}(\beta+2s)}
	\end{equation}}
	and
	{\small \begin{equation}\label{eq:app-R1-III-2}
			\begin{split}
				\text{III}_{i} \approx & \tfrac{1}{2}\bigl(\varphi_{-N}+\varphi_{N}-(b_-+b_+)L^{-\beta}\bigr)\text{II} \\
				& + C_s\frac{b_-{}_2F_1(\beta,\beta+2s,\beta+2s+1,(2L)^{-1}y_i) + b_+{}_2F_1(\beta,\beta+2s,\beta+2s+1,-(2L)^{-1}y_i)}{(2L)^{\beta+2s}(\beta+2s)}.
			\end{split}
	\end{equation}}
\end{subequations}

Substituting equations \eqref{eq:app-R1-I}, \eqref{eq:app-R1-II}, and \eqref{eq:app-R1-III} into \eqref{eq:app-discrete-temp-0} thus yields a discretization of the fractional Laplacian incorporating the relevant far-field behaviour for the core problem and spectrum calculations. We conclude by remarking that in Case 1 this discretization leads to multiplication by a dense matrix whereas for Case 2 the discretization involves both multiplication by a dense matrix as well as the addition of an inhomogeneous term that captures the far-field behaviour.

\subsubsection{The Periodic Fractional Laplacian in $-1<x<1$}\label{subsubapp:fractional-laplacian-periodic}

We now consider the discretization of $(-\Delta)^s\varphi(x)$ in $-1<x<1$ with periodic boundary conditions. In contrast to the discretization of the fractional Laplacian in $\mathbb{R} $ considered above, periodic boundary conditions lead to a significantly simpler implementation. We discretize the domain $-1<x<1$ by letting $x_i=-1+2ih$ for $i=0,...,N-1$ where $h=1/N$. By approximating $\varphi$ with a piecewise quadratic interpolator and letting $\varphi_i\equiv\varphi(x_i)$ we then get
\begin{equation}\label{eq:discrete-laplacian-periodic}
	(-\Delta)^s\varphi(x_i) \approx \sum_{j=0}^{N-1}(\varphi_i-\varphi_{i-j})W_{i-j}^Q,\quad W_{n}^Q = w_n^Q + \sum_{k=1}^\infty (w_{k+Nn}^Q + w_{k-Nn}^Q),
\end{equation}
where $w_i^Q$ are again the quadratic weight functions found in \cite{huang_2013}. This discretization is readily implemented and leads to a dense matrix. To compute $W_{n}^Q$ we truncate the sum by taking the first $10^3$ terms.

\subsection{Solving the Core Problem}\label{subapp:core-problem}

In this section we outline the key steps for calculating $\mu(S)$ and $\nu(S)$ for $S>0$. This computation has two main steps. In the first step we numerically compute the fractional homoclinic $w_{s_1}$ satisfying \eqref{eq:fractional-homoclinic} for values of $1/4<s_1<1$. In the second step we use the $S\ll 1$ asymptotics \eqref{eq:small-S-asymptotics} to initiate a numerical continuation in $S$ to solve \eqref{eq:core_problem} from which $\mu(S)$ and $\nu(S)$ can then be computed from \eqref{eq:nu_mu_def}.

The fractional homoclinic solution to \eqref{eq:fractional-homoclinic} was previously numerically computed for values of $1/4<s_1<1$ in Appendix B of \cite{gomez_2022} and therefore we provide only an outline here and refer the reader to that paper for more details. The first step is to discretize \eqref{eq:fractional-homoclinic} using the method in Appendix \ref{subsubapp:fractional-laplacian-R1}. This yields a nonlinear system for the discretized solution which we can solve using Newton's method. Next we note that the exact solution to \eqref{eq:fractional-homoclinic} when $s_1=1/2$ is $w_{1/2}(y) = 2 / (1+y^2)$. Letting $(s_1)_0=1/2<(s_1)_1<...<(s_1)_M<1$ for a large $M>0$ we can then numerically solve the discretized nonlinear system for $s_1=(s_1)_i$ when $i=1,\cdots,M$ by using the solution for $s_1=(s_1)_{i-1}$ as an initial guess where the exact solution is used for $i=1$. A similar continuation is likewise performed for values of $1/4<s_1<1/2$.

The numerical solution of the core problem \eqref{eq:core_problem} follows a similar procedure. Numerically discretizing both the fractional Laplacians appearing in \eqref{eq:core_problem} using the method in \eqref{subsubapp:fractional-laplacian-R1} yields a nonlinear system for the discretized solutions. Note that when performing the discretization the far-field  condition for Case 1 and Case 2 (see Equation \eqref{eq:app-far-field-cases}) is used for $U_c$ and $V_c$ respectively. It is then straightforward to perform a numerical continuation in $S>0$ by slowly incrementing it and using the small $S$ asymptotics \eqref{eq:small-S-asymptotics} as the initializing guess.

\subsection{Computing the Spectrum of $\mathscr{M}$}

In this section we consider the numerical computation of the spectrum of the nonlocal operator $\mathscr{M}$ defined in \eqref{eq:M-def}. This is done by discretizing $\mathscr{M}$ and then calculating the spectrum of the resulting matrix operator using standard eigenvalue libraries (we used the \texttt{eig} function from the SciPy \texttt{linalg} library). In this section it therefore suffices to describe the discretization of $\mathscr{M}$. The operator $\mathscr{M}$ consists of two nonlocal contributions: the fractional Laplacian $(-\Delta)^{s_1}$ and the convolution with the fractional Green's function. The numerical discretization of the former was considered in detail in Appendix \ref{subsubapp:fractional-laplacian-R1} and so it remains only to discuss the discretization of the latter.

To simplify (and generalize) our presentation we henceforth focus on numerically approximating
\begin{equation}
	J[\varphi](y) \equiv \int_{-\infty}^\infty \frac{\varphi(z)}{|y-z|^{1-2s}} dz,
\end{equation}
where $0<s<1/2$ and $\varphi$ is assumed to have the far-field behaviour
\begin{equation}
	\varphi(y)\sim \varphi_{\pm\infty}|y|^{-\beta}\quad\text{as}\quad y\rightarrow\pm\infty,
\end{equation}
where $\varphi_{\pm\infty}$ are unknown. Note that we use $s=s_2$ in the discretization of both $\mathscr{M}$ and $\mathscr{M}^\star$ whereas we use $\varphi(y) = U_c(y;S)\Phi_c^\lambda(y;S)$ so that $\beta = 2+2s_1$ in the former and $\varphi(y)=(U_c(y;S)/V_c(y;S))^2P(y)$ so that $\beta = 3+3s_1$ in the latter.  We fix our notation by introducing the discretization $y_i = ih$ and letting $\Phi_i = \varphi(y_i)$ for $i=-2N,...,2N$ where $h=L/N$. Then we decompose
\begin{equation}
	J[\varphi](y_i) = \underbrace{\int_{|z|<h}\frac{\varphi(y_i-z)}{|z|^{1-2s}}dz}_{J_1[\varphi](y_i)} +  \underbrace{\int_{h<|z|<2L}\frac{\varphi(y_i-z)}{|z|^{1-2s}}dz}_{J_2[\varphi](y_i)} + \underbrace{\int_{|z|>2L}\frac{\varphi(y_i-z)}{|z|^{1-2s}}dz}_{J_3[\varphi](y_i)}.
\end{equation}
The first integral is easily approximated using a Taylor series as
\begin{equation}
	\begin{split}
		J_1[\varphi](y_i) = \int_{-h}^h\frac{\varphi(y_i-z)}{|z|^{1-2s}}dz = \int_{-h}^h \frac{\varphi(y_i) -  \varphi'(y_i)z + O(z^2)}{|z|^{1-2s}}dz = \frac{h^{2s}}{s}\varphi_i + O(h^{2+2s}).
	\end{split}
\end{equation}
The approximation of the remaining two integrals proceeds as for the fractional Laplacian above. Specifically imposing that $\varphi_j = \varphi_{\pm N} |y_j/L|^{-\beta}$ for all $\pm j > N$ we first approximate $J_2$ with the finite sum
\begin{equation}
	J_2[\varphi](y_i) \approx \sum_{1\leq |j| \leq 2N} \varphi_{i-j}\tilde{w}_j^Q,
\end{equation}
where $\tilde{w}_j^Q$ ($j=-2N,...,2N$) are the quadratic interpolation weights given by
\begin{equation}\label{eq:app-weight-green}
	\tilde{w}_j^Q = h^{2s}\begin{cases}    - G''(1) - \tfrac{1}{2}(G'(3) + 3G'(1))  +G(3) - G(1), & j=\pm 1, \\ 2(G'(j+1) + G'(j-1) - G(j+1) + G(j-1)), & j=\pm 2, \pm 4, ..., \\ -\tfrac{1}{2}(G'(j+2) + 6G'(j) + G'(j-2)) + G(j+2)-G(j-2), & j=\pm 3, \pm 5,..., \end{cases}
\end{equation}
where $G(t)$ is given by \eqref{eq:weight-G}. Observe that the multiplicative factor $C_s$ as well as the $(2-2s)^{-1}$ term for the $j=\pm 1$ case present in \eqref{eq:app-weight-frac-lap} are omitted in \eqref{eq:app-weight-green}. The former is omitted because it is specific to the definition of the fractional Laplacian while the latter is omitted because it arises from a finite difference approximation for the singular part of the fractional Laplacian which in the present case is contained in the $J_1$ contribution.

The remaining integral contributions $J_3$ are then computed using \eqref{eq:app-tail-integral-general} which yields
{\small \begin{equation}
		J_3[\varphi](y_i) \approx L^{2s}\frac{\varphi_{-N}{}_2F_1(\beta,\beta-2s,\beta-2s+1,\tfrac{y_i}{2L}) + \varphi_{N}{}_2F_1(\beta,\beta-2s,\beta-2s+1,-\tfrac{y_i}{2L})}{2^{\beta-2s}(\beta-2s)}.
\end{equation}}

\subsection{Time Stepping of the Fractional Gierer-Meinhardt System}

We performed full numerical simulations of \eqref{eq:frac-gm-full-system} by first discretizing the system using the methods in \S\ref{subsubapp:fractional-laplacian-periodic} which yields a system of $2N$ ordinary differential equations of the form
\begin{equation}\label{eq:mol-ode}
	\frac{d\bm{\Phi}}{dt}  + \mathcal{A}\bm{\Phi}  + \bm{\mathcal{N}}(\bm{\Phi}) = \bm{0}.
\end{equation}
In this expression $\bm{\Phi}(t) = (u(x_0,t),\cdots,u(x_{N-1},t),v(x_0,t),\cdots,v(x_{N-1},t))^T$ approximates the solution at the discretization points $x_i=-1+2ih$ for $i=0,...,N-1$ as in \S\ref{subsubapp:fractional-laplacian-periodic}. The $2N\times 2N$ block-diagonal matrix $\mathcal{A}$ then has entries corresponding to the discretization \eqref{eq:discrete-laplacian-periodic} while the $2N$ dimensional vector $\bm{\mathcal{N}}(\bm{\Phi})$ accounts for the nonlinearities in \eqref{eq:frac-gm-full-system}. We integrate the ODE system using a second-order semi-implicit backwards difference scheme (2-SBDF) \cite{ruuth_1995} which leads to the linear system
\begin{equation}
	(3\mathcal{I}-2\Delta t \mathcal{A})\bm{\Phi}_{n+1} = 4\bm{\Phi}_n - \bm{\Phi}_{n-1} + 4\Delta t \bm{\mathcal{N}}(\bm{\Phi}_n) -  2\Delta t \bm{\mathcal{N}}(\bm{\Phi}_{n-1}),
\end{equation}
where $\Delta t>0$ is the time-step size and $\bm{\Phi}_n = \bm{\Phi}(n\Delta t)$. The initial condition is given by $\bm{\Phi}_0$ and since 2-SBDF is second-order we also need $\bm{\Phi}_1$. We obtain this second initial value by integrating \eqref{eq:mol-ode} using the first-order semi-implicit backward difference scheme (1-SBDF) with a smaller time step $\Delta t / \tilde{N}$ for $\tilde{N}>0$ steps. Specifically, we do this by solving
\begin{equation*}
	(\mathcal{I} - \tfrac{\Delta t}{\tilde{N}} \mathcal{A})\bm{\Phi}_{(n+1)/\tilde{N}} = \bm{\Phi}_{n/\tilde{N}} + \tfrac{\Delta t}{\tilde{N}}\bm{\mathcal{N}}(\bm{\Phi}_{n/\tilde{N}}),
\end{equation*}
for $n=1,...,\tilde{N}$.  For our numerical simulations we used $N=2000$, $\tilde{N}=5$, and $\Delta t = 0.01$. Moreover in the spatial discretization of the fractional Laplacian we truncated the infinite sum appearing in \eqref{eq:discrete-laplacian-periodic} after $250$ terms.

\end{document}